\documentclass[letterpaper,11pt]{article}

\usepackage[utf8]{inputenc}
\usepackage[T1]{fontenc}    
\usepackage{hyperref}       
\usepackage{url}            
\usepackage{mathtools}
\usepackage{amsmath}
\usepackage{amssymb}
\usepackage{amsthm}
\usepackage{thmtools}
\usepackage{thm-restate}
\usepackage{color}
\usepackage{hyperref}
\usepackage{array,multirow,multicol}
\usepackage{cleveref}
\usepackage[bold,full]{complexity}
\usepackage{todonotes}
\usepackage{wrapfig}
\usepackage{paralist}
\usepackage{xspace}
\usepackage{empheq}
\usepackage[figuresleft]{rotating}
\usepackage[thinlines]{easytable}
\usepackage{tikz}
\usetikzlibrary{shapes.geometric}
\usetikzlibrary{decorations.pathmorphing,decorations.pathreplacing,decorations.shapes}
\usepackage{wrapfig}
\usepackage[letterpaper,margin=1in]{geometry}
\usepackage{forest}
\usetikzlibrary{arrows.meta, positioning}
\usetikzlibrary{positioning, shapes.geometric}
\usepackage{bm}

\theoremstyle{definition}
\newtheorem{theorem}{Theorem}

\newtheorem{corollary}{Corollary}

\newtheorem{fact}[theorem]{Fact}
\newtheorem{proposition}[theorem]{Proposition}
\newtheorem{example}{Example}

\declaretheorem[
  name=Claim,
  numberwithin=proposition,
  refname={Claim,Claims},
  Refname={Claim,Claims}
]{claim}
 


\newcommand{\N}{\mathbb{N}}

\newcommand{\Q}{\mathbb{Q}}

\DeclareMathOperator{\im}{im}

\DeclareMathOperator{\rank}{rk}

\DeclareMathOperator{\GL}{GL}

\DeclareMathOperator{\diag}{diag}

\DeclareMathOperator{\id}{Id}
\newcommand{\Zcl}[2][]{\overline{#2}^{#1}}

\newcommand{\ExtAlg}[2][]{\Lambda^{#1}{#2}}

  {\par\addvspace{.6pc plus .2pc minus .1pc}
   \noindent
   {\bf\proofname}\enspace\ignorespaces}%
  {\par\addvspace{.6pc plus .2pc minus .1pc}}
\def\proofname{Proof.}
\def\qed{\relax\ifmmode\hfill \Box\else\unskip\nobreak\hfill $\Box$\fi}

\newenvironment{proofclaim}{%
  \par\noindent{Proof of Claim.}\enspace\ignorespaces
}{%
  \hfill$\clubsuit$\par\addvspace{.6pc plus .2pc minus .1pc}
}

\allowdisplaybreaks

\title{Algebraic Closure of Matrix Sets Recognized  by 1-VASS}

\date{}

\author{
 Rida Ait El Manssour\\
   University of Oxford, UK
  \and
  Mahsa Naraghi\\
  Universit\'e Paris Cit\'e, IRIF, France
  \and
  Mahsa Shirmohammadi\\
  CNRS, IRIF, France
  \and
  James Worrell\\
  University of Oxford, UK
}

\begin{document}
\maketitle
\begin{abstract}
It is known how to compute the Zariski closure of a finitely generated monoid of matrices and, more generally, of a set of matrices specified by a regular language.
This result was recently used to give a procedure to compute all polynomial invariants of a given affine program.
Decidability of the more general problem of computing all polynomial invariants of affine programs with recursive procedure calls remains open.
Mathematically speaking, the core challenge is 
to compute the Zariski closure of a set of matrices defined by a context-free language.   In this paper, we approach the problem from two sides:
Towards decidability, we give a procedure to compute the Zariski closure of sets of matrices given by one-counter languages (that is, languages accepted by one-dimensional vector addition systems with states and zero tests), 
a proper subclass of context-free languages.
On the other side, we show that the problem becomes undecidable for indexed languages, a natural extension of context-free languages corresponding to nested pushdown automata.
One of our main technical tools is a novel adaptation of Simon's factorization forests to infinite monoids of matrices.
\end{abstract}




\section{Introduction}

\subsection{Context}
Finitely generated matrix monoids are not recursive in general, and many of the associated computational problems are undecidable.
In particular, the Mortality Problem (determine whether a given finitely generated matrix monoid contains the zero matrix) is undecidable already in dimension~3~\cite{paterson1970unsolvability}.
On the other hand, it is known how to decide certain global properties of matrix semigroups. For example, one can determine
whether a finitely generated sub-monoid of the monoid $M_d(\mathbb Q)$  of $d\times d$ rational matrices is finite~\cite{BumpusHKST20,Jacob77,MandelS77}. More generally,
there is a procedure to compute the Zariski closure of finitely generated matrix monoids~\cite{HrushovskiOPW23}.
The output  of this algorithm is a finite collection of polynomials that capture all the algebraic relationships between the different entries of each matrix in the monoid
(such as the property that the determinant of every matrix equals one).  The set of common zeros of the latter collection of polynomials is the Zariski closure of the monoid.

The algorithm for computing the Zariski closure of a matrix monoid admits a straightforward generalization that, given a regular language $L\subseteq \Sigma^*$
and a monoid morphism $\varphi:\Sigma^*\rightarrow M_d(\mathbb Q)$, allows computing the Zariski closure of the set $\varphi(L)$ in $M_d(\mathbb Q)$.  
Such a procedure was used in~\cite{HrushovskiOPW23} to resolve a problem posed by  M\"{u}ller-Olm and Seidl~\cite{Muller-OlmS04} 
on the computability of polynomial invariants in program analysis.  A special case of this result, concerning the linear Zariski closure, was recently used by Bell and Smertnig~\cite{BellS23} to resolve 
a longstanding open problem concerning ambiguity in formal languages and automata. 
The procedure of~\cite{HrushovskiOPW23} generalizes and builds on a procedure of~\cite{DerksenJK05} which solves the version of the problem for matrix groups and which was applied to study language emptiness for quantum automata.  There is a wealth of related work on the automatic synthesis of program invariants and procedure summaries; a select sample includes~\cite{CyphertK24,HumenbergerJK18,KincaidCBR18,SankaranarayananSM04,ManssourKSV25}. 

In this paper we are interested in computing the Zariski closure of sets of matrices $\varphi(L)$ for languages $L$ beyond regular.  
One of our main motivations is an application to interprocedural program analysis, where the problem of computing all polynomial invariants 
for a collection of affine programs that feature recursive procedure calls (as studied in~\cite{Chatterjee0GG20,gulwani2007computing,GodoyT09,Muller-Olm04c,Muller-OlmPS06}, among many other sources) can be
understood in terms of computing the Zariski closure of $\varphi(L)$ for $L$ a context-free language.
In the presence of recursive procedure calls, regular languages no longer suffice to model all paths through a control-flow graph that models a program.
Figure~\ref{fig:RECUR} shows two recursive affine programs, in the sense of~\cite{Muller-Olm04c}, in which the the set of execution paths is defined by a one-counter language.
Affine programs feature integer-valued variables that can be updated by linear assignments.  In the interests of decidability, the boolean
conditionals in loop guards and control-flow statements are abstracted to wildcards $(*)$  that non-deterministically evaluate to true or false.  At present it is  open whether one can compute all interprocedural polynomial invariants, not just those of an \emph{a priori} bounded degree, in the setting of affine programs (see the discussion in ~\cite[Section 8]{Muller-OlmPS06}).

\begin{figure}[h]
\centering

\begin{minipage}{0.45\textwidth}
\begin{tabbing}
\textsf{procedure} $P()$ \\
\hspace{1em}\=\textsf{begin} \\
\>\hspace{1em}\=\textsf{if} $(\ast)$ \textsf{then} \\
\>\>\hspace{1em}\=$\boldsymbol{x} := A \boldsymbol{x}$ \\
\>\>\>\textsf{call} $P()$ \\
\>\>\>$\boldsymbol{x} := B \boldsymbol{x}$ \\
\>\>\textsf{else} \\
\>\>\>\textsf{skip} \\
\>\>\textsf{endif} \\
\>\textsf{end}
\end{tabbing}
\end{minipage}
\qquad\qquad\qquad
\begin{minipage}{0.45\textwidth}
\begin{tabbing}
\textsf{procedure} $Q()$ \\
\hspace{1em}\=\textsf{begin} \\
\>\hspace{1em}\=\textsf{if} $(\ast)$ \textsf{then} \\
\>\>\hspace{1em}\=$\boldsymbol{x} := A \boldsymbol{x}$ \\
\>\>\>\textsf{call} $Q()$ \\
\>\>\>$\boldsymbol{x} := B \boldsymbol{x}$ \\
\>\>\>\textsf{call} $Q()$\\
\>\>\textsf{else} \\
\>\>\>\textsf{skip} \\
\>\>\textsf{endif} \\
\>\textsf{end}
\end{tabbing}
\end{minipage}

\caption{Two recursive procedures in which all variables are global.  The wildcard (*) evaluates non-deterministically either to true or false.  An execution of the procedure $P$ on the left can be summarized by the assignment $\boldsymbol x:= B^nA^n \boldsymbol x$ for some $n\in \mathbb N$.  An execution of the procedure $Q$ on the right results in assignment~$\boldsymbol x:=M\boldsymbol x$ where the matrix $M$ is a product of a string of $A$'s and $B$'s that is well-matched in the sense of the Dyck language.}
\label{fig:RECUR}
\end{figure}

\subsection{Main Results}
The main decidability results in the paper concern
one-counter languages, which are a special case of context-free languages.
In fact, our technical development uses the terminology and formalism of 
1-VASS (one-dimensional vector addition systems with states).
A 1-VASS is a non-deterministic finite automaton that is equipped with a single counter that takes values in the nonnegative integers and that can be incremented and decremented on transitions.  Decrement transitions cannot be taken when the counter is zero.
Acceptance is either by control state, in which case we speak of the \emph{coverability language}, or by control state and counter value, in which case we speak of the \emph{reachability language}.  Perhaps the best known example of a 1-VASS reachability language is the Dyck language of well-parenthesized words on a binary alphabet.
A one-counter automaton is a 1-VASS with additional transitions, called \emph{zero tests}, that can only be taken  when the counter is zero.  As we will see, it is straightforward to extend our decidability results to one-counter automata.  

Let $\varphi : \Sigma^*\rightarrow M_d(\mathbb Q)$ be a monoid morphism.
Given a 1-VASS coverability or reachability language~$L\subseteq \Sigma^*$,
we are interested in computing the set of all polynomials in $\mathbb Q[\{x_{ij}\}_{1\leq i,j\leq d}]$ 
that vanish on the set $\varphi(L)=\{\varphi(w) : w\in L\}$  of matrices.  
In general, this set of polynomials forms an ideal in the polynomial ring 
$\mathbb Q[\{x_{ij}\}_{1\leq i,j\leq d}]$ and the goal is to compute a finite basis of this ideal
(Such a finite basis necessarily exists by the Hilbert basis theorem).
The set of common zeros of these polynomials is the \emph{Zariski closure} of 
$\varphi(L)$.  We regard this as a subset of $M_d(\overline{\mathbb Q})$,
where $\overline{\mathbb{Q}}$ is the set of algebraic numbers (that is, complex numbers that are roots of univariate polynomials in~$\mathbb{Q}[x]$).
The Zariski closure is thus  an  over-approximation of the original set, capturing all polynomial relations satisfied by the entries of its matrices.

The first main contribution of the paper is a procedure  that takes as 
input  a 1-VASS coverability language $L$ on alphabet $\Sigma$ and a morphism
    $\varphi:\Sigma^*\rightarrow M_d(\mathbb Q)$, and outputs a finite collection of polynomials 
    that generates the vanishing ideal of $\varphi(L)$.   
We also present a more general procedure for computing the vanishing ideal of $\varphi(L)$ for~$L$ a 1-VASS reachability language.
We complement the above two results by showing  undecidability of computing the Zariski closure of $\varphi(L)$, where $L$ is an indexed language.
Recall that indexed grammars~\cite{Aho68} extend context-free grammars by allowing stack-like memory on nonterminals.  Indexed grammars correspond to second-order pushdown automata,
in which the main stack can hold secondary stacks as element.  
A classic example of an indexed language that is not context-free is $\{a^nb^nc^n : n\in \mathbb N\}$.

\subsection{Examples}
The following examples illustrate the above results.
The steps of the computations  are shown in~\Cref{app:computation}.
Recall that a 1-VASS is a non-deterministic finite automaton in which each transition is annotated by an integer weight.
The \emph{coverability language} is the set of all words arising from an accepting run of the automaton in which 
every prefix has non-negative weight.
The \emph{reachability language} is the set of all words arising from accepting runs of total weight zero in which every prefix has non-negative weight.
In the examples below, the state $s$ is initial and the state $t$ is accepting.


\begin{example}
\label{ex:anbndyck}
Consider the following 1-VASS with  the reachability language~$L_R=\{a^nb^n \mid n\in \mathbb{N}\}$, and the monoid morphism~$\varphi : \{a,b\}^*\rightarrow M_2(\mathbb Q)$ as defined below. 
\begin{center}
\begin{minipage}{0.5\textwidth}
\centering
       \begin{tikzpicture}[->,>=stealth,auto,node distance=2.5cm]
  \node (s) [circle,draw] {$s$};
  \node (t) [circle,draw,right of=s] {$t$};

  \path (s) edge[loop above] node[above] {+1,a} (s)
            edge  node[above] {-1,b} (t)
        (t) edge[loop above] node[above] {-1,b} (t);
\end{tikzpicture}
\end{minipage}
\begin{minipage}{0.4\textwidth}
  $\varphi(a)=\begin{pmatrix}
    1 & 1\\
    0 & 1
\end{pmatrix},\; 
\varphi(b)=\begin{pmatrix}
    1 & 0\\
    1 & 1
\end{pmatrix}$
\end{minipage}
\end{center}
The Zariski closure 
$\overline{\varphi(L_R)} \subseteq M_2(\overline{\mathbb Q})$ is the zero set of 
 the ideal $I_R \subseteq \mathbb Q[x_{11},x_{12},x_{21},x_{22}]$ generated by the  polynomial relations between entries of the matrices in $\varphi(L_R)$.  
As the image of the  word~$a^nb^n$ under $\varphi$ is~$\begin{pmatrix}
    n^2+1 & n\\
    n & 1
\end{pmatrix}$, by simple algebraic manipulations one obtains that
\[\langle x_{11}-x_{12}x_{21}-1, \, x_{12}-x_{21},\, x_{22}-1\rangle.
\]
Similarly, the Zariski closure 
$\overline{\varphi(L_C)} \subseteq M_2(\overline{\mathbb Q})$ is the zero set of 
 $\langle x_{11}-x_{12}x_{21}-1, \, x_{22}-1\rangle$.
\hfill $\blacktriangleleft$
\end{example}

\begin{example}\label{ex:Dyck}
A second simple example concerns the so-called Dyck language~$L$, consisting of all words with same number of $a$'s and $b$'s such that every prefix 
has at least as many $a$'s and $b$'s.
This is the reachability language of the following 1-VASS:
\begin{center}
       \begin{tikzpicture}[->,>=stealth,auto,node distance=2.5cm]
  \node (s) [circle,draw] {s};
  
  \path (s) edge[loop left] node[left] {+1,a} (s)
            
        (s) edge[loop right] node[right] {-1,b} (s);
\end{tikzpicture}
\end{center}
Since $L$ is closed under concatenation, its image under the above morphism $\varphi$ is closed under multiplication.  
Moreover, since $\varphi(a)$ and $\varphi(b)$ are invertible matrices, $\varphi(L)$ is a sub-semigroup of $GL_2(\mathbb Q)$. Due to \Cref{fact:closed_semigroup_is_subgroup}, the Zariski closure of $\varphi(L)$  inside  $GL_2(\overline{\mathbb Q})$ is a linear algebraic subgroup. 
Since both $\varphi(a)$ and $\varphi(b)$ have determinant one,
all elements 
$\varphi(L)$ also have determinant one.  In this case, the ideal of all polynomials relations satisfied 
by~$\varphi(L)$
is the principal ideal generated by $x_{11}x_{22}-x_{12}x_{21}-1$, expressing that the determinant is one.
In other words, the
Zariski closure is the special linear group 
$SL_2(\overline{\mathbb Q})$ of all $2\times 2$ matrices with algebraic entries and determinant one. \hfill $\blacktriangleleft$
\end{example}

\begin{example}\label{ex:RvsC2}
Consider the following 1-VASS, whose coverability language~$L_C$ is a subset of the regular 
language $a^*(a^*b^2)^* b$. For every word $w \in L_C$, every prefix of $w$ contains at least as many occurrences of $a$ as of~$b$.
\begin{center}
\begin{minipage}{0.5\textwidth}
\centering
   \begin{tikzpicture}[->,>=stealth,auto,node distance=2.5cm]
  \node (s) [circle,draw] {$s$};
  \node (t) [circle,draw,right of=s] {$t$};

  \path (s) edge[loop above] node[above] {+1,a} (s)
            edge[bend left]  node[above] {-1,b} (t)
        (t) edge[bend left] node[below] {-1,b} (s);
\end{tikzpicture}
\end{minipage}
\end{center}

 The reachability language~$L_R$ of this 1-VASS is the subset of $L_C$
 determined by the condition that the total number of $a$'s and $b$'s in a word be equal.
For the morphism~$\varphi_1$ and ideal $I$ shown below, the respective closures of $\varphi_1(L_C)$ and $\varphi_1(L_R)$
coincide and are the zero set of~$I$:
\begin{center}
\begin{minipage}{0.5\textwidth}
  $\varphi_1(a)=\begin{pmatrix}
    2 & 0\\
    0 & 4
\end{pmatrix},\; 
\varphi_1(b)=\begin{pmatrix}
    1 & 0\\
    1 & 1
\end{pmatrix}$
\end{minipage}
\begin{minipage}{0.4\textwidth}
\centering
  $ I:=\langle x_{12}, \, x_{11}^2-x_{22}\rangle$.
\end{minipage}
\end{center}
As a final example
consider the set of matrices $\varphi_2(L_C)$
arising from the morphism~$\varphi_2$, shown below:
\[
  \varphi_2(a)=\begin{pmatrix}
    1 & 1 & 0\\
    0 & 1& 1\\
    0 & 0 & 1
\end{pmatrix},\; 
\varphi_2(b)=\begin{pmatrix}
    1 & -1 & 0\\
    0 & 1& 1\\
    0 & 0 & 1
\end{pmatrix}
\]
The closure of $\varphi_2(L_C)$ is the zero set of~$I_C:=\langle x_{11} - 1, x_{22} - 1, x_{33} - 1, x_{21},  x_{31},x_{32}\rangle$.
The defining ideal of the closure of $\varphi_2(L_R)$ is
$\langle I_C, x_{12}\rangle$, where
the additional equation $x_{12}=0$
reflects  the condition that 
 total number of~$a$'s and $b$'s in every word $w\in L_R$ is equal.
\hfill $\blacktriangleleft$
\end{example}

\subsection{Discussion}
The existence of a procedure for computing the Zariski closure of $\varphi(L)$, 
given a context-free language $L$ and a morphism $\varphi$, remains open.
Extending beyond interprocedural program analysis to models of concurrency and parameterized systems, 
another natural next step is to consider the case where $L$ is the coverability or reachability language of a VASS in dimension greater than one.
We note that Presburger invariants for VASS reachability sets were studied in~\cite{jerome2010general}, leading to a new algorithm for solving VASS reachability~\cite{leroux2011vector}. (The VASS reachability problem has attracted  renewed interest following the resolution of a long-standing question concerning its complexity~\cite{CzerwinskiLLLM19,LerouxS19,CzerwinskiO21,Leroux21}.)
Comparing the result of~\cite{jerome2010general} to the  generalization of \textsc{Reach Closure} to general VASS, we note on the one hand that 
the reachable set of configurations of a given VASS can easily be expressed as a morphic image of its reachability language but, on the other hand, 
Presburger invariants involve linear inequalities whereas \textsc{Reach Closure} concerns polynomial equations.

Our work also has similarities to a line of research  on
regular separability of VASS languages~\cite{CzerwinskiLLLM19,CzerwinskiZ20}.  
Here, the question is to determine, given two VASS, whether there is a regular language that contains the language of one VASS and is disjoint from the language of the other. For comparison, our solution to \textsc{Cover Closure}
involves computing a regular language that contains a given 1-VASS language and has the same Zariski closure.


\section{Summary and Overview}
\label{sec-overview}
This section contains statements of the main results in the paper and a
high-level overview of the technical approach.  The detailed
development is given in the remainder of the paper.
\subsection{Stateless Setting}
To study the problems of computing the Zariski closure of (the morphic images) of 1-VASS coverability 
and reachability languages in an abstract setting, we introduce two families of languages.
Given a morphism $\omega:\Sigma^*\rightarrow \mathbb Z$
we define the language $L_C(\omega)$, comprising those words whose
prefixes all have nonnegative $\omega$-value, and the language
$L_R(\omega)$, comprising those words in $L_C(\omega)$ that also have
value zero.  Formally we have:
\begin{align*}
L_C(\omega)&\,:=\,\{w\in\Sigma^*: \forall u\preceq w\, (\omega(u)\geq 0)\}\\
L_R(\omega)&\,:=\, \{ w\in \Sigma^*: \omega(w)=0 \wedge \forall
                      u\preceq w\, (\omega(u)\geq 0)\} \, .
\end{align*}
where $\preceq$ denotes the prefix relation on words.
The core of our technical development addresses the following two decision problems:
given morphisms $\varphi:\Sigma^*\rightarrow M_d(\Q)$ and $\omega:\Sigma^*\rightarrow \mathbb Z$, the
\textsc{Cover Closure} problem asks to compute the Zariski
closure of $\varphi(L_C(\omega))$, while
the \textsc{Reach Closure} problem asks to compute the Zariski closure of $\varphi(L_R(\omega))$.
We then have:
\begin{restatable}{proposition}{VASStoMON}\label{pro:VASStoMON}
    The problem of computing the Zariski closure of a 1-VASS coverability language is interreducible with 
    \textsc{Cover Closure}.  The problem of computing the Zariski closure of a 1-VASS reachability language is interreducible with
    \textsc{Reach Closure}.
\end{restatable}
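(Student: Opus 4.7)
The plan is to prove both directions of interreducibility, handling the coverability case in detail; the reachability case is entirely analogous with $L_R$ in place of $L_C$. The easy direction is immediate: given $\omega:\Sigma^*\to\mathbb Z$ and $\varphi:\Sigma^*\to M_d(\mathbb Q)$, the one-state 1-VASS whose self-loop on letter $\sigma$ has weight $\omega(\sigma)$ (the single state being both initial and accepting) has coverability language equal to $L_C(\omega)$, so any procedure for the 1-VASS problem applied to this instance with morphism $\varphi$ outputs $\overline{\varphi(L_C(\omega))}$.

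For the converse, let $V=(Q,\Sigma,T,q_I,F)$ be a 1-VASS with morphism $\varphi:\Sigma^*\to M_d(\mathbb Q)$. I first pre-process $V$ to have a unique initial state $q_S$ and unique final state $q_E\neq q_S$: adjoin fresh states $q_S,q_E$ and a fresh letter $\$$ with $\varphi(\$):=I_d$, then add a start transition $(q_S,\$,0,q_I)$ and an end transition $(q_F,\$,0,q_E)$ for each $q_F\in F$. The resulting 1-VASS $V'$ satisfies $\overline{\varphi(L_C(V'))}=\overline{\varphi(L_C(V))}$. Next, I absorb the state structure of $V'$ into a block-matrix morphism: setting $n:=|Q|+2$ and $T'$ the transitions of $V'$, define $\Sigma':=T'$, $\omega'(t):=\mathrm{weight}(t)$, and
\[
\varphi'(t):=E_{s(t),d(t)}\otimes\varphi(\mathrm{label}(t))\in M_{nd}(\mathbb Q),
\]
where $s(t),d(t)$ are the source and target of $t$ and $E_{ij}\in M_n(\mathbb Q)$ is the matrix unit. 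Using $E_{ij}E_{k\ell}=\delta_{jk}E_{i\ell}$ together with $(A\otimes B)(C\otimes D)=AC\otimes BD$, a direct induction shows that for every nonempty $u=t_1\cdots t_k$, the product $\varphi'(u)$ equals $E_{s(t_1),d(t_k)}\otimes\varphi(\mathrm{label}(u))$ if $u$ is a valid $V'$-path and $0$ otherwise.

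Consequently $\varphi'(L_C(\omega'))\subseteq\{I_{nd}\}\cup\{0\}\cup\bigcup_{i,j}V_{ij}$, where $V_{ij}:=\{E_{ij}\otimes M\mid M\in M_d\}$ is the $(i,j)$-block coordinate subspace; these subspaces are Zariski-closed and intersect pairwise only at $0$, while $\{I_{nd}\}\cap V_{q_S,q_E}=\emptyset$ because $q_S\neq q_E$. Since Zariski closure commutes with finite unions, intersecting $\overline{\varphi'(L_C(\omega'))}$ with $V_{q_S,q_E}$ and invoking the canonical isomorphism $V_{q_S,q_E}\cong M_d$ yields $\overline{\varphi(L_C(V))}\cup\{0\}$. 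An algorithm for \textsc{Cover Closure} applied to $(\omega',\varphi')$ thus recovers $\overline{\varphi(L_C(V))}$ up to the extra point $\{0\}$.

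The main obstacle is removing this spurious $\{0\}$, which arises from invalid transition sequences that accidentally satisfy the counter constraint. Since the ambiguity is only at the origin, it is settled by primary decomposition together with a separate emptiness check for $L_C(V)$ (decidable for 1-VASS): if $L_C(V)=\emptyset$ then the correct output is $\emptyset$, and otherwise one tests whether $\{0\}$ is genuinely an isolated component of $\overline{\varphi(L_C(V))}$ via an auxiliary invocation of the algorithm on a modified instance. All other steps, including the routine verification that the block-matrix encoding preserves polynomial relations and the symmetric reachability construction, are standard.
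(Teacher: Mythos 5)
Your overall architecture (one-state 1-VASS for the easy direction; a block-matrix morphism that absorbs the control states for the converse) matches the paper's, but your converse reduction has a genuine gap exactly at the point you defer: the spurious origin. With the matrix-unit encoding $\varphi'(t)=E_{s(t),d(t)}\otimes\varphi(\mathrm{label}(t))$, the zero matrix is produced both by invalid transition sequences and by \emph{valid accepted} paths $u$ with $\varphi(\mathrm{label}(u))=\bm{0}$, and these two situations are indistinguishable in $\overline{\varphi'(L_C(\omega'))}$. Consequently your computed set is $\overline{\varphi(L_C(V))}\cup Z$ with $Z\subseteq\{0\}$, and the reduction is only correct if you can decide whether $0\in\overline{\varphi(L_C(V))}$. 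Your proposed resolution does not do this: primary decomposition of the computed variety cannot help, because the case where $\{0\}$ is a genuine isolated component of $\overline{\varphi(L_C(V))}$ (e.g.\ some letter with $\varphi(\sigma)=\bm 0$, $\omega(\sigma)=0$) and the case where it is an artifact of invalid sequences yield \emph{identical} outputs of your construction; an emptiness check for $L_C(V)$ is irrelevant to this distinction; and the ``auxiliary invocation of the algorithm on a modified instance'' is never specified, so the argument is circular -- you would need precisely the information the reduction is supposed to deliver.

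The paper closes this hole by changing the encoding rather than post-processing: each nonzero block is taken to be $\begin{pmatrix}\varphi(\sigma)&0\\0&1\end{pmatrix}$ in dimension $d+1$, so a valid path always carries a $1$ in the bottom-right corner of its block and can never collapse to the zero matrix. One then intersects the closure with the \emph{closed} condition that the bottom-right entry of some block $M_{1i}$, $i$ accepting, equals $1$ (this removes the zero matrix, the wrong blocks, and, when appropriate, the identity, without discarding any genuine limit point, since the genuine points all lie in that closed set), and finally projects away the extra row and column. The paper's footnote explicitly flags that omitting this padding is an oversight made in earlier work -- it is the same pitfall your construction runs into. Your proof would be repaired by running your own reduction with the padded morphism $\tilde\varphi(\sigma)=\begin{pmatrix}\varphi(\sigma)&0\\0&1\end{pmatrix}$ and replacing the ``remove $\{0\}$ if spurious'' step by intersection with the closed corner-equals-one condition followed by projection; as written, however, the key step is missing.
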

In the  proof of Proposition~\ref{pro:VASStoMON}, given in Section~\ref{sec:VASStoMON}, 
the arguments for coverability and reachability are essentially the same; here we sketch the former.
In brief, the problem of computing the closure of a 1-VASS coverability language can be reduced to the special case of 1-VASS
with a single control state by blowing up the dimension of the matrices.  Second, the version of the problem for single-state 1-VASS
is equivalent to \textsc{Cover Closure} since every language $L_C(\omega)$ is the coverability language of a single-state 1-VASS and, conversely, every
such coverability language is the homomorphic image of a language of the form $L_C(\omega)$.
As noted in Section~\ref{sec:VASStoMON}, a corollary of Proposition~\ref{pro:VASStoMON} is that in showing decidability 
of \textsc{Cover Closure} and \textsc{Reach Closure} we may assume without loss of generality that 
every symbol $\sigma\in\Sigma$ has weight $\omega(\sigma)\in\{-1,0,+1\}$.
We make this assumption henceforth.





A 1-VASS $\mathcal V$ with \emph{zero tests} features transitions that can only be taken when the counter is zero.
The reachability language of $\mathcal V$ can be effectively represented by a regular expression whose atoms 
are reachability languages of 1-VASS without zero tests (decompose each accepting run of $\mathcal V$ into segments between consecutive 
zero-tests).  This observation allows us to reduce
 the problem of computing $\Zcl{\varphi(L(\mathcal V))}$ 
 corresponding problem for 1-VASS without zero tests 
 which, by Proposition~\ref{pro:VASStoMON}, is equivalent to \textsc{Reach Closure}.

\subsection{Invertible Matrices} 
In the special case that the morphism $\varphi: \Sigma^* \rightarrow M_d(\mathbb{Q})$ takes values in the group of invertible matrices,
we can use a Noetherian property for ascending chains of linearly algebraic groups, due to~\cite{nosan2022computation}, to show that 
\textsc{Cover Closure} and \textsc{Reach Closure} are both computable.
To explain this, we first note that by~\Cref{fact:closed_semigroup_is_subgroup}, the condition on $\varphi$ entails that 
$\Zcl{\varphi(L_C(\omega))}$ and $\Zcl{\varphi(L_R(\omega))}$ are both linear algebraic groups (not just monoids).
Now~\cite[Theorem 11]{nosan2022computation} states that every strictly increasing chain 
$G_1\subsetneq G_2\subsetneq \cdots\subsetneq G_\ell \subseteq \ldots $ of Zariski-closed groups of invertible matrices in $M_d(\Zcl{\Q)}$ is finite, 
provided that each $G_i$ are topologically generated
by rational matrices.
(Here 
we say that $G$ is \emph{topologically generated} by a set $S$ if $G = \overline{\langle S \rangle}$.)

 We can use this Noetherian property to compute $\Zcl{\varphi(L_C(\omega))}$ and $\Zcl{\varphi(L_R(\omega)}$ as follows:
 First, we partition the alphabet $\Sigma$ into subsets $\Sigma_0$, $\Sigma_1$, and $\Sigma_{-1}$, consisting of letters with weights $0$, $1$, and $-1$, respectively. 
 There are then two cases:
\begin{enumerate}
\item \textsc{Cover Closure}: We define $G_0$ to be the group topologically generated by $\varphi(\Sigma_0 \cup \Sigma_1)$, and for all $i\in \mathbb{N}$, we write
\[G_{i+1} = \overline{\langle \varphi(\Sigma_1)\,  G_i \, \varphi(\Sigma_{-1})\, ,\,  G_i \rangle}\,.\]
\item \label{Inv Reach Closure} \textsc{Reach Closure}: We define $G_0$ to be the group
topologically generated by $\varphi(\Sigma_0)$ and for all~$i\in \mathbb{N}$,  similar to the previous case, take $G_{i+1}$ to be topologically generated by 
\begin{equation}\label{eq:ReachInv}
    G_{i+1} = \overline{\langle \varphi(\Sigma_1)\,  G_i \, \varphi(\Sigma_{-1})\, ,\,  G_i \rangle}\,.
\end{equation}
 \end{enumerate}

We will show that once these chains stabilize, the resulting group is
precisely the Zariski closure of the monoid of interest. Here we give the proof in the case of \textsc{Reach Closure}, the proof for \textsc{Cover Closure} follows similarly.
\begin{proposition}
    Consider the chain of groups $G_i$ constructed in \Cref{eq:ReachInv}, and let $m$ be the smallest integer such that $G_m = G_{m + 1}$. Then $\Zcl{\varphi(L_R(\omega))} = G_m.$
\end{proposition}
\begin{proof}
First, let us prove that if there are two consecutive equivalent  elements in the chain then the chain would stabilize. 
    Let $m$ be such that $G_m = G_{m + 1}$. Then we have that 
    \[
    G_{ m + 2} = \Zcl{ \langle \varphi(\Sigma_1)\,  G_{m + 1} \, \varphi(\Sigma_{-1}) , G_{m + 1} \rangle } =  \Zcl{ \langle \varphi(\Sigma_1)\,  G_{m } \, \varphi(\Sigma_{-1}) , G_{m } \rangle } = G_m\,.
    \]
    Hence, $G_{ m + k} = G_m$ for every~$k \in \N.$

    Next, we define a sequence of languages in recursive way as follows: 
    \begin{align*}
    L_0 = \Sigma_0 \qquad \text{and} \qquad
    L_{i + 1} = \left( \Sigma_1 L_i \Sigma_{-1} \cup L_i\right)^*\,.
     \end{align*}
By an induction on $i$, it follows that
 $G_i = \Zcl{\varphi(L_i)} $. Moreover, we have $L_R(\omega) = \bigcup_{ i \in \N} L_i$.  Thus, 
 \[
 \Zcl{\varphi(L_R(\omega))} = \Zcl{\bigcup_{i \in \N}  \Zcl{\varphi(L_i)}}
 = G_m.\]
\end{proof}

The given construction shows that the problems \textsc{Cover Closure} and \textsc{Reach Closure} are decidable in the case of invertible matrices. However, this  construction breaks down for non-invertible matrices, as shown by the following example.

\begin{example}
    Let $\Sigma=\{\alpha, \beta,\sigma_1,\sigma_2\}$. Define $\varphi: \Sigma^* \rightarrow M_2(\mathbb{Q})$ and  by 
    \[\varphi(\sigma_1)=
    \begin{pmatrix}
        0&0\\0&0
    \end{pmatrix}
    \qquad \varphi(\sigma_2)=\begin{pmatrix}
0 & 1\\
0 & 0
\end{pmatrix} \qquad \varphi(\alpha)=\begin{pmatrix}
 2 & 0\\
 0 & 1
 \end{pmatrix}  \qquad \varphi(\beta)=\begin{pmatrix}
 \frac{1}{2} & 0\\
 0 & 1
 \end{pmatrix} \, , \]
and $\omega: \Sigma^* \rightarrow \mathbb{Z}$ by \[\omega(\sigma_1)=\omega(\sigma_2)=0 \qquad \qquad \omega(\alpha)=-\omega(\beta)=1 \,.\]
Let $S_0$ be the  monoid topologically generated by $\varphi(\sigma_1)$ and $\varphi(\sigma_2)$. We write   $S_0:=\overline{\langle \varphi(\sigma_1),\varphi(\sigma_2)\rangle}$, and observe that~$S_0$ is a finite set containing the two generators and identity matrix. Consider the  chain 
$(S_i)_{i\geq 0}$
defined inductively by $S_{i+1} = \langle \varphi(\alpha)\,  S_i \, \varphi(\beta)\, ,\,  S_i \rangle$, as described above for \textsc{Reach Closure}. 
\emph{But this chain does not stabilize.}
 Indeed, we claim that for each $i \in \mathbb{N}$,
\[S_i=\Biggl\{ \begin{pmatrix}
        0&0\\0&0
    \end{pmatrix},\,\begin{pmatrix}
        1&0\\0&1
    \end{pmatrix},\,
\begin{pmatrix}
0 & 2^j\\
0 & 0
\end{pmatrix}; 0\leq j\leq i \Biggl\}\,.\]
 \sloppy The base case ($i = 0$) holds by definition of~$S_0$. For the inductive step, observe that
  $\begin{pmatrix}
2 & 0\\
0 & 1
\end{pmatrix}\begin{pmatrix}
0 & 2^j\\
0 & 0
\end{pmatrix}\begin{pmatrix}
1/2 & 0\\
0 & 1
\end{pmatrix}=\begin{pmatrix}
0 & 2^{j+1}\\
0 & 0
\end{pmatrix}$. Thus, each iteration introduces a new matrix with an entry $2^{j+1}$ in the upper-right corner, and so the chain $(S_i)_{i\geq 0}$ grows strictly at each step, never stabilizing. \hfill $\blacktriangleleft$
\end{example}

Solving the \textsc{Cover Closure} and  \textsc{Reach 
 Closure} in the general case---without the invertibility assumption---appears to be more challenging.
 The following section introduces the key technical tool that underlies our approach in the general setting.
\subsection{Factorization Trees for Matrix Monoids}
We say that a matrix~$M \in M_d(\mathbb Q)$ is \emph{stable} if $M$ and all its powers have the same rank.  An equivalent condition is that~$ \im(M) \cap \ker(M) = \{0\}$
where $\im(M)$ and $\ker(M)$ respectively denote the image and kernel of the linear transformation~$x \mapsto M x$.  
Using the notion of stable matrix we adapt the notion of factorization tree from finite monoids to the infinite monoid $M_d(\mathbb Q)$.  The original notion of factorization tree was introduced by Imre Simon~\cite{Simon90} in his resolution of the star-height problem for regular languages (see also~\cite{Bojanczyk09}).  In our variant, the role of idempotent elements in the classical version is replaced by stable matrices.  

We define a  factorization tree in $M_d(\mathbb Q)$  to be a finite  tree in which each node is labeled by an element of~$M_d(\mathbb Q)$, subject to two conditions:
\begin{itemize}
    \item  Product condition: If a node has children with labels $M_1, \ldots,M_k$ (read from left to right), then the label of the node is the matrix product $M_1 \cdots  M_k$.
    \item  Stability condition: If a node has more than two children, then each of its children must be a stable matrix. 
\end{itemize}

The \emph{height} of a factorization tree is defined as the maximum distance from the root to any leaf. 
The \emph{yield} of the tree is the sequence of leaf labels read from left to right, which is a sequence of matrices in $M_d(\mathbb Q)$.
Using manipulations in multilinear algebra we prove that every sequence of matrices in $M_d(\mathbb Q)$, no matter its length, can be realized as the yield of factorization tree of 
height bounded by the explicit function $\theta(d)$ of the dimension~$d$.

\begin{restatable}{proposition}{proptree}\label{prop:tree}
Every sequence of matrices in $M_d(\mathbb Q)$ is the yield of some  factorization tree of height at most~$\theta(d):=d(d+3)$.\end{restatable}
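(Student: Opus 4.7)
The plan is to prove the proposition by induction on the ambient dimension $d$, establishing a recurrence of the form $\theta(d) \le \theta(d-1) + (2d+2)$; together with $\theta(0)=0$ this integrates exactly to $\theta(d)=d(d+3)$. For the base case $d\le 1$, every $1\times 1$ matrix trivially satisfies $\im(M)\cap\ker(M)=\{0\}$, hence every factor of the sequence is stable and placing them all under a common root yields a tree of height at most $1 \le \theta(1) = 4$.

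For the inductive step, fix a sequence $\sigma = M_1,\dots,M_n$ in $M_d(\Q)$ and consider the prefix products $P_k := M_1\cdots M_k$ and suffix products $S_k := M_k\cdots M_n$. The rank sequences $k\mapsto \rank(P_k)$ and $k\mapsto \rank(S_k)$ are respectively weakly decreasing and weakly increasing with values in $\{0,\dots,d\}$; each therefore has at most $d$ strict transitions. Marking these at most $2d$ transition positions partitions $\sigma$ into at most $2d+1$ consecutive \emph{rank-constant segments} throughout each of which both the prefix image $V=\im(P_k)$ and the suffix kernel $W=\ker(S_k)$ are independent of $k$.

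A rank-constant segment with $V=\Q^d$ consists entirely of invertible (hence stable) matrices and becomes a single stable node. When $\dim V < d$, my plan is to use multilinear algebra to \emph{compress} the segment into a sequence in $M_{\dim V}(\Q) \subseteq M_{d-1}(\Q)$: a choice of complement to $\ker P_k$ identifies the action across the segment with a sequence of endomorphisms of $V$, while the induced action on $\Lambda^{\dim V}\Q^d$ canonically pins down $V$ and makes ambient stability equivalent to compressed stability. Applying the inductive hypothesis then yields a tree of height at most $\theta(d-1)$ for the compressed sequence, which lifts back to the original segment. The at most $2d+1$ segment trees are finally hung under a balanced binary root, whose internal binary nodes satisfy the stability condition vacuously; this adds height at most $\lceil\log_2(2d+1)\rceil + 1 \le 2d+2$, giving the promised recurrence.

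The main obstacle I anticipate is the compression step: one must (i) correctly handle the \emph{boundary} factors at rank-drop positions, which are non-invertible and therefore do not literally restrict to endomorphisms of $V$, and (ii) verify that stability of a compressed node's label transfers back to stability of its lifted counterpart in $M_d(\Q)$. Both points are precisely where the multilinear-algebra manipulations alluded to in the statement should do the work, presumably by passing to a suitable exterior power and tracking the induced action on $\Lambda^{\dim V}\Q^d$ and $\Lambda^{d-\dim W}\Q^d$; careful accounting of these boundary wrappers is what will pin down the exact additive constant $2d+2$.
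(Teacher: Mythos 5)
The core of your plan---compressing a prefix/suffix-rank-constant segment to a sequence of endomorphisms of $V=\im(P_k)$ so that ``ambient stability is equivalent to compressed stability''---does not hold, and this is not just a boundary issue. Inside one of your segments the individual factors are only constrained \emph{relative to the global prefix and suffix products}; they need not have rank $\dim V$, need not map $V$ to $V$, and the stability of their partial products is a genuinely $d$-dimensional phenomenon invisible to any induced action on $V$, $\Lambda^{\dim V}\Q^d$ or $\Lambda^{d-\dim W}\Q^d$. A decisive degenerate case: take $M_1=M_n=0$ (the zero matrix) and let $M_2,\dots,M_{n-1}$ be completely arbitrary matrices in $M_d(\Q)$. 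Then $\rank(P_k)=0$ and $\rank(S_k)=0$ for every $k$, so your partition produces a \emph{single} segment with $V=\{0\}$ and $W=\Q^d$; the compressed sequence lives in $M_0(\Q)$ and carries no information, every compressed node is vacuously ``stable'', yet the lifted internal labels $M_a\cdots M_b$ are arbitrary and generally unstable (e.g.\ all $M_i$ equal to a nonzero nilpotent). To build a legal tree for that segment you would need the full theorem in dimension $d$ again, so the induction on $d$ makes no progress. The same problem occurs non-degenerately whenever $\rank(P_k)$ is small compared to the ranks of the factors inside the segment, so points (i) and (ii) that you flag as technical obstacles are in fact where the argument breaks.

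What the paper does instead is recurse on the \emph{rank of the node labels}, not on the ambient dimension, and it uses a different grouping: at each stratum consecutive factors are greedily merged into maximal blocks $N_j=M_{i_j}\cdots M_{i_{j+1}-1}$ in which \emph{every factor and the block product have the same rank} (a local ``rank-$r$ sequence'' condition, quite different from constancy of the global prefix/suffix ranks). For such blocks the exterior-algebra argument (a greedy basis among the vectors $\iota(\im M_i)\in\Lambda^r\Q^d$, plus the fact that consecutive images and kernels intersect trivially) yields a factorisation tree of height at most $d+2$; and lexicographic maximality of the blocks forces $\rank(N_jN_{j+1})<\max(\rank(N_j),\rank(N_{j+1}))$, so pairing adjacent blocks strictly lowers the rank and at most $d$ strata of height at most $d+3$ are needed, giving $d(d+3)$. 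If you want to salvage your outline, you must replace the prefix/suffix-rank segments by a grouping with this local equal-rank property (or some other device that controls the factors themselves), since that is exactly what makes the multilinear-algebra step and the stability transfer legitimate.
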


We use Proposition~\ref{prop:tree} to prove the following result,
which shows that all words of a sufficiently large positive weight contain a
stable factor with positive weight and, dually, all words of a
sufficiently large negative weight contain a stable factor with
negative weight.
\begin{restatable}{proposition}{decompose}\label{prop:eta}
There exists a threshold~$\eta(d):=2^{d(d+3)}+1$ such that for all $w \in \Sigma^*$:
\begin{itemize}
    \item[(i)] If $\omega(w) = \eta(d)$, then there exists a factor $u$ of $w$ such that
    $\omega(u)>0$ and $\varphi(u)$ is stable.
    \item[(ii)] If $\omega(w) = -\eta(d)$, then there exists a factor $u$ of $w$ such that
    $\omega(u)<0$ and $\varphi(u)$ is stable.
\end{itemize}
\end{restatable}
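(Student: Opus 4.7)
The plan is to combine \Cref{prop:tree} with a weighted descent through a factorisation tree. Writing $w = a_1 \cdots a_n$, I would apply \Cref{prop:tree} to the sequence $\varphi(a_1), \ldots, \varphi(a_n)$ to obtain a factorisation tree $T$ of height at most $\theta(d) = d(d+3)$. Every node $v$ of $T$ canonically determines a factor $u_v$ of $w$, namely the concatenation of the letters at the leaves of the subtree rooted at $v$, and by the product condition the label of $v$ is $\varphi(u_v)$. In particular $u_{\mathrm{root}} = w$ (with $\omega(u_{\mathrm{root}})=\eta(d)$), while every leaf corresponds to a single letter and therefore carries $\omega$-value in $\{-1,0,+1\}$.

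For part~(i), I would descend from the root, at each step moving to the child whose associated factor has the largest $\omega$-value, and halting as soon as the current node has at least three children. At such a halting node $v$ the stability condition forces \emph{every} child's label to be stable, and since the children's $\omega$-values sum to $\omega(u_v) > 0$, at least one child $c$ satisfies $\omega(u_c) > 0$. This child provides the required factor $u_c$.

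The substantive step is to verify that the descent cannot first terminate at a leaf. I would establish, by induction along the descent path, the invariant that after exactly $k$ binary steps (and arbitrarily many unary steps) the current node $v$ satisfies $\omega(u_v) \geq 2^{\theta(d)-k} + 1$. The base case $k=0$ is the hypothesis $\omega(w)=\eta(d)$; a unary step preserves the weight and leaves $k$ unchanged; at a binary step from a parent of weight $W \geq 2^{\theta(d)-k}+1$, the maximum-weight child has weight at least $\lceil W/2\rceil \geq \lceil (2^{\theta(d)-k}+1)/2\rceil = 2^{\theta(d)-k-1}+1$, using that $2^{\theta(d)-k}+1$ is odd whenever $\theta(d)-k\geq 1$. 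Since $T$ has height at most $\theta(d)$, any node reached by the descent has $k \leq \theta(d)$, and hence $\omega(u_v)\geq 2^0+1 = 2$. This is incompatible with $v$ being a leaf, so the descent must halt at an internal node with at least three children, completing~(i).

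Part~(ii) follows by a symmetric argument: descend instead to the child of minimum $\omega$-value and track the invariant $\omega(u_v) \leq -2^{\theta(d)-k}-1$, which similarly precludes reaching a leaf (since leaf weights are at least $-1$) and yields a stable child of negative weight at the halting node. The main point of delicacy in the whole argument is the interaction between the halving step at binary nodes and the integrality of $\omega$: the $+1$ offset in $\eta(d)=2^{\theta(d)}+1$ is calibrated precisely so that repeated ceiling-halving never pushes the invariant below $\pm 2$, which would otherwise permit the descent to terminate at a leaf whose matrix label need not be stable.
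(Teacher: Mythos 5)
Your overall strategy---descend through a factorisation tree obtained from \Cref{prop:tree}, keeping track of weights---is sound, and it is essentially a direct (rather than contrapositive) version of the paper's own argument: the paper assumes every stable factor has nonpositive weight and shows by induction on height that a node of height $h$ has weight at most $2^h$, so the root weight is at most $2^{d(d+3)}<\eta(d)$. Your descent invariant encodes the same height-versus-weight accounting; the contrapositive form is somewhat lighter, since it needs no ceiling/parity calibration and uses the $+1$ in $\eta(d)$ only to exceed $2^{\theta(d)}$. Your invariant itself checks out: along the path to any leaf each binary step is taken with $\theta(d)-k\geq 1$, so a leaf would have weight at least $2$, which is impossible.

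The one genuine problem is the final step at the halting node: you assert that the stability condition forces \emph{every} child of a node with at least three children to be stable, and then you select a positive-weight child. That is the phrasing used in the overview (Section~\ref{sec-overview}), but it is not the formal definition in Section~\ref{sec:factorisation}, and---more importantly---it is not what the construction behind \Cref{prop:tree} actually provides: there the wide nodes are the stable segment products $P=M_{j_k}\cdots M_{j_\ell-2}$, whose children are the original matrices of a rank-$r$ sequence, and such matrices need not be stable (for instance $\begin{pmatrix}0&1\\0&0\end{pmatrix}$, which is nilpotent, can occur in a rank-$1$ sequence such as $\begin{pmatrix}0&1\\0&0\end{pmatrix},\begin{pmatrix}0&0\\1&0\end{pmatrix}$). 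So, for the trees that \Cref{prop:tree} guarantees, ``some child is stable with positive weight'' is unjustified. The repair is immediate and in fact simplifies your proof: the halting node $v$ itself carries a stable label (that is the formal stability condition), and your invariant already gives $\omega(u_v)\geq 2>0$, so take $u:=u_v$; symmetrically in part~(ii) take the halting node of the minimizing descent, whose weight is at most $-2$. With that change your argument is correct and matches the paper's in substance.
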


\subsection{Approximation by Regular Languages}
We use our results on factorization trees to reduce the \textsc{Cover
  Closure} and \textsc{Reach Closure} problems to the problem of
computing the Zariski closure of a set of matrices $\varphi(L)$ for a
regular language $L$ and morphism
$\varphi:\Sigma^*\rightarrow M_d(\mathbb Q)$, for which there is a
known algorithm~\cite{HrushovskiOPW23}.  This yields:

\begin{restatable}{theorem}{theNmon}
  \label{theo:coverVASS}
  The \textsc{Cover Closure} and \textsc{Reach Closure} problems
  are decidable.
 \end{restatable}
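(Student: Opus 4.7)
My plan is to reduce both problems to the algorithm of \cite{HrushovskiOPW23} for computing the Zariski closure of $\varphi(L)$ when $L$ is regular. Concretely, I will construct a regular language $L^\sharp \subseteq L_C(\omega)$ (respectively $L^\sharp \subseteq L_R(\omega)$) satisfying $\Zcl{\varphi(L^\sharp)} = \Zcl{\varphi(L_C(\omega))}$ (resp.\ the analogue for $L_R$), and then apply the existing algorithm to $L^\sharp$. By Proposition~\ref{pro:VASStoMON} we may assume $\omega(\sigma)\in\{-1,0,+1\}$ for every letter.

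The construction is driven by a pumping principle. For \textsc{Cover Closure}: whenever $w = xuy \in L_C(\omega)$ and $u$ is a stable factor with $\omega(u)>0$, pumping $u$ preserves prefix non-negativity and membership in $L_C(\omega)$, so $xu^ky \in L_C(\omega)$ for every $k\geq 1$. Crucially, because $\varphi(u)$ is stable, $\Zcl{\{\varphi(u)^k : k\geq 1\}}$ is a well-behaved constructible set (a coset of an algebraic group on $\im \varphi(u)$), hence the closure of the orbit $\{\varphi(x)\varphi(u)^k\varphi(y): k\geq 1\}$ is already contained in $\Zcl{\varphi(L_C(\omega))}$. Iterating this observation using Propositions~\ref{prop:tree} and~\ref{prop:eta}, every word in $L_C(\omega)$ of weight exceeding $\eta(d)$ admits a decomposition, of nesting depth bounded by $\theta(d)$, into bounded-weight pieces separated by pumpable stable factors. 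This yields, for each decomposition pattern, a regular expression of shape $x_0 u_1^* x_1 \cdots u_\ell^* x_\ell$; the finite collection of all such patterns defines $L^\sharp$, recognised by an automaton tracking the counter value capped at $\eta(d)$ together with a bounded amount of factorisation-tree context.

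For \textsc{Reach Closure} the same idea applies, but pumping must balance positive and negative stable factors: a decomposition $xuyvz$ with $u$ stable of weight $+1$ and $v$ stable of weight $-1$ is replaced by $xu^kyv^kz$, which preserves both prefix non-negativity and total weight zero. Here Proposition~\ref{prop:eta}(i) and (ii) together with Proposition~\ref{prop:tree} again furnish the bounded-depth decomposition, and a similar regular automaton encodes the matched pumps.

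The inclusion $\Zcl{\varphi(L^\sharp)} \subseteq \Zcl{\varphi(L_C(\omega))}$ is immediate; the content is the reverse inclusion, which requires that the $\varphi$-image of every word in $L_C(\omega)$ (resp.\ $L_R(\omega)$) is captured, up to Zariski closure, by one of the bounded-depth pumping patterns in $L^\sharp$. The main obstacle will be carrying out this density argument cleanly: when a stable factor $u$ is pumped out of $w$, the residual pieces may themselves admit further stable factors and must be processed recursively, and one must verify that the recursion terminates uniformly. The height bound from Proposition~\ref{prop:tree} is precisely what terminates this iteration and, together with the finiteness of the bounded-weight base, ensures that $L^\sharp$ is regular rather than merely context-free. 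For \textsc{Reach Closure} the same boundedness is what lets a finite-state device realise the Dyck-like matching between positive and negative stable factors that would otherwise demand context-free power.
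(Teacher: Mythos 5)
There is a genuine gap, in two places. First, for \textsc{Cover Closure} your language $L^\sharp$ is never shown to be regular, and the obstruction is real: the pumpable stable factors $u$ produced by Propositions~\ref{prop:tree} and~\ref{prop:eta} are arbitrary, unboundedly long words depending on the input word, so the ``finite collection of all such patterns $x_0u_1^*x_1\cdots u_\ell^*x_\ell$'' is not finite, and no finite automaton over $\Sigma$ can guess a loop $u$ and verify that $\varphi(u)$ is stable (that would require storing $\varphi(u)$, i.e.\ unboundedly much information beyond the capped counter). Your reverse inclusion (density of $\varphi(L^\sharp)$ in $\Zcl{\varphi(L_C(\omega))}$) is also only sketched and is acknowledged as the main obstacle. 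The paper avoids both issues by going in the opposite direction: it builds a regular \emph{superset} $L(\mathcal A)\supseteq L_C(\omega)$ whose automaton merely caps the counter at $\eta(d)$ and never has to recognize stability; stability enters only in the correctness proof, where each $w\in L(\mathcal A)\setminus L_C(\omega)$ is factored as $w_1uw_2$ with $\varphi(u)$ stable and $\omega(u)>0$, and the local identity $I_u\in\Zcl{\langle\varphi(u)\rangle}$ from Lemma~\ref{lem:closure-is-group} absorbs the extra word into $\Zcl{\varphi(L_C(\omega))}$ in one step, with no recursion or termination argument needed.

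Second, and more seriously, your \textsc{Reach Closure} argument relies on a finite-state device over $\Sigma$ realizing the matched pumps $xu^kyv^kz$. That step fails: what is unbounded is the exponent $k$, not the factorization-tree height that Proposition~\ref{prop:tree} controls, and a sublanguage of $L_R(\omega)$ containing, for each $w$, words $xu^kyv^{k'}z$ with $k\omega(u)+k'\omega(v)$ compensating is not regular, since the automaton would have to remember the intermediate weight $k\,\omega(u)$. The paper's Example~\ref{ex:simple} already signals that reachability is where naive regular approximation breaks (there a regular superset with the same closure does not even exist), and its actual solution is different in kind: it reduces to \textsc{Zero Closure} and passes to the four-track alphabet $\Gamma\subseteq(\Sigma\cup\{\varepsilon\})^4$ with the map $\mathrm{flat}$, so that the positive pump $u^m$ and the negative pump $v^n$ are read interleaved by the automaton (keeping the counter within $\{-2\eta(d),\ldots,2\eta(d)\}$) yet appear contiguously in the correct positions after flattening; the closure is then the closure of the image, under the product map, of a genuinely regular language over $\Gamma$, together with the bounded language $L_{BZ}(\omega)$. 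Without some such change of alphabet, or another mechanism decoupling the unbounded exponents from the finite control, your construction of a regular $L^\sharp\subseteq L_R(\omega)$ with the same closure does not go through.
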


 Even though there is an easy reduction of \textsc{Cover Closure}
 to \textsc{Reach Closure}, it is instructive to give  separate
 proofs of the decidability of \textsc{Cover Closure}.
 
 At a high level, the proof strategy for showing decidability of
 \textsc{Cover Closure} is to compute an NFA $\mathcal A$ such that
 $L_{C}(\omega) \subseteq L(\mathcal A)$ and
 $\overline{\varphi(L_{C}(\omega))} = \overline{\varphi(L(\mathcal
   A))}$.  Automaton $\mathcal A$ keeps a count of the weight of the
input word up to a certain threshold.  If the weight of
 the input word ever exceeds the threshold the automaton moves to a
 sink state from which it accepts all words.  This means
 that up to the threshold, the automaton can check that the weight of
 the word read so far is not negative, but thereafter there are no
 guarantees on the weight of the prefixes.

By construction, $L(\mathcal A)$ is a superset of $L_{C}(\omega)$.
In the other direction, using Proposition~\ref{prop:eta}, we are able to show that 
every word $w \in L(\mathcal A) \setminus L_{C}(\omega)$ admits a
factorization $w=xuy$ such that $xu \in L_{C}(\omega)$, $\varphi(u)$
is stable, and $\omega(u)>0$.
It follows that for all sufficiently large $n$ we have $xu^{n+1} y \in
L_{C}(\omega)$ since by pumping the positive-weight segment we
assure that all prefixes have non-negative weight.
By the fact that $\varphi(u)$ is stable, the Zariski-closed monoid $\overline{\{ \varphi(u^n):n\in \mathbb N\}}$ is in fact a group.  In particular, this set
contains a matrix $I_u$ such that~$\varphi(u)\, I_u = I_u\, \varphi(u) = \varphi(u)$.  We conclude that 
$w \in \Zcl{\{ xu^{n+1} y : n \in \mathbb N\}}$ and hence $w \in \Zcl{L_{C}(\omega)}$.

The following simple but instructive example shows that for 
reachability languages there may not be regular superset that has the same
Zariski closure.
\begin{example}
Consider the alphabet $\Sigma = \{a,b\}$ and morphisms
$\varphi:\Sigma^*\rightarrow M_1(\mathbb Q)$
and $\omega:\Sigma^*\rightarrow \mathbb Z$ given by~$\varphi(a)=2,\varphi(b)=1/2$
and $\omega(a)=1,\omega(b)=-1$.
The reachability language $L_{R}(\omega)$ comprises all words in $\{a,b\}^*$ with the same number of $a$'s as
$b$'s.  Thus $\varphi(L_{R}(\omega))$
is the singleton $\{1\}$.  

By the Myhill-Nerode Theorem, for any regular language $L'$ such that 
        $L_{R}(\omega) \subseteq L'$,  
        there exists an infinite sequence $m_1 < m_2 < m_3 < \cdots$ yielding  the same left derivatives  $(a^{m_i})^{-1}L'$. By this and the fact that  $a^{m_i} b^{m_i} \in L'$, all words
$a^{m_i}b^{m_j}\in L'$ for all $i,j\geq 1$.  It follows that $\Zcl{\varphi(L')} = \overline{\mathbb Q}$.  Hence there is no regular 
    language that includes~$L_N(\omega)$ and has the same Zariski
    closure.

Note, however, that for any morphism $\varphi:\Sigma^*\rightarrow
M_d(\mathbb Q)$ there is a simple method to compute
$\Zcl{\varphi(L_R(\omega))}$. This set is the closure of the
image of  $\Zcl{\langle M \rangle}$ under the map
$\Psi:M_{2d}(\Zcl{\Q}) \rightarrow M_d(\Zcl{\Q})$, where 
\[ M = \begin{pmatrix}
\varphi(a) & 0\\ 0 & \varphi(b) 
  \end{pmatrix} \qquad\text{ and }\qquad 
\Psi\begin{pmatrix} X&0\\0&Y\end{pmatrix} = XY \, .
\]
 
  It thus suffices to compute the Zariski closure of the cyclic 
  monoid $\langle M \rangle$, which is well understood (see, e.g.,~\cite{DerksenJK05}).
\label{ex:simple}
\hfill $\blacktriangleleft$
  \end{example}

  Our method to show decidability of \textsc{Reach Closure} is
  inspired by Example~\ref{ex:simple}. 
  Given a morphism $\omega : \Sigma^*\rightarrow \mathbb Z$, we compute a finite-state
  automaton $\mathcal A$ over an alphabet
  $\Gamma \subseteq (\Sigma^*)^4$ such that
  $\{ \varphi(xywz) :(x,y,w,z) \in L(\mathcal A)\}$ and
  $\varphi(L_R(\omega))$ have the same Zariski closure.  As in the case of
  the \textsc{Cover Closure} problem, Proposition~\ref{prop:eta} is
  instrumental in the construction of $\mathcal A$.  The automaton is
  based on the decomposition of words $w\in L_R(\omega)$ into three
  parts $w=w_1w_2w_3$ such that $w_1$ contains a factor $u$ with
  $\varphi(u)$ stable and $\omega(u)>0$ and $w_2$ contains a factor~$v$ with $\varphi(v)$ stable and $\omega(v)<0$ as shown in~\ref{fig:LZ-decompsoitionw}.  The details of the
  construction are much more involved than for the corresponding case of
  \textsc{Cover Closure}.  In particular, instead of abstractly identifying all
  sufficiently large counter values, as in the former case, we treat
  counter values exactly by encoding them in the input word to the
  automaton $\mathcal A$.

\subsection{Indexed Languages}
We show that for sets of matrices defined by extensions of
context-free languages, it is not possible to compute the Zariski
closure.

\begin{restatable}{theorem}{Indexed}
\label{theo:index-und}
There is no algorithm that given an indexed grammar~$G$ over an
alphabet~$\Sigma$ and a monoid morphism
$\varphi:\Sigma^* \to \mathrm{M}_d(\mathbb{Q})$, computes the
algebraic closure of~$\varphi(L(G))$.
\end{restatable}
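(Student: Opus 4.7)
The plan is to reduce from an undecidable decision problem to the computation of $\Zcl{\varphi(L(G))}$ for indexed $L(G)$ and morphism $\varphi$. The two most natural sources are Hilbert's tenth problem (existence of integer zeros of a Diophantine polynomial) and the halting problem for two-counter Minsky machines. In either case, the reduction exploits the extra generative power of indexed grammars over context-free grammars---in particular their ability to encode polynomial counting constraints (for instance $\{a^{n^2} : n \in \N\}$ is indexed) and to duplicate stack contents via the nested-index formalism---which provides the means to embed an undecidable condition into an indexed language, something that appears to be out of reach for context-free languages.

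Given an instance of the source problem, I would first build an indexed grammar $G$ over an alphabet $\Sigma$ whose language encodes the relevant combinatorial data: for Hilbert's tenth problem on an input $p = p^+ - p^-$ with $p^\pm \in \N[x_1,\dots,x_k]$, words of the form $a_1^{n_1}\cdots a_k^{n_k}\,b^{p^+(\vec n)}\,c^{p^-(\vec n)}$ indexed by $\vec n \in \N^k$; for Minsky machines, words encoding valid (finite) computation traces. Second, I would design a morphism $\varphi : \Sigma^* \to M_d(\Q)$ so that certain algebraic features of $\varphi(w)$ reflect the solution/acceptance condition---typically by sending the generating symbols to diagonal matrices or to elements of a rigid arithmetic group whose products encode counter values, or polynomial evaluations, as explicit entries of the resulting matrix. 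Finally, one identifies an explicit algebraic test (equivalently, a target matrix $M_0$) that lies in $\Zcl{\varphi(L(G))}$ if and only if the original instance is a yes-instance, so that computability of the Zariski closure would decide the source problem.

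The main obstacle is precisely this last step: arranging that the chosen algebraic test is preserved under Zariski closure exactly when it should be, and is not spuriously introduced in the closure by other elements of $\varphi(L(G))$. Because the Zariski topology is coarse and closures readily pick up unwanted limit points, naive matrix encodings (e.g.~into a Zariski-dense subgroup of $\GL_d$) tend to produce a closure that trivially satisfies the test and therefore fails to distinguish yes- from no-instances. Overcoming this requires either a rigid arithmetic morphism---such as one into a commutative diagonal subgroup, where the Zariski closure of a subset can be characterised exactly by its generic polynomial relations---or a grammar/morphism combination in which a correctness check is built into the algebraic image, using padding symbols to force any limit point satisfying the test relation to come from an actual solution. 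Calibrating this interplay between the generative power of indexed grammars and the rigidity of the target algebraic group is where the substantive technical work lies.
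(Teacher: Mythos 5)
There is a genuine gap: your proposal correctly names the central obstacle (controlling what the Zariski closure picks up, since the topology is coarse) but does not resolve it, and the concrete routes you sketch run into specific problems. First, a Minsky-machine reduction requires simulating zero tests, and it is not clear how a matrix morphism applied to a word can enforce ``this transition is taken only when the counter is zero''; the paper deliberately avoids this by reducing from the \emph{boundedness problem for reset VASS}, where the only constraint to enforce is that no counter ever goes negative. Second, for the Hilbert's-tenth route you would need the language $\{a_1^{n_1}\cdots a_k^{n_k}\,b^{p^+(\vec n)}\,c^{p^-(\vec n)}\}$ to be indexed for an arbitrary multivariate polynomial, which is asserted but not established. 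Third, and most importantly, your proposed decidable criterion---membership of a target matrix $M_0$ in $\Zcl{\varphi(L(G))}$---is exactly the kind of test that the coarseness of the Zariski topology destroys: for instance, any infinite set of scalars is Zariski dense in the affine line, so a diagonal encoding of $p^+(\vec n)-p^-(\vec n)$ as an eigenvalue $2^{p^+(\vec n)-p^-(\vec n)}$ will typically have closure containing $1$ (indeed everything) regardless of whether a solution exists. You flag this danger yourself, but the proof has to actually defeat it, and no mechanism is given.

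For comparison, the paper's construction supplies precisely the missing ingredients. It encodes each candidate run of a reset VASS as a word that contains a copy of \emph{every prefix} of the transition sequence (this duplication is where the power of indexed grammars beyond context-free is genuinely used), and it defines $\varphi$ via outer-product matrices so that reading the letter $t_{\delta}$ between two copies of a prefix performs a quadratic update whose scalar factor is $(x_\ell+1)$ for the decremented coordinate $x_\ell$; hence any illegal run (a counter dropping to $-1$, or a state mismatch) is annihilated to the zero matrix and cannot pollute the closure, while legal runs yield rank-one blocks with identical columns proportional to $(\bm{x},1)^{\top}$. Finally, the decidable property extracted from the computed closure is not membership of a single matrix but the \emph{dimension} of the closure: it is at most one if and only if the reachability set is finite, the unboundedness direction being witnessed by the absence of polynomial relations between a growing coordinate and the exponentially growing ``unit'' entry. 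Without analogues of these three ideas---a source problem that needs only non-negativity rather than zero tests, an encoding that maps all invalid words to $\bm{0}$, and a closure invariant (dimension) that is robust under taking limits---your outline cannot be completed as stated.
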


The proof of undecidability is  by reduction from the boundedness problem
for reset VASS.  A reset VASS is a finite-state automaton equipped with
multiple counters taking values in non-negative integers.  The
counters can be incremented, decremented, or reset to zero.
A counter with value zero cannot be decremented, but there is no zero
test (a transition that only be taken when a counter is zero).
For this model, the boundedness problem (``is
the set of reachable configurations finite?'') is undecidable~\cite{DufourdFS98}.

The essence of our reduction is to take a reset VASS and produce an
indexed language $L\subseteq \Sigma^*$, each word of which encodes a
run of the VASS.  The encoding of a run is a complete list of all
prefixes of the sequence of transitions in the run.  We next construct
a morphism $\varphi$ with the property that $\Zcl{\varphi(L)}$ has
dimension at most one if and only if the reachable set of the VASS is
finite.  The morphism $\varphi$ maps each letter of $\Sigma$ to a
matrix that simulates the operation of incrementing and decrementing
the VASS counters.  The key challenge is to model the fact that a zero
counter cannot be decremented.  Our simulation is such that for any
word $w\in \Sigma$ that represents an illegal run of the VASS (in
which some counter takes a negative value) we have $\varphi(w)=\bm{0}$.
Thus only legal computations yield non-zero elements of $\varphi(L)$.
This is where the extra expressiveness of indexed grammars plays a key
role, specifically the fact that our encoding of VASS runs as words is
such that a word contains copy of each prefix of the run.

\section{Factorization Trees in Matrix Monoids}
\label{sec:factorisation}
Recall that $M \in M_d(\mathbb Q)$ is \emph{stable} if $\mathrm{rank}(M)=\mathrm{rank}(M^2)$.
A \emph{factorization tree} $\mathcal T$ in the monoid~$M_d(\mathbb Q)$ is 
  a finite unranked ordered tree: a node can have arbitrarily many children and the children of 
  each node are linearly ordered.  The nodes of $\mathcal T$ are labeled by elements of $M_d(\mathbb Q)$ such that the label 
  of a non-leaf node equals the left-to-right product
  of the labels of its children.  We require that the label $M$ of a node with three or more children be stable.   
  The \emph{height} of $\mathcal T$ is the maximum distance from the root to a leaf.  
  If $M_1,\ldots,M_m$ is a list of the labels of the leaf nodes, left-to-right, then we say that $\mathcal T$ is 
  a factorization tree of the sequence~$M_1,\ldots,M_m$. We  refer to this sequence as the \emph{yield} of the tree~$\mathcal T$.  

We call a sequence   $M_1,\ldots,M_m \in M_d(\mathbb Q)$ such that each matrix $M_i$ and their product $M_1\cdots M_m$ have rank $r$
a \emph{rank-$r$ sequence}. 
A consecutive subsequence $M_i,M_{i+1}\ldots,M_j$, with $1\leq i< j\leq m$,  of the sequence~$M_1,\ldots,M_m$ is called a \emph{segment}. 

\begin{proposition}\label{prop_factor._tree_height}
 All rank-$r$ sequences in $M_d(\mathbb Q)$
are the yield of some factorization tree of height at most~$d+2$.
\end{proposition}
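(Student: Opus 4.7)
The plan is to exploit the rigid structure of rank-$r$ sequences. First I would establish the following observations. Because $\rank(AB)\le\min\{\rank(A),\rank(B)\}$ and the total product has rank $r$, every consecutive segment $M_i\cdots M_j$ of a rank-$r$ sequence must have rank exactly $r$. Consequently the prefix images $\im(M_1\cdots M_i)$ form a decreasing chain of $r$-dimensional subspaces that all coincide with a common subspace $U$, and dually the suffix kernels $\ker(M_i\cdots M_m)$ all coincide with a common $(d-r)$-dimensional subspace $K$. In particular, $\im P=U$ and $\ker P=K$ where $P=M_1\cdots M_m$, so $P$ is stable if and only if $U\cap K=\{0\}$.

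If $P$ is stable, I am done: the tree with root labelled $P$ and children $M_1,\ldots,M_m$ (read left to right) satisfies both the product and stability conditions, giving height $1\le d+2$. Otherwise, the plan is to use a rank decomposition $M_i=A_iB_i$ with $A_i\in M_{d,r}(\mathbb{Q})$ of full column rank and $B_i\in M_{r,d}(\mathbb{Q})$ of full row rank, and study the auxiliary $r\times r$ sequence $N_i:=B_iA_{i+1}$ for $i=1,\ldots,m-1$. The identity $M_i\cdots M_j=A_i(N_i\cdots N_{j-1})B_j$ for $i<j$, together with the rank-$r$ property of segments, forces each inner product $N_i\cdots N_{j-1}$ to have rank $r$ and hence to be invertible in $M_r(\mathbb{Q})$. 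A direct computation using $(ANB)^2=A(NBAN)B$ then shows that $M_i\cdots M_j$ is stable in $M_d(\mathbb{Q})$ precisely when the "closing" product $B_jA_i$ is invertible.

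The strategy is now to cut the sequence at positions $1=i_0<i_1<\cdots<i_k=m$ chosen so that each segment $M_{i_{t-1}+1}\cdots M_{i_t}$ has invertible closing product, hence stable $M$-product; the root of the tree is labelled $P$ with those segments as children, and each internal subtree is built recursively in dimension $d-1$ via the inductive hypothesis. When no admissible cut exists, the idea is to argue that the failure of invertibility of all closing $B_jA_i$'s forces the sequence to act nontrivially only on a proper invariant subspace of $\mathbb{Q}^d$ (obtained from the span of the images of the $A_i$ together with a correction coming from the nontrivial intersection $U\cap K$), so the inductive hypothesis applies in dimension at most $d-1$, incurring additive depth overhead.

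The main obstacle I anticipate is making this cut-point / invariant-subspace argument precise. Specifically, I need to show that if all closing matrices $B_jA_i$ are singular along the sequence, then there is a genuinely $(d-1)$-dimensional subspace capturing the whole image and kernel structure, so that the induction on $d$ applies cleanly and yields an additive (not multiplicative) depth bound. The constant "$+2$" in the bound $d+2$ accounts for the base case (depth $1$ when $P$ is stable) and for the boundary matrices $M_1$ and $M_m$, which absorb the leftmost $A_1$ and rightmost $B_m$ in the rank decomposition and do not fit the inner $N$-pattern.
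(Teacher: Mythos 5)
Your preliminary reductions are sound and essentially reprove \Cref{clm:rank_product}: every segment of a rank-$r$ sequence has rank $r$, its image is the image of its first factor and its kernel is the kernel of its last factor, so $M_i\cdots M_j$ is stable exactly when $\im(M_i)\cap\ker(M_j)=\{0\}$, equivalently when your closing product $B_jA_i$ is invertible. The gap lies in everything after that, and you flag it yourself. First, the tree you propose in the ``admissible cut'' case is not a legal factorization tree: its root is labelled by the \emph{unstable} product $P$ yet is given arbitrarily many children, whereas the stability condition as actually used in the paper (see \Cref{fig:factorization} and the proof of \Cref{prop:eta}) requires the label of any node with three or more children to be stable. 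Hence the stable segments must be merged by a binary tree, whose height is controlled only if the \emph{number} of segments is bounded by a function of $d$ alone --- and you give no such bound; a priori the cut could produce as many segments as there are matrices. Second, the fallback case is unsubstantiated: the hypothesis that all closing products $B_jA_i$ are singular does not produce a proper subspace invariant under the whole sequence (the images $\im(M_i)$ are just $r$-dimensional subspaces in general position), and it is unclear what ``recursing in dimension $d-1$'' would mean for a tree whose nodes must still carry the original $d\times d$ matrices. So the proposal establishes no height bound depending only on $d$.

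For comparison, the paper's proof supplies exactly the missing quantitative mechanism via exterior algebra: it represents each image by a wedge vector $\iota(\im M_i)\in\Lambda(\mathbb Q^d)$, extracts a greedy linearly independent subsequence of at most $2^d$ indices, and uses the criterion $\iota(W_1)\wedge\iota(W_2)\neq 0 \iff W_1\cap W_2=\{0\}$ together with $\im(M_{j_\ell-1})\cap\ker(M_{j_\ell-2})=\{0\}$ to show that every gap between consecutive greedy indices is covered by a stable product segment anchored at some earlier greedy index. This yields at most $2^d$ blocks, each admitting a factorization tree of height at most $2$, which are then combined by a binary tree of height at most $\log_2 2^d=d$, giving $d+2$. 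If you want to salvage your route, the place to work is a bound on the number of cut points (or on the depth of your recursion) in terms of $d$; without it the approach cannot reach the stated bound.
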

\begin{proof}
Recall that for a vector space $V$ the \emph{exterior algebra} $\Lambda(V)$ is a vector space that contains a copy of~$V$ and is equipped with a
bilinear, associative, anti-symmetric operation $\wedge$.  
Below, we recall the key properties of~$\Lambda(V)$ needed to follow the proof, but see~\Cref{app:ext} for mode details.
If $V$ has finite dimension $d$ then $\Lambda(V)$ has dimension $2^d$.
The key properties of $\Lambda(V)$ for our purposes are that $v_1,\ldots,v_r \in V$ are linearly independent if and only if
 $v_1\wedge \ldots \wedge v_r\neq 0$, 
 and two such wedge products are nonzero scalar multiples of each other if and only if the corresponding sets of vectors span the same $r$-dimensional subspace. This defines a map $\iota$, which assigns  to each $r$-dimensional subspace $W$ of $V$ a vector  
 \[\iota(W):=v_1\wedge \ldots \wedge v_r\,,\] where $\{v_1,\ldots, v_r\}$ is an arbitrarily chosen basis of the subspace
 \footnote{
 The definition of~$\iota(W)$ depends up to a nonzero scalar multiple on the chosen  basis of~$W$.  
 }.
 Moreover, for subspaces $W_1,W_2$ of $V$, their intersection is trivial if and only if $\iota(W_1)\wedge\iota(W_2)\neq 0$.



The following claim is straightforward (see the proof in~\Cref{app-factorisation}):

\begin{restatable}{claim}{clmrankproduct}\label{clm:rank_product}
 Let  $M_1,\ldots,M_m \in M_d(\mathbb Q)$  be  a rank-$r$ sequence.   Then
$\rank(M_i \cdots M_k)=r$
for all $1\leq i< k \leq m$. Furthermore  $\im(M_{k+1}) \cap \ker(M_{k}) = \{0\}$  for $k\neq m$.
\end{restatable}

 Fix  a rank-$r$ sequence  $M_1,\ldots,M_m \in M_d(\mathbb Q)$.
We define the  greedy basis of the subspace of $\Lambda(\mathbb Q^d)$
spanned by $\{\iota(\im{M_{i}}) \mid 1\leq i < m\}$ to be the  
subsequence of indices
 $j_1<\cdots < j_s $ 
 such that $j_1=1$, and   
 $\iota(\im(M_{j_p}))$ is  not a linear combination of $\{\iota(\im{M_{i}}) \mid i < j_p \}$ for all $1 < p \leq s$. 
Note that  
\begin{equation}
    \label{eq-dim}
s \leq \dim \Lambda^r (\mathbb Q^d)\leq \dim \Lambda (\mathbb Q^d) = \sum_{r=0}^d\binom{d}{r}= 2^d\,.
\end{equation}

  Define $j_{s+1}:=m+1$.
  
\begin{claim}
\label{clm:stable_segment}
For every $\ell \in \{2, \ldots, s+1\}$, such that~$j_{\ell-1} \leq j_{\ell}-2$, there exists~$k \in \{1, \ldots, \ell-1\}$ such that the segment $M_{j_{k}} M_{j_{k}+1} \cdots M_{j_{\ell}-2}$ is stable.
\end{claim}
\begin{proofclaim}
\sloppy The condition $j_{\ell-1} \leq j_{\ell}-2$ implies that~$j_{\ell-1}< j_\ell -1<j_\ell$. 
\begin{center}
     \begin{tikzpicture}[
    every node/.style={font=\small},
    dot/.style={circle, fill=black, inner sep=1pt},
    highlight/.style={circle, fill=red, inner sep=1.3pt},
    >=stealth
]

\draw[->] (0,0) -- (13,0);

\node[dot, label=below:{$j_{1}$}] (j1) at (1,0) {};
\node[dot, label=below:{$j_{\ell-2}$}] (jl2) at (3,0) {};
\node[dot, label=below:{$j_{\ell-1}$}] (jl1) at (5,0) {};
\node[highlight, label=below:{\textcolor{red}{$j_{\ell}-1$}}] (jlm2) at (7,0) {};
\node[dot, label=below:{$j_{\ell}$}] (jl) at (8,0) {};
\node[dot, label=below:{$j_{\ell+1}$}] (jlp1) at (10,0) {};
\node[dot, label=below:{$j_s$}] (js) at (12,0) {};

\node at (2, -0.35) {\small $\cdots$};
\node at (11, -0.35) {\small $\cdots$};
\draw[decorate,decoration={brace, amplitude=6pt}, yshift=6pt]
  (1,0) -- (12,0) node[midway, yshift=13pt] {\scriptsize the greedy basis $\{j_1,\dots,j_s\}$};

\draw[decorate, red, decoration={brace, mirror, amplitude=5pt}, yshift=-15pt]
  (6.5,0) -- (7.5,0) node[midway, yshift=-15pt] {\scriptsize \textcolor{red}{not in the greedy basis}};


\end{tikzpicture}

\end{center}

By construction of the greedy basis of ~$\{\iota(\im{M_{i}}) \mid 1\leq i < m\}$,  
for every index $i < j_{\ell}$, the vector~$\iota(\im(M_i))$ is a linear combination of $\iota(\im(M_{j_p}))$ where $j_p \in \{j_1, \dots, j_{\ell-1}\}$. 
Since $j_{\ell-1}< j_\ell -1<j_\ell$,
it follows that 
 there exists a nonzero vector~$(c_1,\ldots,c_{\ell-1})\in \mathbb Q^{\ell-1}$ where 
 \begin{equation}
 \label{eq:lc}
     \iota (\im(M_{j_\ell-1}))=\sum_{p=1}^{\ell-1} c_p\, \iota (\im(M_{j_p}))\,.
 \end{equation}
By Claim \ref{clm:rank_product},
$\im(M_{j_{\ell}-1})\cap \ker(M_{j_{\ell}-2})=\{0\}$, 
which implies  $\iota(\im(M_{j_{\ell}-1})) \wedge \iota(\ker(M_{j_{\ell}-2})) \neq 0$.
 Substituting the linear combination in~\eqref{eq:lc} and bilinearity of the wedge product yields
\[\sum_{p=1}^{\ell-1} c_p\, (\iota (\im(M_{j_p})) \wedge \iota(\ker(M_{j_\ell-2}))) \neq 0.\]
  Let   $k \in \{1, \ldots, \ell-1\}$  be an index such that $c_{k}(\iota (\im(M_{j_k})) \wedge \iota(\ker(M_{j_\ell-2}))) \neq 0$, giving that $\im(M_{j_k})\cap \ker(M_{j_\ell-2})=0$. Let $P=M_{j_k}\dots M_{j_\ell-2}$, and observe that $\rank(P)=r$ by Claim~\ref{clm:rank_product}.
  Since $\rank(P)=\rank(M_{j_k})$  we have~$\im(P)=\im(M_{j_k})$, and likewise since  $\rank(P)=\rank(M_{j_\ell-2})$   we have $\ker(P)=\ker(M_{j_\ell-2})$. As a result,~$\im(P)\cap \ker (P)=\{0\}$ and $P$ is stable. 
\end{proofclaim}

Now we are in a position to state the following claim:
\begin{restatable}{claim}{clmtreeheightnotfull}
     For $\ell \in \{2,\ldots,s+1\}$ there exists $k<\ell$ such that the segment $M_{j_k},\ldots,M_{j_{\ell-1}}$ has a factorization tree of height at most 2.
\end{restatable}
\begin{proofclaim}
     The claim holds if $j_{\ell-1}=j_{\ell}-1$ since then we take $k:=\ell-1$ and the segment $M_{j_k},\ldots,M_{j_{\ell-1}}$ is just a single matrix.
    In the case that  $j_{\ell-1} < j_{\ell}-1$,
     by  \Cref{clm:stable_segment}, there exists $k<\ell$ such that $P:=M_{j_k} \cdots M_{j_\ell-2}$ is stable. 
  It follows that
$P\cdot M_{j_{\ell}-1}$ has a factorization tree of height 2, as shown in~\Cref{fig:factorization}. 
\end{proofclaim}

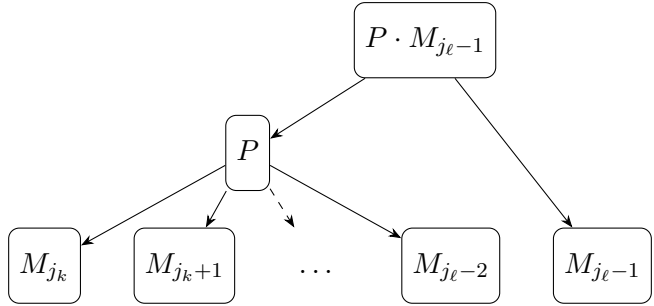
\begin{figure}[h!]
\centering
\begin{forest}
for tree={
  draw,
  rounded corners,
  minimum height=1cm,
  inner sep=4pt,
  l=1.5cm,
  s sep=7mm,
  edge={-Stealth},
  anchor=center,
  align=center
}
[{$P \cdot M_{j_{\ell}-1}$} 
  [{$P$} 
    [{$M_{j_k}$}, tier=leaf]
    [{$M_{j_k + 1}$}, tier=leaf]
    [{$\dots$}, draw=none, edge=dashed]
    [{$M_{j_{\ell}-2}$}, tier=leaf]
  ]
  [{$M_{j_{\ell}-1}$}, tier=leaf]
]
\end{forest}
\caption{factorization tree of $M_{j_k}, \dots, M_{j_{\ell}-1}$ with height at most 2}
\label{fig:factorization}
\end{figure}

Since each such segment $M_{j_k},\ldots,M_{j_{\ell-1}}$ has a factorization tree of height at most 2, and by~\eqref{eq-dim} there are at most $s \leq 2^d$ such segments, we can combine their trees using a binary tree of height at most~$\lceil \log_2 s \rceil\leq d$. Therefore, the final tree has height at most $d+2$, as desired.
\end{proof}

\proptree*

\begin{proof}

We construct the factorization tree in successive overlapping \emph{strata}, where each \emph{stratum} comprises at most three levels of the final tree
and the top level of a given stratum is the lowest level of the next stratum above.
The construction proceeds bottom-up and is such that the labels of all nodes in stratum $\ell$, except possibly the rightmost node in each level, have rank at most $d-\ell$.
The lowest level of the base stratum, $\ell = 0$, contains the leaf nodes corresponding to the input sequence of matrices (see~\Cref{fig:stratum}). 

We now describe how to construct a given stratum given list
$M_1, \ldots, M_m$ at its bottom level.
We group these nodes  to form products $N_1, \ldots, N_s$,
where each product $N_j$ is defined as
\begin{equation}\label{equation_stratum}
  N_j = M_{i_j} M_{i_j+1} \cdots M_{i_{j+1}-1}, \quad\quad  \text{for } j = 1, \ldots, s, 
\end{equation}
with indices $1 = i_1 < i_2 < \cdots < i_{s+1} = m + 1$ chosen so that the rank of each product~$N_j$ is equal to the rank of all its  factors:
\[\rank(N_j) = \rank(M_{i_j}) = \rank(M_{i_{j+1}})= \ldots = \rank(M_{i_{j + 1} -1}), \quad \text{for all } j.\]
%
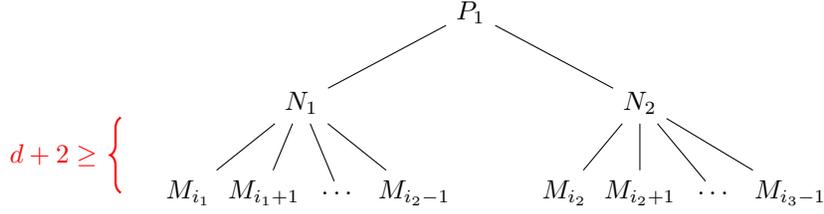
\begin{figure}[ht]
\centering
\begin{tikzpicture}[every node/.style={font=\small}, level distance=1.2cm]

\node (Mi1) at (0,0) {$M_{i_1}$};
\node (Mi1p1) at (1,0) {$M_{i_1+1}$};
\node (dots1) at (2,0) {$\dots$};
\node (Mi2m1) at (3,0) {$M_{i_2 -1}$};

\node (N1) at (1.5,1.2) {$N_1$};

\node (Mi2) at (5,0) {$M_{i_2}$};
\node (Mi3) at (6,0) {$M_{i_{2}+1}$};
\node (dots2) at (7,0) {$\dots$};
\node (Mi3m1) at (8,0) {$M_{i_3 -1}$};

\node (N2) at (6,1.2) {$N_2$};

\draw (Mi1) -- (N1);
\draw (Mi1p1) -- (N1);
\draw (dots1) -- (N1);
\draw (Mi2m1) -- (N1);

\draw (Mi2) -- (N2);
\draw (Mi3) -- (N2);
\draw (dots2) -- (N2);
\draw (Mi3m1) -- (N2);

\node (P1) at (3.75,2.4) {$P_1$};

\draw (N1) -- (P1);
\draw (N2) -- (P1);



\draw [decorate,decoration={brace,amplitude=4pt},red, thick]
  (-0.9,0) --++ (0,1) node[midway,xshift=-5pt,left,font=\small] {$ d+2 \geq $};

\node at (3.75,3.2) 
{\textbf{Fragment of Stratum 0}};
\end{tikzpicture}
\caption{Fragment of Stratum 0, showing two products \( N_1 \) and \( N_2 \) paired into a parent node \( P_1 \).}
\label{fig:stratum}
\end{figure}
We further require that the indices $i_1, \ldots, i_s$ are lexicographically maximal with the above property.
This means that the product $N_j$ includes as many consecutive $M_k$'s as possible, starting from $M_{i_j}$, subject to rank of the product being equal to $\rank(M_{i_j})$. 
It follows that for each $j \in \{1, \ldots, s-1\}$, the product $N_j N_{j+1}$ has strictly smaller rank than either $N_j$ or $N_{j+1}$. That is,
\[\rank(N_j N_{j+1}) < \max(\rank(N_j), \rank(N_{j +1})),\]
otherwise, we  would have $\rank(N_j) = \rank(N_{j +1}) = \rank(N_j N_{j+1})$ which contradicts  the lexicographic maximality of~$i_1, \ldots, i_s$.

\color{black}

 To build the third level, we combine the $N_j$'s in adjacent pairs: if $s$ is even, we form the products~$N_1 N_2,\, N_3 N_4,\, \ldots,\, N_{s-1} N_s$; and
if $s$ is odd, we form the products~$N_1 N_2$, $N_3 N_4,\, \ldots,\, N_{s-2} N_{s-1},\, N_s$.
Observe that, for all $j$,
\[\rank(N_j N_{j+1}) < \max(\rank(N_j), \rank(N_{j+1})) \le d - \ell \,.\]
 Since the top-most level in a stratum is the bottom-most level in next stratum, for all nodes $P$ in the next stratum, except possibly the rightmost one, we have
$\rank(P)  \le d - (\ell+1)$. Therefore, the rank bound is preserved in the new stratum, except possibly at the rightmost node (when $s$ is odd).

By Proposition \ref{prop_factor._tree_height}, each product $N_j$, as in equation~\eqref{equation_stratum}, can be computed using a factorization tree of height at most $d+2$, with leaves labeled $M_{i_j}, \ldots, M_{i_{j+1}-1}$.  Since there are at most $d$ strata, we obtain an overall bound of $d (d+3)$
for the total height of the factorization tree.
\end{proof}

We furthermore prove the following result, which is crucial in our study of  \textsc{Cover Closure} and \textsc{Reach Closure}. 
Let $\varphi:\Sigma^*\rightarrow M_d(\mathbb Q)$ and $\omega : \Sigma^*\rightarrow \mathbb Z$ be monoid morphisms,
where $\mathbb Z$ is considered additively and~$\omega(\sigma) \in \{-1,0,1\}$ for all $\sigma\in \Sigma$.

\decompose*

\begin{proof}
By symmetry, it suffices to prove the first item.  
We prove the contrapositive, that is, we assume for every factor $u$ of $w$ such that $\varphi(u)$ is stable we have $\omega(u) \le 0$, and argue that $\omega(w)< \eta(d)$. 
To this end, write $w=w_1w_2\cdots w_m$ and consider a factorization tree of the sequence of matrices
$\varphi(w_1),\ldots,\varphi(w_m)$.  By~\Cref{prop:tree} there is such a tree $\mathcal T$ of height at most $d(d+3)$.

For each node $\alpha$ of $\mathcal T$ we define corresponding string  $u_\alpha \in \Sigma^*$ bottom-up from the leaves.
For $\alpha$ the $i$-th leaf node we have $u_\alpha:=w_i$.  For a non-leaf node $\alpha$ with children $\alpha_1,\ldots,\alpha_\ell$,
we have $u_\alpha=u_{\alpha_1} \cdots u_{\alpha_\ell}$.  Clearly we have $u_\alpha=w$ for the root node $\alpha$.

We claim that for a node $\alpha$ of height~$h$ we have $\omega(u_\alpha)\leq 2^h$.   The proof is by induction on $h$.
The base case is that $\alpha$ is a leaf.  Then $u_\alpha=\sigma$ for some $\sigma\in\Sigma$, so $\omega(u_\alpha) \in \{-1,0,1\}$, and hence $\omega(u_\alpha)\leq 2^0=1$. 
For the inductive step, suppose $\alpha$ is a binary node with children $\alpha_1$ and $\alpha_2$ at height~$h$. 
Then 
\[\omega(u_\alpha)= \omega(u_{\alpha_1})+\omega(u_{\alpha_2}) \leq  2\cdot 2^{h-1}=2^h \, . \]
On the other hand, if $\alpha$ has three or more children then $\varphi(u_\alpha)$ (the label of $u_\alpha$) is stable by definition of a factorization tree and hence 
$\omega(u_\alpha)\leq 0$ by assumption.  Thus the induction hypothesis also holds in this case, and the claim is established.

Since the root node $\alpha$ has height at most $d(d+3)$ and satisfies $u_\alpha=w$,
 it follows from the above claim that 
\[   \omega(w)=\omega(u_\alpha) \leq 2^{d(d+3)}<\eta(d).\]
\end{proof}

\section{Algebraic Geometry Background}
\label{sec:background}

We recall basic notions from algebraic geometry, and  refer the reader
to~\cite{humphreys2012linear,borel2012linear} for an extended background.

An \emph{affine variety} (or \emph{algebraic set}) $X\subseteq  \overline{\Q}^d$ is the set of common zeros of finitely many polynomials  $P_1,\dots,P_\ell\in \overline{\Q}[x_1,\dots,x_d]$, that is,
\[X=\left\{x\in \overline{\Q}^d: P_1(x)=\dots=P_\ell(x)=0\right\}.\]
Given an ideal $I\subseteq  \overline{\Q}[x_1,\dots,x_d]$, the associated variety is 
$V(I)=\{x\in \overline{\Q}^d:\forall p\in I, P(x)=0\}$.
By Hilbert's Basis Theorem, every ideal $I$ is finitely generated. Affine varieties are precisely the sets of the form $V(I)$ for some ideal~$I\subseteq \overline{\Q}[x_1, \dots, x_d]$. 

The \emph{Zariski topology} on $\overline{\Q}^d$ takes varieties as closed sets. 
For $S\subseteq \overline{\Q}^d$, we denote by $\overline{S}$ its Zariski closure in the topology, that is, the smallest algebraic set containing~$S$.
Polynomial maps are continuous in Zariski topology, meaning that 
$f(\overline{S}) \subseteq \overline{f(S)}$ holds
for all  sets $S \subseteq \overline{\Q}^d$ and  polynomial maps $f: \overline{\mathbb{Q}}^d \to \overline{\mathbb{Q}}^{d'}$. The following fact is immediate, see a proof in~\Cref{app:background}.
%
\begin{restatable}{fact}{factclosureclosure}
    \label{fact_closure_closure}
    Let $f:X\to Y$ be a continuous function of topological spaces~$X$ and $Y$. Then $\overline{f(S)}=\overline{f(\overline{S})}$ for every subset~$S\subseteq X$.
\end{restatable}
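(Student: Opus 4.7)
The plan is to show both inclusions separately, relying only on the standard topological characterisations of continuity and closure.

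For the inclusion $\overline{f(S)}\subseteq \overline{f(\overline{S})}$, I would simply use monotonicity of closure. Since $S\subseteq \overline{S}$, we have $f(S)\subseteq f(\overline{S})\subseteq \overline{f(\overline{S})}$, and because $\overline{f(\overline{S})}$ is closed, taking the closure of the left-hand side preserves the inclusion. This direction does not use continuity.

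The reverse inclusion $\overline{f(\overline{S})}\subseteq \overline{f(S)}$ is the one where continuity is essential, and this is the step that requires the most care (though it is still routine). My plan is to pull the closed set $\overline{f(S)}$ back through $f$: since $f$ is continuous, $f^{-1}(\overline{f(S)})$ is closed in $X$, and it trivially contains $S$ because $f(S)\subseteq \overline{f(S)}$. Hence $\overline{S}\subseteq f^{-1}(\overline{f(S)})$ by minimality of the closure, which upon applying $f$ gives $f(\overline{S})\subseteq \overline{f(S)}$. Taking closures on both sides, and using once more that $\overline{f(S)}$ is closed, yields $\overline{f(\overline{S})}\subseteq \overline{f(S)}$.

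Combining the two inclusions gives the claimed equality. The only subtle point is making sure one invokes continuity in the form ``preimage of closed is closed'' rather than trying to push $\overline{S}$ forward directly (which would require the much stronger property of $f$ being a closed map). Since the statement is elementary and purely topological, no specific features of the Zariski topology or of $\overline{\mathbb{Q}}$ are needed in the argument.
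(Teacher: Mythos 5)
Your proof is correct and follows essentially the same route as the paper: the easy inclusion by monotonicity of closure, and the reverse inclusion from $f(\overline{S})\subseteq\overline{f(S)}$ followed by taking closures. The only difference is that you derive $f(\overline{S})\subseteq\overline{f(S)}$ from the preimage-of-closed-sets characterisation of continuity, whereas the paper invokes this inclusion directly as the form of continuity it works with.
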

The Zariski topology is \emph{Noetherian}: every descending chain of closed sets stabilizes.
A set is irreducible if it can not be written as the union of two proper closed subsets.
every closed set decomposes uniquely as a finite union of irreducible components—that is, maximal (respective to set inclusion)   irreducible closed subsets. The \emph{dimension} of a variety~$ X$ is the maximal length of a strict chain of irreducible closed subsets in $X$; when $X\subseteq \overline{\Q}^d$ we have $\dim(X)\leq d$.

In this paper, the two primary varieties of interest are   $M_d(\overline{\Q})\simeq \overline{\Q}^{d^2}$ and 
\[\GL_d(\overline{\Q})=\{(A, y) \in \overline{\Q}^{d^2 + 1} : \det(A) \cdot y = 1\},\]
viewed as an affine variety via the polynomial relation $\det(A)\cdot y=1$.
A \emph{linear algebraic group}~$G \subseteq \GL_d(\overline{\Q})$ is a group that is also an affine variety. The group operations (multiplication and inversion) are  polynomial maps— and hence are continuous with respect to the Zariski topology. 

The following is a classical result, see for example~\cite[1.2. Consequence]{boseck1993algebraic};
for completeness, we include a proof in~\Cref{app:background}.

\begin{restatable}{fact}{factclosedsemigroupissubgroup}
\label{fact:closed_semigroup_is_subgroup}
 A closed subsemigroup of $GL_d(\overline{\Q})$ is a linear algebraic group.   
\end{restatable}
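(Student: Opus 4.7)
The plan is to show that any Zariski-closed subsemigroup $S \subseteq GL_d(\overline{\mathbb{Q}})$ is automatically closed under inversion and contains the identity, using the Noetherian property of the Zariski topology on the affine variety $GL_d(\overline{\mathbb{Q}})$.

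First I would fix an arbitrary element $g \in S$ and consider the descending chain
\[
S \;\supseteq\; gS \;\supseteq\; g^2 S \;\supseteq\; g^3 S \;\supseteq\; \cdots
\]
of subsets of $GL_d(\overline{\mathbb{Q}})$. Each inclusion holds because $S$ is closed under multiplication and $g\in S$. The key observation is that each $g^kS$ is Zariski-closed in $GL_d(\overline{\mathbb{Q}})$: left multiplication by $g^k$ is a polynomial map whose inverse (left multiplication by $g^{-k}$) is also polynomial, so $L_{g^k}:GL_d(\overline{\mathbb{Q}})\to GL_d(\overline{\mathbb{Q}})$ is a Zariski homeomorphism, and $g^kS = L_{g^k}(S)$ is the image of the closed set $S$ under a homeomorphism.

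Next I would invoke Noetherianity of the Zariski topology on the affine variety $GL_d(\overline{\mathbb{Q}})$: the descending chain above must stabilize, so there exists $n$ with $g^nS = g^{n+1}S$. Applying the bijection $L_{g^{-n}}$ yields $S = gS$.

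From $S = gS$ I would derive both that $\id \in S$ and that $g^{-1} \in S$. For the first, note that $g \in S = gS$, so $g = gs$ for some $s \in S$; invertibility of $g$ in $GL_d(\overline{\mathbb{Q}})$ forces $s = \id$, hence $\id \in S$. For the second, since $\id \in S = gS$, we can write $\id = gt$ for some $t \in S$, which gives $t = g^{-1} \in S$. Since $g$ was arbitrary, $S$ is closed under inversion and contains the identity, so $S$ is a subgroup of $GL_d(\overline{\mathbb{Q}})$; being Zariski-closed in $GL_d(\overline{\mathbb{Q}})$, it is a linear algebraic group by definition.

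The argument has no real obstacle; the only subtlety to verify carefully is that $g^kS$ is genuinely closed in $GL_d(\overline{\mathbb{Q}})$, which follows because left multiplication by an element of $GL_d(\overline{\mathbb{Q}})$ is a morphism of varieties with polynomial inverse. Everything else is a routine application of the Noetherian descending chain condition combined with the cancellation property that holds in $GL_d(\overline{\mathbb{Q}})$.
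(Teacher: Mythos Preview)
Your proposal is correct and follows essentially the same approach as the paper: form the descending chain $S \supseteq gS \supseteq g^2S \supseteq \cdots$ of closed sets, use Noetherianity to stabilize, and then cancel by the invertible element to conclude $S = gS$ and hence $\id, g^{-1} \in S$. Your write-up is in fact slightly more careful than the paper's, since you explicitly justify why $g^kS$ is closed (via the homeomorphism $L_{g^k}$) and spell out both the identity and inverse conclusions.
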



\paragraph{Stable Matrices.}
The following lemma
on the Zariski closure of monoids generated by stable matrices is a variant of notions previously explored in~\cite[Proposition 11]{HrushovskiOP018}, adapted to our setting.
See~\Cref{app:background} for a detailed proof.

\begin{restatable}{lemma}{lemclosureisgroup}
 \label{lem:closure-is-group}
Let $M \in M_d(\mathbb Q)$ be  stable.  Then~$ \overline{\{M^n \, : \, n >0 \} \cap \{ N : \rank(N) = \rank(M)\}}$ is  a  group.
\end{restatable}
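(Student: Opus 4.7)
My approach is to exploit the stability of $M$ to restrict it to an invertible operator on $\im(M)$ and then invoke Fact~\ref{fact:closed_semigroup_is_subgroup}.

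First, the stability condition $\im(M) \cap \ker(M) = \{0\}$ together with rank-nullity yields the direct sum decomposition $\overline{\mathbb{Q}}^d = \im(M) \oplus \ker(M)$. In a basis adapted to this splitting, $M$ takes the block form $\begin{pmatrix} M' & 0 \\ 0 & 0 \end{pmatrix}$, where $M' \in \GL_r(\overline{\mathbb{Q}})$ is the restriction of $M$ to $\im(M)$ and $r = \rank(M)$; invertibility of $M'$ follows because any $v \in \im(M) \cap \ker(M')$ lies in $\im(M) \cap \ker(M) = \{0\}$. Consequently $M^n = \begin{pmatrix} (M')^n & 0 \\ 0 & 0 \end{pmatrix}$ for every $n \geq 1$, and the idempotent $\pi := \begin{pmatrix} I_r & 0 \\ 0 & 0 \end{pmatrix}$ (the projection onto $\im(M)$ along $\ker(M)$) serves as a two-sided identity for all such powers: $\pi M^n = M^n \pi = M^n$.

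Second, consider the algebraic embedding $\Phi: \GL_r(\overline{\mathbb{Q}}) \hookrightarrow M_d(\overline{\mathbb{Q}})$ given by $A \mapsto \begin{pmatrix} A & 0 \\ 0 & 0 \end{pmatrix}$. Through $\Phi$, the cyclic semigroup $\{M^n : n \geq 1\}$ is identified with the cyclic subsemigroup $\{(M')^n : n \geq 1\}$ of $\GL_r(\overline{\mathbb{Q}})$. Taking the Zariski closure of $\{(M')^n\}$ inside the linear algebraic group $\GL_r(\overline{\mathbb{Q}})$ produces a closed subsemigroup of $\GL_r$; by Fact~\ref{fact:closed_semigroup_is_subgroup}, this closure is itself a linear algebraic group $H$. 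Pushing forward through $\Phi$ yields the description of $\overline{\{M^n : n > 0\}}$ as a linear algebraic group with identity $\pi$. The equality $\overline{\langle M \rangle} = \overline{\{M^n : n > 0\}}$ then follows because $\langle M \rangle$ and $\{M^n : n > 0\}$ differ only by the identity matrix, whose role in the closure is taken over by $\pi$ under the group structure.

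The main obstacle is justifying that the closure computed in the ambient $M_d(\overline{\mathbb{Q}})$ coincides with $\Phi(H)$ and does not acquire spurious rank-deficient boundary points. This is resolved by working inside the Zariski-closed subvariety $V := \{X \in M_d : \pi X \pi = X\} \cong M_r$, in which the $\GL_r$-locus $V^\times$ is an open subset. By Noetherianity, the descending chain $\overline{\{(M')^n : n \geq k\}}$ stabilises at some $K$, and translation invariance under left multiplication by $M'$ (a Zariski homeomorphism of $V^\times$) together with continuity of matrix multiplication forces this stable closure to be a closed subsemigroup contained in $V^\times$, to which Fact~\ref{fact:closed_semigroup_is_subgroup} applies directly.
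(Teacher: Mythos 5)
Your route is the same as the paper's: use stability to split $\overline{\mathbb Q}^d=\im(M)\oplus\ker(M)$, put $M$ in block form with an invertible $r\times r$ block $M'$, and try to transport Fact~\ref{fact:closed_semigroup_is_subgroup} from the invertible block back to $M$. The genuine gap is exactly the step you flag and then claim to have resolved: the assertion that the stabilised closure $C_K=\overline{\{(M')^n:n\ge K\}}$, taken in $V\cong M_r(\overline{\mathbb Q})$, is contained in the invertible locus $V^\times$. Your Noetherian/translation argument does prove that the chain stabilises, that $M'C_K=C_K$, and that $C_K$ is a closed subsemigroup of $M_r(\overline{\mathbb Q})$; none of this excludes singular limit points, and no argument can, because the containment is false. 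Take $d=r=1$ and $M=M'=(1/2)$, which is stable: $\{M^n:n\ge K\}$ is an infinite subset of the affine line, so its Zariski closure is all of $\overline{\mathbb Q}$ for every $K$. This set contains $0$, is not contained in $\GL_1(\overline{\mathbb Q})$, and is not a group under multiplication. Hence Fact~\ref{fact:closed_semigroup_is_subgroup} cannot be applied to $C_K$, and the identification $\overline{\{M^n:n>0\}}=\Phi(H)$ on which your whole conclusion rests breaks down.

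This is not a defect of presentation that a cleverer chain argument will repair, because with the closure taken in the ambient $M_d(\overline{\mathbb Q})$ the asserted conclusion itself fails in this example; moreover your claim that $\overline{\langle M\rangle}$ and $\overline{\{M^n:n>0\}}$ "differ only by the identity, whose role is taken over by $\pi$" also fails, e.g.\ for $M=\diag(1,0)$, where the two closures are $\{I_2,M\}$ and $\{M\}$. What your construction does establish, and what the applications in Sections~\ref{sec:cover} and~\ref{sec:reach} actually need, is the weaker statement: the closure of $\{(M')^n:n\ge 1\}$ taken inside the variety $\GL_r(\overline{\mathbb Q})$ (equivalently, $C_1\cap V^\times$) is a closed subsemigroup of $\GL_r(\overline{\mathbb Q})$, hence by Fact~\ref{fact:closed_semigroup_is_subgroup} a linear algebraic group containing $I_r$; conjugating back, $\overline{\{M^n:n>0\}}$ contains the idempotent $\pi$ with $\pi M=M\pi=M$, and its intersection with the rank-$r$ locus is a group with identity $\pi$, though the full closure may be strictly larger and need not be a group. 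For comparison, the paper's own proof runs along the same block-diagonal reduction and silently assumes the same containment (that the closure of the invertible blocks stays inside $\GL_r$), so your instinct to isolate this obstacle was exactly right; the proposed resolution, however, does not close it, and the statement should be weakened as above before being used.
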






\section{Algorithm for Cover Closure }
\label{sec:cover}

In this section, we reduce  \textsc{Cover Closure} to the problem of computing 
 the Zariski closure of a finitely generated matrix monoid, for which there is a known algorithm~\cite{HrushovskiOPW23}.

\begin{proposition}
\label{prop:reduceNFA}
There is a reduction from  \textsc{Cover Closure}
 to the problem of computing the Zariski closure of a finitely generated matrix monoid.
\end{proposition}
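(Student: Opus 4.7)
The plan is to construct an NFA $\mathcal{A}$ over $\Sigma$ with $L_C(\omega) \subseteq L(\mathcal{A})$ and $\overline{\varphi(L_C(\omega))} = \overline{\varphi(L(\mathcal{A}))}$, and then compose with the standard reduction of $\overline{\varphi(L)}$ for a regular language $L$ to the Zariski closure of a finitely generated matrix monoid (via a block-matrix lift of $\varphi$ along the $\mathcal{A}$-transition structure whose monoid encodes $\varphi(L)$ up to a polynomial projection, handled via \Cref{fact_closure_closure}). The threshold throughout is $N := \eta(d)$ from Proposition~\ref{prop:eta}.

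I would define $\mathcal{A}$ with state set $\{0,1,\ldots,N-1,\top\}$, initial state $0$, every state accepting, and transitions that track the running $\omega$-value of the input exactly up to $N-1$: from $i < N$ on $\sigma$, move to $i + \omega(\sigma)$ when $0 \leq i + \omega(\sigma) \leq N-1$, to $\top$ when $i + \omega(\sigma) = N$, and have no transition when $i + \omega(\sigma) < 0$; from $\top$ every letter loops to $\top$. The inclusion $L_C(\omega) \subseteq L(\mathcal{A})$ is then immediate: any word whose prefixes all have non-negative weight either stays under $N$ throughout or first attains weight $N$ and is subsequently accepted in $\top$.

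The non-trivial inclusion $\overline{\varphi(L(\mathcal{A}))} \subseteq \overline{\varphi(L_C(\omega))}$ is where the work lies. Fix $w \in L(\mathcal{A}) \setminus L_C(\omega)$; by construction this word must pass through $\top$, so it admits a shortest prefix $x$ with $\omega(x) = N$ (well-defined as $\omega$-values lie in $\{-1,0,+1\}$). All prefixes of $x$ have non-negative $\omega$-value and $\omega(x) = \eta(d)$, so Proposition~\ref{prop:eta}(i) supplies a factor $u$ of $x$ with $\omega(u) > 0$ and $\varphi(u)$ stable. Writing $x = puq$ and $w = puqv$, a short counting argument (the positive weight of $u$ dominates any shortfall of prefixes that extend past $pu$) shows that $pu^{n+1}qv \in L_C(\omega)$ for all sufficiently large $n \geq n_0$. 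Now Lemma~\ref{lem:closure-is-group} makes $G := \overline{\langle \varphi(u) \rangle}$ a linear algebraic group; since left-translation by any element of $G$ is a Zariski-continuous bijection of $G$ onto itself, the closure of the tail $\{\varphi(u)^{n+1} : n \geq n_0\}$ equals $G$. Continuity of matrix multiplication then yields
\[
\varphi(p)\, G\, \varphi(q)\, \varphi(v) \;\subseteq\; \overline{\{\varphi(pu^{n+1}qv) : n \geq n_0\}} \;\subseteq\; \overline{\varphi(L_C(\omega))},
\]
and since $\varphi(u) \in G$ the left-hand side contains $\varphi(p)\varphi(u)\varphi(q)\varphi(v) = \varphi(w)$, as required.

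The step I expect to be most delicate is the density claim $\overline{\{\varphi(u)^{n+1} : n \geq n_0\}} = G$, which crucially uses that $G$ is a group rather than merely a closed semigroup; this is precisely what Lemma~\ref{lem:closure-is-group} provides from stability of $\varphi(u)$. The pumping argument for keeping $pu^{n+1}qv$ inside $L_C(\omega)$ is elementary once the factorisation from Proposition~\ref{prop:eta} is in hand, and the passage from a regular language to a finitely generated matrix monoid is a routine block-matrix automaton-to-morphism lift combined with \Cref{fact_closure_closure}.
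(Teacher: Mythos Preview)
Your proposal is correct and follows essentially the same approach as the paper: the same threshold automaton $\mathcal{A}$ (the paper writes $\infty$ for your $\top$), the same use of Proposition~\ref{prop:eta} to extract a positive-weight stable factor $u$ from the shortest prefix of weight $\eta(d)$, the same pumping argument to place $pu^{n+1}qv$ in $L_C(\omega)$, and the same appeal to Lemma~\ref{lem:closure-is-group} plus continuity of multiplication to conclude $\varphi(w)\in\overline{\varphi(L_C(\omega))}$. The only cosmetic difference is that the paper phrases the density step via the identity element $I_u$ of the group $\overline{\langle\varphi(u)\rangle}$ and invokes Fact~\ref{fact:closed_semigroup_is_subgroup}, whereas you argue directly that left-translation by $\varphi(u)^{n_0}$ is a Zariski homeomorphism of the group; both are valid and amount to the same thing.
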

\begin{proof}
Consider an instance of \textsc{Cover Closure}, given by monoid morphisms 
$\varphi:\Sigma^*\rightarrow M_d(\mathbb Q)$ and $\omega:\Sigma^*\rightarrow\mathbb Z$ with $\omega(\sigma)\in \{-1,0,1\}$ for all $\sigma\in \Sigma$.
The problem asks to compute 
the Zariski closure of $\varphi(L_C(\omega))$, where
$L_C(\omega)=\{ w\in \Sigma^* : \forall u \preceq w \cdot \omega(u) \geq 0\}$.  

We organize the proof around two key technical claims.  
At a high level, the first claim constructs an NFA $\mathcal{A}$ whose language $L(\mathcal A)$ is a superset of $L_C(\omega)$.
The second claim shows that this over-approximation is sufficiently tight such that $\varphi(L_C(\omega))$ and $\varphi(L(\mathcal A))$ have the same Zariski closure.

\begin{claim}
We can compute an NFA $\mathcal A$ such that 
\begin{enumerate}
\item $L_C(\omega) \subseteq  L(\mathcal A)$;
\item every word $w \in  L(\mathcal A) \setminus L_C(\omega)$ 
admits a factorization $w=w_1u w_2$, such that $\varphi(u)$ is stable,
$\omega(u)>0$, and~$\omega(w')\geq 0$ for every prefix $w'$ of $w_1u$.
\end{enumerate}
\label{clm:approxNFA}
\end{claim}

 \begin{proofclaim}
  The NFA $\mathcal A$ has set of states 
$Q:=\{0,1,\ldots, \eta(d)-1\} \cup \{\infty\}$ 
 for $\eta(d)=2^{d(d+3)}+1$, as defined in Proposition~\ref{prop:eta}.
State $0$ is initial and all states are accepting.
Intuitively,  the  states of $\mathcal A$ keep track the accumulated weight of the input word up to $\eta(d)-1$, thereafter treating the weight as infinite. 
The automaton blocks any symbol $\sigma$ with negative weight while in state $0$.  Reading a symbol with positive weight 
in state $\eta(d)-1$ leads to state $\infty$, which is a sink.
Formally,
the  transition relation $\Delta \subseteq Q\times \Sigma \times Q$ of $\mathcal A$ is given by
\begin{itemize}
    \item $(c,\sigma,c+\omega(\sigma)) \in \Delta$ if $c, c+\omega(\sigma) \in \{0,\ldots,\eta(d)-1\}$;
    \item $(\eta(d)-1,\sigma,\infty) \in \Delta$ if $\omega(\sigma)=1$;
    \item $(\infty,\sigma,\infty)\in \Delta$ for all $\sigma \in \Sigma$.
\end{itemize}

It is straightforward that $L_C(\omega) \subseteq L(\mathcal A)$, which is Item 1 of the claim. For Item 2, 
 let $w \in L(\mathcal A) \setminus L_C(\omega)$. 
 Then~$w$ is accepted along a run of $\mathcal A$ that enters the state $\infty$ at some point
 (otherwise the state of $\mathcal A$ exactly records the accumulated weight along the  word  $w$, ensuring that $w\in L_C(\omega)$).
For such a run, let $v$ be the prefix of $w$ leading to the first visit of state $\infty$.  Then 
 $\omega(v)=\eta(d)$.  By Proposition~\ref{prop:eta} the word $v$ admits a factorization~$v=w_1uw_2$ with $\varphi(u)$ stable and $\omega(u)>0$. 
 We furthermore have $\omega(w')\geq 0$ for every prefix $w'$ of $v$ by construction of $\mathcal A$.
 This yields the desired factorization of Item~2. 
 \end{proofclaim}

We next use Claim~\ref{clm:approxNFA} to show that $\varphi(L(\mathcal A))$ and $\varphi(L_C(\omega))$ have the same Zariski closure.

\begin{claim}
    If $\mathcal A$ satisfies Conditions 1 and 2 of~\Cref{clm:approxNFA} then
$\Zcl{\varphi(L_C(\omega))}=\Zcl{\varphi(L(\mathcal A))}$.
\label{clm:reduce}
\end{claim}

 \begin{proofclaim}
 Since $L_C(\omega) \subseteq L(\mathcal A)$,
 it suffices to show that for every word 
 $w \in L(\mathcal A)\setminus L_C(\omega)$ we have \[\varphi(w) \in \Zcl{\varphi(L_C(\omega))}\,.\]
 To this end, consider the factorization $w=w_1u w_2$ arising from Item~2 of~\Cref{clm:approxNFA}.  
 Since $\omega(u)>0$ and $\omega(w')\geq 0$ for all prefixes $w'$ of $w_1u$, there exists $n_0$ such that for all $n\geq n_0$ we have $w_1u^n w_2 \in L_C(\omega)$---specifically for $n$ sufficiently large it holds that $\omega(w')\geq 0$ for all prefixes $w'$ of $w_1u^nw_2$.

Since $\varphi(u)$ is stable, by Lemma~\ref{lem:closure-is-group}
$\Zcl{\langle \varphi(u)\rangle}$ is a group.  By~\Cref{fact:closed_semigroup_is_subgroup}, the subsemigroup  $\Zcl{\{\varphi(u^n) :  n\geq n_0\}}$  is the  group $\Zcl{\langle \varphi(u)\rangle}$.  Thereby contains a matrix $I_u$ such that $I_u\varphi (u)=\varphi (u)I_u=\varphi(u)$.
 The following multiplication map is Zariski continuous     \begin{align*}
    f:  M_d(\overline{\Q})  & \to M_d(\overline{\Q})\\
    A & \mapsto \varphi(w_1u) \,  A \, \varphi(w_2),
    \end{align*}
     Since $f(I_u) =  \varphi(w_1uw_2)=\varphi(w)$, we get 
    \begin{align}
    \label{eq:concov}
        \varphi(w) &\in f\left(\overline{\{\varphi(u)^n : n \ge n_0\}}\right) \nonumber \\
        & \subseteq  \, \overline{f\left(\{\varphi(u)^n : n \ge n_0\}\right)}= \Zcl{\{\varphi(w_1u^{1+n}w_2) :n\geq n_0\}} \\ 
        & \subseteq \overline{\varphi(L_C(\omega))}\, , \nonumber
    \end{align} 
   where the middle inclusion follows by continuity of~$f$.  
 \end{proofclaim}
This completes the reduction and its correctness proof.
\end{proof}

By~\cite[Theorem 3]{HrushovskiOPW23} there is a procedure to compute $\Zcl{\varphi(L(\mathcal A))}$ given an NFA $\mathcal A$ and 
morphism $\varphi:\Sigma^*\rightarrow M_d(\mathbb Q)$.  In combination with Proposition~\ref{prop:reduceNFA}, this gives  the decidability of the \textsc{Cover Closure} problem.

\section{Algorithm for Reach Closure}
\label{sec:reach}





In this section we give a procedure to decide \textsc{Reach Closure}.
There are two steps.  In the first step, we reduce \textsc{Reach Closure} to the problem \textsc{Zero Closure}, which we introduce below,
and in the second step we give a procedure for the problem \textsc{Zero Closure}.

\subsection{Reduction from Reach Closure to Zero Closure}
For a  morphism $\omega : \Sigma^* \rightarrow \mathbb Z$, define 
\[  L_{Z}(\omega) :=\{ w\in\Sigma^* : \omega(w)=0\} \]
to be the set of words of weight zero.  The \textsc{Zero Closure} problem asks, given monoid morphisms 
$\varphi:\Sigma^*\rightarrow M_d(\Q)$ and $\omega : \Sigma^* \rightarrow \mathbb Z$
 with $\omega(\sigma)\in \{-1,0,1\}$ for all $\sigma\in \Sigma$,  compute $\Zcl{\varphi(L_Z(\omega))}$.

\begin{proposition}
    \textsc{Reach Closure} reduces to \textsc{Zero Closure}.
\end{proposition}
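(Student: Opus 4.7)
The plan is to construct, from the given instance $(\varphi,\omega)$ of \textsc{Reach Closure}, a new instance $(\varphi',\omega')$ of \textsc{Zero Closure} on a possibly enlarged alphabet $\Sigma'$ (with letter-weights still in $\{-1,0,+1\}$), together with a polynomial projection $\Psi:M_{d'}(\overline{\mathbb Q})\to M_d(\overline{\mathbb Q})$, such that
\[\Zcl{\varphi(L_R(\omega))} \;=\; \overline{\Psi\bigl(\varphi'(L_Z(\omega'))\bigr)}.\]
Since $\Psi$ is a polynomial map and hence Zariski continuous, \Cref{fact_closure_closure} will turn any procedure for \textsc{Zero Closure} into one for \textsc{Reach Closure}.

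\paragraph*{Construction.} Inspired by Example~\ref{ex:simple}, the plan is to thread the prefix-nonnegativity requirement into the morphism $\varphi'$ via a block-matrix encoding of a finite-state controller. Concretely, I would introduce a finite NFA $\mathcal A$ analogous to the one built in \Cref{prop:reduceNFA}, with state set $Q=\{0,1,\ldots,\eta(d)-1,\infty\}$ tracking the counter up to the stability threshold $\eta(d)=2^{d(d+3)}+1$. Take $\Sigma'\subseteq Q\times\Sigma\times Q$ to be the transitions of $\mathcal A$, set $\omega'((q,\sigma,q')):=\omega(\sigma)$, and define
\[\varphi'\bigl((q,\sigma,q')\bigr)\;:=\;E_{q,q'}\otimes\varphi(\sigma)\;\in\;M_{|Q|d}(\mathbb Q),\]
where $E_{q,q'}$ denotes the standard matrix unit. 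The tensor form enforces state consistency: $\varphi'(t_1\cdots t_n)$ vanishes unless the $t_i$ compose to a valid run of $\mathcal A$, and in that case equals $E_{q_0,q_n}\otimes\varphi(w)$ where $w\in\Sigma^*$ is the spelled-out word. The projection $\Psi$ extracts the $(0,0)$-block, and intersecting with $L_Z(\omega')$ further enforces zero total weight. Together, these select exactly the matrices $\varphi(w)$ for $w\in L_R(\omega)$ whose counter stays in $[0,\eta(d)-1]$.

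\paragraph*{Main obstacle.} The central technical step is to argue that this bounded-counter restriction leaves the Zariski closure unchanged, i.e.,
\[\Zcl{\varphi(L_R(\omega))}\;=\;\Zcl{\varphi\bigl(L_R(\omega)\cap\{w:\max_{u\preceq w}\omega(u)<\eta(d)\}\bigr)}.\]
This is where the factorisation-tree machinery of Section~\ref{sec:factorisation} is decisive. A word $w\in L_R(\omega)$ whose counter exceeds $\eta(d)$ must contain, by Proposition~\ref{prop:eta}(i) applied to its ascending phase, a stable factor $u$ with $\omega(u)>0$, and, by Proposition~\ref{prop:eta}(ii) applied to its descending phase, a stable factor $v$ with $\omega(v)<0$. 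By Lemma~\ref{lem:closure-is-group}, $\overline{\langle\varphi(u)\rangle}$ and $\overline{\langle\varphi(v)\rangle}$ are linear algebraic groups, each containing a neutral element $I_u,I_v$. The plan is to balance-pump $u$ and $v$ in proportions preserving zero total weight and prefix-nonnegativity, and then invoke the neutral elements inside the Zariski-closed sets $\varphi(x)\,\overline{\langle\varphi(u)\rangle}\,\varphi(y)\,\overline{\langle\varphi(v)\rangle}\,\varphi(z)$ to exhibit $\varphi(w)$ as a Zariski limit of $\varphi(w')$ with $w'\in L_R(\omega)$ of bounded counter. The hard part will be to carry out this balanced pumping uniformly while simultaneously preserving zero weight, prefix non-negativity, and the bounded-counter condition—mirroring the conjugation/cancellation argument underlying the \textsc{Cover Closure} reduction (Claim~\ref{clm:reduce}), but with the added technicality that ascending and descending pumps must be coordinated.
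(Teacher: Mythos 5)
Your reduction fails at exactly the step you identify as the main obstacle: the claimed identity
\[
\Zcl{\varphi(L_R(\omega))}\;=\;\Zcl{\varphi\bigl(L_R(\omega)\cap\{w:\max_{u\preceq w}\omega(u)<\eta(d)\}\bigr)}
\]
is false in general. Note first that extracting the $(0,0)$-block of your encoding really does impose this truncation: in the \textsc{Cover}-style automaton the state $\infty$ is a sink, so any run from state $0$ back to state $0$ never visits $\infty$ and hence has its counter tracked (and bounded by $\eta(d)-1$) throughout. For a counterexample to the identity, take the stateless instance obtained from Example~\ref{ex:anbndyck} via the block-matrix reduction of \Cref{pro:VASStoMON}, with $\omega(a)=1$, $\omega(b)=-1$: every zero-weight, nonnegative-prefix word not of the form $a^nb^n$ is mapped to the zero matrix by the block encoding, so the relevant part of the image of $L_R(\omega)$ is $\{\varphi'(a^nb^n):n\in\mathbb N\}$, whose Zariski closure is a one-dimensional curve (the block entries are nonconstant polynomials in $n$), whereas the bounded-counter restriction keeps only the finitely many words with $n<\eta(d)$, whose image is a finite—hence Zariski-closed—set of points. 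No finite counter bound can be transparent for reachability languages; this is the same phenomenon the paper isolates in Example~\ref{ex:simple}.

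The pumping argument you sketch cannot close this gap because it runs in the wrong direction: pumping the stable factors $u$ (positive weight) and $v$ (negative weight) produces words $w_1u^mw_2v^nw_3$ with \emph{larger} counter excursions, so it can never exhibit the image of a high-excursion word as a Zariski limit of images of \emph{bounded}-excursion words—there is simply no mechanism for reducing the excursion of $w$. The paper uses the same ingredients (\Cref{prop:eta}, \Cref{lem:closure-is-group}, Zariski continuity) but with the approximation oriented the other way: its automaton has states $(\{0,\dots,\eta(d)-1\}\times\{0,1\})\cup\{\infty\}$, the counter is deliberately \emph{not} tracked while in $\infty$, and the language $L'=L(\mathcal A)\cap L_Z(\omega)$ is an \emph{over}-approximation of $L_R(\omega)$ (Claim~\ref{clm:approxzvass}); coordinated pumping-up of $u$ and $v$ then shows that every word of $L'\setminus L_R(\omega)$—which may hide negative-weight prefixes inside the $\infty$ zone—has its image inside $\Zcl{\varphi(L_R(\omega))}$ (Claim~\ref{clm:reduce2}). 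Your block-matrix device for turning the intersection with a regular language into a genuine \textsc{Zero Closure} instance is fine (it is the same trick as in \Cref{pro:VASStoMON}), but to make the reduction correct you must replace the bounded-counter under-approximation by such an over-approximation and redo the correctness argument accordingly.
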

\begin{proof}
Consider morphisms $\varphi: \Sigma^* \rightarrow M_d(\mathbb{Q})$ 
and $\omega:\Sigma^*\rightarrow \mathbb Z$.  Recall that the goal
of \textsc{Reach Closure} is to compute the closure of $\varphi(L_R(\omega))$. 
The following is the key technical claim.
\begin{claim}
We can compute a finite automaton $\mathcal A$ such that, writing
$L':=L(\mathcal{A})\cap L_Z(\omega)$,
we have $L_R(\omega) \subseteq L'$
and
every word $w \in  L'\setminus L_R(\omega)$ 
admits a factorization $w=w_1u w_2 v w_3$, where 
\begin{enumerate}
    \item $\varphi(u)$ is stable,   the weight of each  prefix  of $w_1u$ is nonnegative and $\omega(u)>0$;  
    \item $\varphi(v)$ is stable, the weight of each suffix of $vw_3$ is nonpositive and $\omega(v)<0$.  
\end{enumerate}
\label{clm:approxzvass} 
\end{claim}
\begin{proofclaim}
Automaton $\mathcal A$ has set of states 
\begin{gather}  
\label{eq:statesQZ}
Q:=(\{0,1,\ldots, \eta(d)-1\} \times \{0,1\}) \cup \{\infty\} \,  , 
\end{gather}
where $\eta(d)=2^{d(d+3)}+1$ is as defined in Proposition~\ref{prop:eta}.
Intuitively, the states of $\mathcal A$ keep track of the total weight of the word read so far, up to weight $\eta(d)-1$.
While words in the language
$L':=L(\mathcal{A})\cap L_Z(\omega)$ can have negative-weight prefixes, the construction of $\mathcal A$
ensures that any negative-weight prefix of a word  $w\in L'$ 
is sandwiched between the respective shortest and longest prefixes of $w$ of weight $\eta(d)$.
This is the key to proving the decomposition featured in the claim.

Formally, automaton $\mathcal A$
has  alphabet $\Sigma$, set of states $Q$ as shown in~\eqref{eq:statesQZ},
initial state $(0,0)$, and accepting states~$\{(0,0),(0,1)\}$.
The set $\Delta \subseteq Q\times \Sigma\times Q$ of transitions is defined such that
\begin{itemize}
    \item for all  $\sigma \in \Sigma$ and  $b\in\{0,1\}$,
    there is a transition $((c,b),\sigma,(c+\omega(\sigma),b))$  provided that
        $c,c+\omega(\sigma) \in \{0,\ldots,\eta(d)-1\}$;
           \item for all $\sigma \in \Sigma$
           there is a transition $(\infty,\sigma,\infty)$;
    \item for all $\sigma\in \Sigma$ with $\omega(\sigma)=1$ there is a transition
$((\eta(d)-1,0),\sigma,\infty)$;
\item for all $\sigma\in\Sigma$ with $\omega(\sigma)=-1$ there is transition $(\infty,\sigma,(\eta(d)-1,1))$.
\end{itemize}

 We first argue that  $L_R(\omega) \subseteq  L'$, that is,
$\mathcal A$ accepts any word  of zero weight all of whose prefixes have non-negative weight.  
Given a word $w\in L_R(\omega)$, suppose first 
that there is no prefix with weight $\eta(d)$.  Then there is a
 run of~$\mathcal A$ on $w$ that never enters the state~$\infty$ and exactly records the accumulated weight along the word~$w$. 
Otherwise,
let $x$ and $y$ be, respectively, the shortest and longest prefixes of $w$ with weight $\eta(d)$.
Then 
there is an accepting run of $\mathcal A$ on $w$ in which the state is $(c,0)$ after reading 
any proper prefix of $x$ of weight~$c<\eta(d)$; the state is $\infty$
after reading any prefix of $w$ containing $x$ and contained in $y$; the state is $(c,1)$ after reading any prefix of $w$ of weight~$c$ containing $y$.

We next prove Items 1 and 2 of the claim.
Consider a word $w \in L' \setminus L_R(\omega)$ and an accepting run of $\mathcal A$ on $w$. 
 Since $w\notin L_R(\omega)$ this run ends in $(0,1)$. Hence, $w$ has a shortest prefix $x$ with weight~$\eta(d)$.
Applying Item~1 of Proposition~\ref{prop:eta}, we obtain a decomposition $x=x_1u x_2$ where $\varphi(u)$ is stable and $\omega(u)>0$.
We moreover have that~$\omega(w')\geq 0$ for any prefix $w'$ of $x$ since the
construction of $\mathcal A$ prevents the weight of such prefix being negative.  Let $y$ be the shortest suffix of $w$ with weight~$-\eta(d)$.
Applying Item~2 of Proposition~\ref{prop:eta}, we have a decomposition~$y=y_1 vy _2$ where $\varphi(v)$ is stable and $\omega(v)<0$.
In addition,  $\omega(w')\leq 0$ for any suffix $w'$ of $y$ by construction of~$\mathcal A$.
This concludes the proof of the claim.
\end{proofclaim}

The following  shows how~\Cref{clm:approxzvass} can be used to compute the Zariski closure.

\begin{claim}\label{clm:reduce2}
    If $\mathcal A$ satisfies Conditions 1--2 of~\Cref{clm:approxzvass} then
$\Zcl{\varphi(L_R(\omega))}=\Zcl{\varphi(L')}$.
\end{claim}
\begin{proofclaim}
Since $L_R(\omega)\subseteq L'$,
it suffices to show that for every word  $w \in L' \setminus L_R(\omega)$ we have $\varphi (w) \in \Zcl{\varphi(L_R(\omega))}$.
For each such word~$w$,
 consider the factorization 
 $w=w_1u w_2 v w_3$ arising from Claim~\ref{clm:approxzvass}. 

Since $\omega(w)=0$, 
$\omega(u)>0$,  and $\omega(v)< 0$, there are infinitely many $m,n \in \mathbb N$ such that
\[\omega(w_1 u^m w_2 v^n w_3)=0.\]
Moreover since 
every prefix of $w_1u$ has nonnegative weight and every suffix of $vw_3$ has nonpositive weight,
for~$m,n$ sufficiently large we have that $w_1 u^m w_2 v^n w_3 \in L_R(\omega)$. We may thus define an infinite semigroup 
\[S:=\{ (m,n)\in \mathbb N^2 : w_1 u^{m+1} w_2 v^{n+1} w_3 \in L_R(\omega) \}.\]
Since $\varphi(u)$ and $\varphi(v)$ are stable, the matrix 
\[M:=\begin{pmatrix}
    \varphi(u)^m & 0\\
    0& \varphi(v)^n
\end{pmatrix}\]
is also stable. 
By
\Cref{lem:closure-is-group},  the Zariski-closed sub-semigroup $\overline{\langle M \rangle}$ is a group.
Hence, 
\[\overline{\{(\varphi(u)^m,\varphi(v)^n) :(m,n)\in S \}} \subseteq M_d(\overline{\Q})\times M_d(\overline{\Q})\]
is also a group,
and contains a matrix $(I_u,I_v)$
where $I_u$ is such that $I_u\varphi(u)=\varphi(u)I_u=\varphi(u)$ and  $I_v$ is such that $I_v\varphi(v)=\varphi(v)I_v=\varphi(v)$.
The following multiplication map is Zariski continuous     \begin{align*}
    f:  M_d(\overline{\Q}) \times M_d(\overline{\Q})  & \to M_d(\overline{\Q})\\
    (A,B) & \mapsto \varphi(w_1 \, u) \,  A \,\varphi(w_2) \,  B\, \varphi(v \, w_3),
    \end{align*}
     Since $f(I_u,I_v) =  \varphi(w_1 \, u\, w_2\, v\, w_3)=\varphi(w)$, by a similar argument to~\eqref{eq:concov}, we get 
 \[\varphi(w)=\varphi(w_1uw_2vw_3) \in \Zcl{\{\varphi(w_1u^{m+1}w_2v^{n+1}w_3) :(m,n) \in S\}} \subseteq \Zcl{\varphi(L_R(\omega))}\,.\]
\end{proofclaim}

 Thus, the reduction and its correctness proof follows from \Cref{sec:VASStoMON}.
\end{proof}

\subsection{Computing the Zero-Weight Language}
We show in this section how to compute the Zariski
closure~$\Zcl{\varphi(L_{Z}(\omega))}$.  The key idea is to represent
$\Zcl{\varphi(L_{Z}(\omega))}$ as the image of the Zariski closure of a
regular set over a product alphabet.
The proof uses the following bounded variant of $L_Z(\omega)$:
 \[ L_{BZ}(\omega) := \{ w\in\Sigma^* : \omega(w)=0 \wedge \forall u\preceq w \, (-\eta(d) \leq \omega(u)\leq \eta(d))\} \, .\]
 This is the language of words of weight $0$ such that the weight of every prefix 
 lies in~$\{-\eta(d),\ldots,\eta(d)\}$.


\begin{proposition}
The \textsc{Zero Closure} problem is decidable, that is, there is a procedure that given morphisms $\omega : \Sigma^*\rightarrow \mathbb
Z$ and $\varphi:\Sigma^* \rightarrow M_d(\Q)$, computes $\Zcl{\varphi(L_{Z}(\omega))}$.
\end{proposition}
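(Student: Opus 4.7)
The plan is to exhibit $\Zcl{\varphi(L_Z(\omega))}$ as the image, under a polynomial map $f$, of the Zariski closure of $\psi(R)$, where $R$ is a regular language over a product alphabet $\Gamma\subseteq(\Sigma^*)^k$ and $\psi:\Gamma^*\to M_{2d}(\mathbb Q)$ is a morphism into block matrices. Since $\Zcl{\psi(R)}$ is computable by~\cite[Theorem~3]{HrushovskiOPW23} and polynomial maps commute with Zariski closure by~\Cref{fact_closure_closure}, this will yield a procedure for \textsc{Zero Closure}.

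I would begin by noting that the bounded variant $L_{BZ}(\omega)$ is regular: a finite automaton tracking the running prefix weight through $\{-\eta(d),\ldots,\eta(d)\}$ and blocking any transition that would exit this window recognises it. The inclusion $\Zcl{\varphi(L_{BZ}(\omega))}\subseteq \Zcl{\varphi(L_Z(\omega))}$ is immediate; the substantive task is to approximate words $w\in L_Z(\omega)\setminus L_{BZ}(\omega)$ whose prefix weights escape the window. For such a word I would follow the pattern of~\Cref{clm:approxzvass}: locate the shortest prefix of weight $\eta(d)$ and invoke~\Cref{prop:eta}(i) to peel off a stable positive-weight factor $u$, and symmetrically locate the shortest suffix of weight $-\eta(d)$ and invoke~\Cref{prop:eta}(ii) to peel off a stable negative-weight factor $v$, obtaining $w = w_1\,u\,w_2\,v\,w_3$ with $\varphi(u)$ and $\varphi(v)$ stable.

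Setting $p=\omega(u)>0$ and $q=-\omega(v)>0$, the coupled pumped family $\{w_1\,u^{1+kq}\,w_2\,v^{1+kp}\,w_3 : k\in\mathbb N\}$ lies entirely in $L_Z(\omega)$, since the added weight $k q\,\omega(u) + k p\,\omega(v) = kpq - kpq = 0$. By~\Cref{lem:closure-is-group} the matrix $\mathrm{diag}(\varphi(u)^q,\varphi(v)^p)\in M_{2d}(\mathbb Q)$ is stable and so generates a Zariski-closed group; hence the closure of the pumped family is the image under the polynomial map $\mathrm{diag}(X,Y)\mapsto \varphi(w_1)\,\varphi(u)\,X\,\varphi(w_2)\,\varphi(v)\,Y\,\varphi(w_3)$ of the cyclic closure $\Zcl{\Gen{\mathrm{diag}(\varphi(u)^q,\varphi(v)^p)}}$. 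The product alphabet $\Gamma$ then packages the ingredients $(w_1, u, w_2, v, w_3)$ of such decompositions into single letters, with the regular language $R$ imposing the bounded-weight preamble/middle/postamble structure; $\psi$ sends each letter to a block-diagonal matrix with the ``positive pump side'' in one block and the ``negative pump side'' in the other; and $f$ extracts and multiplies the blocks appropriately.

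Correctness reduces to the two-sided sandwich $\varphi(L_Z(\omega))\subseteq f(\Zcl{\psi(R)})$ and $f(\psi(R))\subseteq \Zcl{\varphi(L_Z(\omega))}$, which by~\Cref{fact_closure_closure} forces the Zariski closures to coincide. The main obstacle will be the first inclusion: it must handle words in $L_Z(\omega)$ whose weight profile contains multiple long up and down excursions rather than a single up-down pair. This forces either an iterated application of the decomposition (with a termination argument ensuring that the resulting structure remains regularly describable) or enlarging $\Gamma$ and $R$ so as to accept the full orbit of decomposition patterns natively, all while $f$ remains uniformly defined across patterns.
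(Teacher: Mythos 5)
Your skeleton matches the paper's strategy (bounded regular core $L_{BZ}(\omega)$, the decomposition via \Cref{prop:eta} into stable factors $u,v$ of opposite weight, the coupled pumping with exponents proportional to $-\omega(v)$ and $\omega(u)$, and the stable-group/identity-element argument via \Cref{lem:closure-is-group} and \Cref{fact_closure_closure}, feeding into the algorithm of~\cite{HrushovskiOPW23}). The per-word pumping argument in your third paragraph is essentially sound (modulo the small slip that the closure of the pumped family is the closure of the image of the cyclic closure, not the image itself). But the step you defer in your last paragraph is not a technicality to be patched later: it is the heart of the proof, and neither of your two suggested fixes works as stated. A word $w\in L_Z(\omega)\setminus L_{BZ}(\omega)$ may have unboundedly many excursions of unbounded height, so ``iterated application of the decomposition'' yields no uniform bound on the number of pieces, and ``packaging the ingredients $(w_1,u,w_2,v,w_3)$ into single letters'' makes the alphabet infinite, since these components have unbounded length. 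What is missing is a single finite automaton, independent of $w$, whose decoded language lies inside $L_Z(\omega)$ and whose closure nevertheless captures every such $w$.

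The paper's resolution exploits that membership in $L_Z(\omega)$ is a \emph{commutative} condition (total weight zero, no prefix constraint), so the order in which the automaton consumes letters need not match the order in which they appear in the decoded word. It uses the alphabet $\Gamma=(\Sigma\cup\{\varepsilon\})^4\setminus\{\varepsilon\}^4$, the map $\mathrm{flat}$ that concatenates the four tracks, and an automaton whose states are counter values in $\{-2\eta(d),\dots,2\eta(d)\}$. For $w=x_1ux_2\,w_1\,y_2vy_1\,w_2$ it reads, in three phases, an interleaving in which extra copies of $u$ (on track~1) and of $v$ (on track~3) are inserted precisely when the running counter drifts to $-\omega(u)$ or $-\omega(v)$; the counter thus stays bounded even though the prefix weights of $w_1,w_2$ are wild, while $\mathrm{flat}$ reassembles the output as $x_1u^{1+m}x_2w_1y_2v^{1+n}y_1w_2$ with $m\omega(u)+n\omega(v)=0$ (equation~\eqref{eq:spec}), a genuine member of $L_Z(\omega)$. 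A further point your sketch does not supply: one word of this shape is not enough for the stable-group argument---one needs an infinite family with proportional exponents, which the paper obtains by constructing a second word $U$ with $\mathrm{prod}(U)=(u^m,\varepsilon,v^n,\varepsilon)$ and using that $L(\mathcal A)$ is a semigroup (initial state $=$ accepting state), so that $WU^k\in L(\mathcal A)$ for all $k$. Without this interleaving-plus-semigroup construction (or a genuine substitute), the containment $\Zcl{\varphi(L_Z(\omega))}\subseteq\Zcl{\varphi(L_{BZ}(\omega)\cup\mathrm{flat}(L(\mathcal A)))}$ is not established, and your reduction to the regular case remains incomplete at its central step.
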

\begin{proof}
Consider the alphabet $\Gamma = {(\Sigma\cup\{\varepsilon\})^4}\setminus  \{\varepsilon\}^4$.
Define the map $\mathrm{prod}:\Gamma^* \rightarrow (\Sigma^*)^4$ by
  \[ \mathrm{prod}((a_1,b_1,c_1,d_1) \cdots (a_m,b_m,c_m,d_m)) :=
(a_1\cdots a_m,b_1\cdots b_m ,c_1 \cdots c_m ,d_1 \cdots d_m) \] 
and define $\mathrm{flat}:\Gamma^*\rightarrow \Sigma^*$ by 
  \[ \mathrm{flat}((a_1,b_1,c_1,d_1) \cdots (a_m,b_m,c_m,d_m)) :=
a_1\cdots a_m\,b_1\cdots b_m \, c_1 \cdots c_m \, d_1 \cdots d_m\, .\] 
For a word $W \in \Gamma^*$ such that all but one component of $\mathrm{prod}(W)$ is $\varepsilon$, 
we may safely identify $W$ with $\mathrm{prod}(W)$, since the latter uniquely determines the former.

  We construct  a finite-state automaton $\mathcal A$ over the
alphabet $\Gamma$ such that
\begin{gather}
\Zcl{\varphi(L_{Z}(\omega))} = \Zcl{\varphi(L_{BZ}(\omega) \cup
  \mathrm{flat}(L(\mathcal A))) } \, .
\label{eq:claim}
\end{gather}
 Since $L_{BZ}(\omega)$ is a regular language,  the closure of its  image under $\varphi$  is computable.  
\begin{claim}
 $\overline{\varphi(\mathrm{flat}(L(\mathcal A)))}$ is computable.
 \end{claim}
 \begin{proofclaim}
 Consider the map $\psi$ that sends a word $w =(w_1, w_2, w_3, w_4) \in L(\mathcal A)$ to the block diagonal matrix $\diag ( \varphi(w_1), \varphi(w_2), \varphi(w_3), \varphi(w_4))$. By regularity of $L(\mathcal{A})$, it follows that  $\overline{\psi(L(\mathcal A))}$  is computable.
 The computability of $\overline{\varphi(\mathrm{flat}(L(\mathcal A)))}$ follows from a simple observation---namely, the fact that,  $\overline{\varphi(\mathrm{flat}(L(\mathcal A)))}$ is the closure of the image of  $\overline{\psi(L(\mathcal A))}$ under the polynomial map  that multiplies the diagonal blocks.
  \end{proofclaim}

The set of states of $\mathcal{A}$ is $\{-2\eta(d),\ldots,2\eta(d) \}$.  The state $0$ is initial and is the unique
accepting state.  There is a transition from $q\in Q$ to $q'\in Q$ with label
$(a,b,c,d) \in \Gamma$ precisely when $q'=q+\omega(abcd)$.  By
construction, for all words $W\in L(\mathcal A)$,
after reading $W$ the state
 of $\mathcal A$ is $\omega(\mathrm{flat}(W))=0$.  
Thus $\mathrm{flat}(L(\mathcal A)) \subseteq L_{Z}(\omega)$, which
establishes the right-to-left inclusion in~\eqref{eq:claim}.

It remains to prove the left-to-right inclusion in~\eqref{eq:claim}.
Suppose that $w \in L_{Z}(\omega)$.
We show that if $w\notin L_{BZ}(\omega)$ then $\varphi(w) \in
\overline{\varphi(\mathrm{flat}(L(\mathcal A)))} $.
To this end, we will construct two
words~$W,U\in \Gamma^*$ such that
$WU^k \in L(\mathcal A)$ for all~$k\in \mathbb N$, and moreover
    \begin{equation}
    \label{eq-flataim}
        \varphi(w) \in \overline{\varphi(\{\mathrm{flat}(WU^k) \, :\, k\in \mathbb{N} \})} \,.
    \end{equation}

By the assumption that $w \not\in L_{BZ}(\omega)$ there is a prefix $x \preceq w$
such that $\omega(x) \in \{-\eta(d),\eta(d)\}$.  Let $x$ be the shortest
such prefix.  We suppose that $\omega(x)=\eta(d)$;  the argument for
the case $\omega(x)=-\eta(d)$ is symmetric.  Then $w$ admits a decomposition
$w=xw_1yw_2$, shown in~\Cref{fig:LZ-decompsoitionw}, where  
\begin{itemize}
    \item $xw_1y$ is the shortest prefix of $w$ containing $x$ with weight $0$,  
    \item $y$ is the shortest suffix of $xw_1y$ with weight $-\eta(d)$,  and 
    \item $\omega(w_1)=\omega(w_2)=0$.
\end{itemize}


\begin{figure}[t]
    \centering
    \begin{tikzpicture}[scale=1.2]
  
  \draw[dashed, thick] (0,1)node[left] {$\eta(d)$~~}  -- (7.5,1) ;
  \draw[dashed, thick] (0,0) node[left] {$0$~~~~} -- (7.5,0) ;
  \draw[dashed, thick] (0,-1) node[left] {$-\eta(d)$~~} -- (7.5,-1) ;

 \draw[thick, blue, smooth] 
    plot coordinates {
      (0,0)
      (.8, .5)
      (1.2, -.7)
      (1.8,1.2)      
      (2.2,1.8)      
      (2.8,1.4)      
      (3.4,2.0)      
      (4.0,0.5)      
      (4.6,0.0)      
      (5.2,-1.5)     
      (5.8,0.8)      
      (6.4,-1.8)     
      (7.0,0.6)      
      (7.5,0.0)      
    };

 \draw[dashed, gray] (0,0) -- (0,-2.5);
\fill[black] (0,-2.5) circle (1.5pt);

\draw[dashed, gray] (1.72,1) -- (1.72,-2.5);
\fill[black] (1.72,-2.5) circle (1.5pt);

\draw[dashed, gray] (3.8,1) -- (3.8,-2.5);
\fill[black] (3.8,-2.5) circle (1.5pt);

\draw[dashed, gray] (4.6,0) -- (4.6,-2.5);
\fill[black] (4.6,-2.5) circle (1.5pt);

\draw[dashed, gray] (7.5,0) -- (7.5,-2.5);
\fill[black] (7.5,-2.5) circle (1.5pt);

\fill[blue] (0, 0) circle (2pt);
 \fill[blue] (1.72, 1) circle (2pt);
 \fill[blue] (3.8, 1) circle (2pt);
 \fill[blue] (4.6, 0) circle (2pt);
 \fill[blue] (7.5, 0) circle (2pt); 

 \draw[<->, thick] (0.1,-2.5) -- (1.62,-2.5) node[midway, above] {$x$};

 \draw[<->, thick] (1.82,-2.5) -- (3.7,-2.5) node[midway, above] {$w_1$};

  \draw[<->, thick] (3.9,-2.5) -- (4.5,-2.5) node[midway, above] {$y$};

   \draw[<->, thick] (4.7,-2.5) -- (7.4,-2.5) node[midway, above] {$w_2$};
   
\end{tikzpicture}
    \caption{Decomposition $w=xw_1yw_2$, where $x$ is the shortest prefix of $w$ with weight in $\{-\eta(d),\eta(d)\}$, the infix $xw_1y$ is the shortest weight-zero prefix of $w$ that extends $x$, and $y$ is the shortest suffix of $xw_1y$ with weight $\omega(y)=-\omega(x)$.}
    \label{fig:LZ-decompsoitionw}
\end{figure}
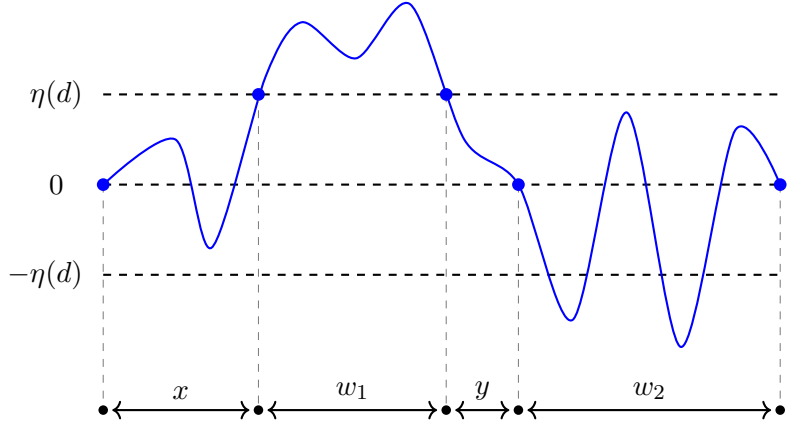

Applying Proposition~\ref{prop:eta} we have decompositions
$x=x_1ux_2$ and $y=y_2vy_1$ such that $\varphi(u)$ and $\varphi(v)$ are
stable, $\omega(u)>0$, and $\omega(v)<0$.

In summary, we have
\[ w = \underbrace{x_1\,  u \, x_2}_{\eta(d)}  \, \underbrace{w_1}_0
  \, \underbrace{y_2 \, v \, y_1}_{-\eta(d)} \, \underbrace{ w_2}_0 \]
with $\varphi(u),\varphi(v)$ stable, $\omega(u)>0$, and $\omega(v)<0$. 
From this decomposition of $w$ we construct a word $W \in L(\mathcal A)$ such
that
\begin{gather}
  \mathrm{flat}(W)=x_1\, u^{1+m}\, x_2\, w_1\, y_2\, v^{1+n}\,  y_1\,
  w_2 \, ,\label{eq:spec}
  \end{gather}
where $m,n\in \mathbb N$.
  We construct $W=W_1W_2W_3$ as the concatenation of three words $W_1,W_2,W_3 \in \Gamma^*$
such that 
\begin{enumerate}
\item $\mathrm{prod}(W_1) = (x_1u, x_2, y_2v, y_1)$;
\item  $\mathrm{prod}(W_2) = (u^{m_0} , w_1 , v^{n_0}, w_2)$;
\item $\mathrm{prod}(W_3) = (u^{m_1}, \varepsilon, v^{n_1}, \varepsilon)$
\end{enumerate}
for $m_0,m_1,n_0,n_1\in \mathbb N$.
Below we  specify the words $W_1,W_2,W_3$ and simultaneously describe 
an accepting run of~$\mathcal A$ over $W$.

We divide the run
into three phases, according to the factor that is being read.

\begin{itemize}
  \item
    \textbf{Phase 1.}
    In this phase we read $W_1$.
To unambiguously specify the word $W_1$ we stipulate that $W_1 = (x_1u,\varepsilon,\varepsilon,\varepsilon) \cdot (\varepsilon,x_2,\varepsilon,\varepsilon) \cdot
(\varepsilon,\varepsilon,y_2v,\varepsilon) \cdot (\varepsilon,\varepsilon,\varepsilon,y_1)$.
Since $x_1ux_2 y_2vy_1 \in L_{BZ}(\omega)$, the
automaton $\mathcal A$ can read $W_1$, starting and ending in state 0
with all intermediate states lying in $\{-\eta(d),\ldots,\eta(d)\}$; see~\Cref{fig:LZ-decompsoitionw}.

\item 
  \textbf{Phase 2.}  In this phase we read $W_2$.
  At the start of Phase 2 the automaton is in state $0$.
Suppose that~$w_1=\sigma_1\cdots \sigma_{\ell}$
and $w_2 = \xi_1\cdots\xi_k$.
The aim is to read
the sequence of symbols $(\varepsilon,\sigma_i,\varepsilon,\varepsilon)$ for
$i=1,\ldots,\ell$ followed by the sequence of symbols
$(\varepsilon,\varepsilon,\varepsilon,\xi_i)$ for $i=1,\ldots,k$.
At the same time, we aim to keep the 
automaton state in the bounded range
$\{-2\eta(d),\ldots,2\eta(d)\}$.  This can be done by interleaving the 
input with factors $(u,\varepsilon,\varepsilon,\varepsilon)$ and
$(\varepsilon,\varepsilon,v,\varepsilon)$ according to the following rule:
If ever the state equals $-\omega(u)$ then read  $(u,\varepsilon,\varepsilon,\varepsilon)$
to bring the state back to $0$.  Likewise,
if ever the state equals $-\omega(v)$, then read
$(\varepsilon,\varepsilon,v,\varepsilon)$ to bring the 
state back to $0$.  
See~\Cref{fig:Pahse2} for an illustration  of such an interleaving.
The numbers $m_0,n_0$ in the description of $W_2$
record the respective number of times we consume the factors 
$(u,\varepsilon,\varepsilon,\varepsilon)$
and $(\varepsilon,\varepsilon,v,\varepsilon)$ in Phase 2.
\item
  \textbf{Phase 3.}
  In this phase we read $W_3$.  
At the start of Phase 3 the state of the automaton $\mathcal A$ is in
the range~$\{-\eta(d),\ldots,\eta(d)\}$.
Moreover, since $\omega(w)=0$, we have 
\[\omega(\mathrm{flat}(W_1W_2)) = m_0 \omega(u)+n_0\omega(v)\,.\]
Now the equation $m\omega(u)+n\omega(v)=0$ has infinitely many solutions in positive integers~$m,n$.  Pick one such solution
$(m,n)$ such that $m_0\leq m$ and $n_0\leq n$. \\
Writing $m_1:=m-m_0$ and $n_1:=n-n_0$, our
aim in Phase~3 is to read $W_3$ such that 
\[\mathrm{prod}(W_3) = (u^{m_1},\varepsilon,v^{n_1},\varepsilon)\,.\]
This choice of $W_3$ ensures that $\omega(\mathrm{flat}(W_1W_2W_3))=0$.
We can read $W_3$ while staying in the range~$\{-2\eta(d),\ldots,2\eta(d)\}$ using the following greedy procedure.
Suppose that so far we have read $W_3' \in \Gamma^*$ such that
$\mathrm{prod}(W_3')= (u^{m'},\varepsilon,v^{n'},\varepsilon)$ with $m'\leq m_1$ and $n'\leq n_1$.  If $m'<m_1$  and
$n'<n_1$ then we read~$(u, \varepsilon, \varepsilon, \varepsilon)$ if
the current state is negative and we read
$(\varepsilon, \varepsilon, v,\varepsilon)$ if the current state is
positive.  If~$m'<m_1$ and $n'=n_1$ then we read
$(u, \varepsilon, \varepsilon, \varepsilon)$.  If $m'=m_1$ and $n'<n_1$
then we read $(\varepsilon, \varepsilon, v,\varepsilon)$. 
\end{itemize}

\begin{figure}[t]
    \centering
    \begin{tikzpicture}[scale=1.2]
  
  \draw[dashed, thick] (0,2) node[left] {$2\eta(d)$~~} -- (9.5,2) ;
  \draw[dashed, thick] (0,0) node[left] {$0$~~~~}  -- (9.5,0) ;
  \draw[dashed, thick] (0,-2) node[left]{$-2\eta(d)$~~} -- (9.5,-2);

    \draw[dashed, thick] (0,.8) node[left] {$-\omega(v)$~~} -- (9.5,.8) ;
  \draw[dashed, thick] (0,-.6) node[left] {$-\omega(u)$~~} -- (9.5,-.6) ;

  \draw[thick, blue, smooth] 
    plot coordinates {
      (0,0)
      (.4, .6)
      (.8, -.5) 
      (1.3,.8)
    };

\draw[line width=6pt, green!50, opacity=0.3, smooth] 
  plot coordinates { 
    (1.3,.8)
    (1.5, 1.2)
    (1.7,0)
  };

  \node[green!50!black, above] at (1.5, 1.3) {$v$};

    \draw[thick, green!70!black, smooth] 
    plot coordinates { 
      (1.3,.8)
      (1.5, 1.2)
      (1.7,0)
    };

    \draw[thick, blue, smooth] 
    plot coordinates {
      (1.7,0)
      (2.2,.5)
      (3,-.6)
    };

    \draw[thick, magenta, smooth] 
    plot coordinates {
      (3,-.6)
      (3.5,-.4)
      (4,-1.5)
      (4.3,0)
    };

    \draw[line width=6pt, magenta!50, opacity=0.3, smooth] 
  plot coordinates {
    (3,-.6)
    (3.5,-.4)
    (4,-1.5)
    (4.3,0)
  };
  
  \node[magenta, above] at (3.7, -1.6) {$u$};

  \draw[thick, blue, dotted, smooth] 
    plot coordinates {
      (4.3,0)
      (6, .6)
    };

  \draw[thick, blue, smooth] 
    plot coordinates {
      (6, .6)
      (6.2,.4)
      (6.5,.7)
       (6.9,-.6)
    }; 

     \draw[thick, magenta, smooth] 
    plot coordinates {
      (6.9,-.6)
      (7.4,-.4)
      (7.9,-1.5)
      (8.2,0)
    };

      \draw[line width=6pt, magenta!50, opacity=0.3, smooth] 
  plot coordinates {
    (6.9,-.6)
      (7.4,-.4)
      (7.9,-1.5)
      (8.2,0)
  };
  
  \node[magenta, above] at (7.6, -1.6) {$u$};

      \draw[thick, blue, smooth] 
    plot coordinates {
      (8.2, 0)
      (8.5,-.2)
      (9.2,.4)
    };

    \fill[green!70!black] (1.3,.8) circle (2pt);
    \fill[green!70!black] (1.7,0) circle (2pt);

     \fill[magenta] (3,-.6) circle (2pt);
    \fill[magenta] (4.3,0) circle (2pt);
    \fill[magenta] (6.9,-.6) circle (2pt);
    \fill[magenta] (8.2,0) circle (2pt);

    \fill[blue] (0,0) circle (2pt);
    \fill[blue] (9.2,.4) circle (2pt) node[right]{$m_0 \, \omega(u)+n_0\, \omega(n)$};

  \end{tikzpicture}
    \caption{
    Depiction of the word $W_2$, where the vertical axis represents the prefix weight.
    The parts in blue represent symbols $(\varepsilon,\sigma_i,\varepsilon,\varepsilon)$ or
    $(\varepsilon,\varepsilon,\varepsilon,\xi_i)$.  These are interleaved 
    with words $(u,\varepsilon,\varepsilon,\varepsilon)$ and $(\varepsilon,\varepsilon,v,\varepsilon)$
     so as to 
    keep the weight in the range $\{-2\eta(d),\ldots,2\eta(d)\}$.}
    \label{fig:Pahse2}
\end{figure}
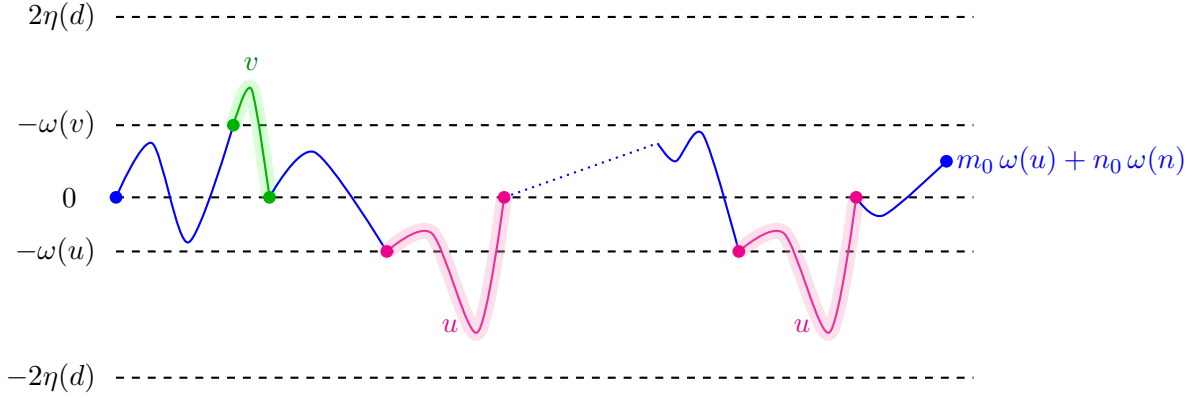

This completes the construction  of a word $W\in L(\mathcal A)$ such that there exist $m,n\in \mathbb{N}$ satisfying~\eqref{eq:spec}.
 Next we claim that
there exists $U \in L(\mathcal A)$ with
\[\mathrm{prod}(U)= (u^{m},\varepsilon,v^{n},\varepsilon)\,.\]

We construct $U$ and
an accepting run of $\mathcal A$ over $U$ in a greedy fashion exactly as described in Phase~3
above. This yields a run that stays in the range $\{-2\eta(d),\ldots,2\eta(d)\}$.
Since the unique accepting state of $\mathcal A$ is also the initial state, the language  $L(\mathcal A)$ is a semigroup, implying that $U^k \in L(\mathcal A)$
for all $k \in \mathbb N$.

In summary we have constructed words $W,U\in \Gamma^*$ such that
$WU^k \in L(\mathcal A)$ for all $k\in \mathbb N$.
To conclude the proof, it remains to prove~\eqref{eq-flataim}. Observe that
\[ \mathrm{flat}(WU^k) 
=x_1\, u^{1+(k+1)m}\, x_2\, w_1\, y_2\, v^{1+(k+1)n}\,  y_1\,
  w_2 \, , \]
  holds for all $k\in \mathbb N$, which in turn implies that 
\[ \varphi(x_1\, u^{1+(k+1)m}\, x_2\, w_1\, y_2\, v^{1+(k+1)n}\,  y_1\,
  w_2) \in \varphi(\mathrm{flat}(L(\mathcal A))). \]

Now $S=\{ (\varphi(u)^{(k+1)m},\varphi(v)^{(k+1)n}) : k \in \mathbb N\}$ is a
sub-semigroup of $M_d(\Q) \times M_d(\Q)$.  Since $\varphi(u)$ and~$\varphi(v)$ are stable, by Lemma~\ref{lem:closure-is-group}
$\Zcl{S} \subseteq M_d(\Zcl{\Q})\times M_d(\Zcl{\Q})$ is a group with respect to the operation of matrix
multiplication on each component.   In particular, $\Zcl{S}$ contains
an element $(I_u,I_v)$ such that $I_u \,\varphi(u) = \varphi(u)\, I_u 
=\varphi(u)$ and~$I_v \, \varphi(v) = \varphi(v)\, I_v 
=\varphi(v)$.  By Zariski continuity of matrix multiplication, and by a similar argument to~\eqref{eq:concov},  it follows that
\[ \varphi(w) = \varphi(x_1 u)\, I_u\, \varphi(x_2w_1y_2v)\, I_v\,
  \varphi(y_1) \in \Zcl{ \varphi(\mathrm{flat}(L(\mathcal A)))} \, .\]
This completes the proof of the left-to-right inclusion in~\eqref{eq:claim}.
\end{proof}

\section{Indexed Grammars}
\label{sec:indexG}
An indexed grammar $G$ is a tuple $(V,\Sigma,I,P,S)$ where
$V$ is the set of variables,~$\Sigma$ the set of terminals (or the alphabet), $I$ the set of indices, $S\in V$ the starting symbol, and $P$ is a finite set of production rules of the forms $A \to \alpha$, $ A \to B \, f$ and $A \, f \to \alpha$  
where $A,B \in V$, $f\in I$ and $\alpha$ is a word over $V \cup \Sigma$.
Derivations in an indexed grammar are similar to those in a context-free grammar except that the nonterminals can be followed by a string of indices.   
When a production rule $A \to BC$ is applied, the string of indices for $A$ is attached to both $B$ and $C$. See~\cite[Page 389]{hopcroft1979introduction} for more details; in the following we reiterate~\cite[Example 14.2]{hopcroft1979introduction} for the most familiar index language $\{a^nb^nc^n \mid n\geq 1\}$. We highlight that this language can be recognized as the reachability language of the following 2-VASS:

\begin{center}
       \begin{tikzpicture}[->,>=stealth,auto,node distance=2.5cm]
  \node (s) [circle,draw] {$s$};
  \node (ss) [circle,draw,right =2.5cm of s] {$p$};
  \node (t) [circle,draw,right =2.5cm of ss] {$t$};

  \path (s) edge[loop above] node[above] {$\begin{pmatrix} 1 \\ 1 \end{pmatrix},a$} (s)
            edge  node[above] {$\begin{pmatrix} -1 \\ 0 \end{pmatrix},b$} (ss)
        (ss) edge[loop above] node[above] {$\begin{pmatrix} -1 \\ 0 \end{pmatrix},b$} (ss)
       (ss) edge  node[above] {$\begin{pmatrix} 0 \\ -1 \end{pmatrix},c$} (t)
        (t) edge[loop above] node[above] {$\begin{pmatrix} 0 \\ -1 \end{pmatrix},c$} (t);;
\end{tikzpicture}
\end{center}

\begin{example}
    Let $G=(V,\Sigma,I,P,S)$ be such that $V=\{S,T,A,B,C\}$, $\Sigma=\{a,b,c\}$ and $I=\{f,g\}$
and with the following set $P$ of production rules
\[\begin{aligned}
    S &\to Tg  \qquad &\qquad   A\,f &\to a \,A \qquad & \qquad  A\,g & \to a \\
    T &\to T \,f \qquad &\qquad  B \, f &\to b \, B \qquad & \qquad  B\,g & \to b\\
    T &\to A\, B\, C \qquad &\qquad  C \, f &\to c \, C \qquad & \qquad  C\,g & \to c \\
\end{aligned}\]
An example derivation is 
\[\begin{aligned}
    S \Rightarrow  T\, g \,  & \Rightarrow T\, f\,g \, \Rightarrow A\, f\,g \, B\, f\,g \, C\, f\,g \, \Rightarrow a \,A  \,g \, B\, f\,g \, C\, f\,g 
     \, \Rightarrow a^2 \, B\, f\,g \, C\, f\,g \\
    & \Rightarrow a^2 \, b\, B  \,g \, C\, f\,g \, \Rightarrow a^2 \, b^2 \, C\, f\,g 
      \Rightarrow a^2 \, b^2\, c\, C\, g \, \Rightarrow a^2 \, b^2 \,c^2 \\
\end{aligned}\]
In the above derivation, if the production rule $ T \to T \,f $ at the second step repeats $n-1$ times the word $a^n \, b^n\, c^n$ is derived. \hfill $\blacktriangleleft$
\end{example}

Given an indexed grammar~$G$ 
over a finite alphabet~$\Sigma$ and  $\varphi:\Sigma^* \to \mathrm{M}_d(\mathbb{Q})$
a monoid homomorphism from $\Sigma^*$ to  $(d \times d)$-matrices with rational entries,
we are interested in the computability of the algebraic closure of~$\{\varphi(w) \mid w\in L(G)\}$.

\begin{theorem}
\label{theo:index-und}
    Given an indexed grammar~$G$ over an alphabet~$\Sigma$  and a monoid morphism $\varphi:\Sigma^* \to \mathrm{M}_d(\mathbb{Q})$, the algebraic closure of~$\varphi(L(G))$ is not computable. 
\end{theorem}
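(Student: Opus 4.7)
The plan is to reduce from the boundedness problem for reset VASS, which is undecidable by~\cite{DufourdFS98}. Given a reset VASS $V$ with $k$ counters, I would construct an indexed grammar $G$ over an alphabet $\Sigma$ and a monoid morphism $\varphi:\Sigma^* \to M_d(\mathbb Q)$ with the property that $\overline{\varphi(L(G))}$ is a finite (equivalently, zero-dimensional) algebraic set if and only if the reachable configuration set of $V$ is finite. Since the dimension of a variety can be read off from any finite generating set of its vanishing ideal, a procedure computing $\overline{\varphi(L(G))}$ would decide reset-VASS boundedness, contradicting~\cite{DufourdFS98}.

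First I would design the indexed language. For each legal run $\rho = t_1 t_2 \cdots t_n$ of $V$, the grammar $G$ should produce a word $w_\rho$ that lists \emph{all} prefixes of $\rho$ consecutively, separated by a marker $\#\in\Sigma$, namely
\[
 w_\rho \;=\; \# \, t_1 \, \# \, t_1 t_2 \, \# \cdots \# \, t_1 t_2 \cdots t_n \, \# .
\]
This ``prefix-concatenation'' language grows quadratically and is not context-free, but is the prototypical example of a language realisable by an indexed grammar: the index stack of a dedicated nonterminal records a growing transition sequence, and indexed productions pop the stack one symbol at a time to spell out each prefix, while a second family of rules extends the sequence by one letter and recurses. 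Correctness of this construction is by straightforward induction on the run length, and ensuring that only sequences consistent with the control structure of $V$ are generated (i.e., respecting the automaton state) is a routine finite-state bookkeeping added to the grammar.

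Next I would design the morphism $\varphi$ so that, for every prefix $p_i = t_1\cdots t_i$, the matrix $\varphi(p_i)$ is the zero matrix if $p_i$ represents an illegal execution of $V$ (i.e., some transition along $p_i$ tries to decrement a counter whose current value is $0$), and otherwise is a nonzero matrix that injectively encodes the configuration $(q_i, v_1^{(i)},\ldots, v_k^{(i)})$ reached after executing $p_i$. The separator $\varphi(\#)$ would be chosen so that $\varphi(w_\rho)$ is the zero matrix as soon as any single $\varphi(p_i)$ is zero, and otherwise retains (up to a harmless multiplicative factor) the encoding of the final configuration. The key gadget is, per counter, a small block over $\mathbb Q$ in which the state $v=0$ lies in the kernel of the decrement matrix while the states $v\geq 1$ are permuted compatibly by increment and decrement; resets are realised as projections onto the $v=0$ block. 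Crucially, we need counter values to be encoded injectively and within a fixed dimension $d$ independent of the run length --- a standard way is to use an exponential embedding $v\mapsto \lambda^v$ inside $2\times 2$ blocks together with an auxiliary ``alive'' flag that is flipped to zero by any illegal decrement. The prefix-concatenation shape of $w_\rho$ is exactly what makes this local test sufficient: if the run first becomes illegal at step $j$, then the prefix $p_j$ itself appears as a factor of $w_\rho$, so $\varphi(p_j)=\bm 0$ and hence $\varphi(w_\rho)=\bm 0$.

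With the construction in place, $\varphi(L(G))\setminus\{\bm 0\}$ is in bijection with the set of configurations reachable by legal runs of $V$ via an injective matrix encoding. Hence $\varphi(L(G))$ is finite iff $V$ is bounded, and the Zariski closure of a finite (respectively infinite) set of distinct matrices is zero-dimensional (respectively positive-dimensional). Thus a procedure computing $\overline{\varphi(L(G))}$ would, by reading off its dimension from a finite basis of the vanishing ideal, decide reset-VASS boundedness. I expect the main obstacle to be the explicit design of the matrix gadgets for decrement and reset: they must simultaneously (a) encode unbounded counter values injectively over $\mathbb Q$, (b) annihilate on decrements from zero, and (c) interact correctly with increments, resets, and the control-state component, all within a fixed dimension $d$. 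Once such a gadget is exhibited, both the morphism and the grammar construction are routine, and the undecidability reduction follows.
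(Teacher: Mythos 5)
Your high-level plan (reduce from reset-VASS boundedness, use an indexed language that lists all prefixes of a run) matches the paper, but the step you defer as ``routine'' --- the matrix gadget --- is precisely where the construction lives, and as you describe it it cannot work. First, your literal requirement is unsatisfiable: you ask that for every raw prefix $p_i=t_1\cdots t_i$ of a transition sequence, $\varphi(p_i)=\bm{0}$ exactly when $p_i$ is illegal and $\varphi(p_i)\neq\bm{0}$ otherwise. The set $\{w:\varphi(w)=\bm{0}\}$ is a two-sided ideal of $\Sigma^*$, whereas illegality is not preserved under left concatenation: a decrementing transition $t$ is illegal as a one-step sequence, forcing $\varphi(t)=\bm{0}$, but then every legal prefix that uses $t$ after some increments also maps to $\bm{0}$, contradicting the requirement. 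Even under the charitable reading in which the initial marker $\#$ provides initialization and only the encodings matter, your concrete per-counter gadget is impossible: if $e_v$ denotes a fixed-dimensional encoding of counter value $v$ and the decrement matrix $D$ satisfies $De_0=0$ and $De_v=c_ve_{v-1}$ with $c_v\neq 0$ for $v\geq 1$, then on the finite-dimensional span of $\{e_v\}$ the map $D$ is nilpotent (each spanning vector dies after finitely many applications), say $D^m=0$ there, yet $D^m e_v$ must be a nonzero multiple of $e_{v-m}$ for $v\geq m$ --- a contradiction. The $2\times 2$ exponential variant $v\mapsto(\lambda^v,1)$ fails the same way: the images of $(1,1)$ and $(\lambda,1)$ already determine $D$, and $D(\lambda^2,1)^{\top}$ then comes out proportional to $(1,1)^{\top}$ rather than to $(\lambda,1)^{\top}$. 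In short, no single left-to-right linear pass over one copy of the run can detect ``decrement at zero'' in fixed dimension, and merely having the illegal prefix occur as a factor of $w_\rho$ does not help, because you still use each prefix only linearly.

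The paper's construction gets around exactly this obstruction, and does so differently from your sketch: counter values are stored \emph{linearly} (not exponentially) in a rank-one outer-product encoding $y\,((\bm{x},1)\otimes\bm{1})$ of the configuration, and the grammar emits each prefix \emph{twice}, sandwiching the test letter $t_\delta$, so that the decisive product has the form (encoding)$\cdot\varphi(t_\delta)\cdot$(encoding), with $\varphi(t_\delta)$ built from $\bm{1}\otimes(\bm{e}_\ell+\bm{e}_{n+1})$. This product is \emph{quadratic} in the configuration and has all columns proportional to $(x_\ell+1)(\bm{x},1)$, hence vanishes exactly when the decremented coordinate would go negative; legality is enforced by this bilinear trick, not by a letterwise annihilation. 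A secondary discrepancy: because each test multiplies the encoding by a growing scalar ``unit'', $\varphi(L(G))$ is infinite even when the reset VASS is bounded, so the correct dichotomy is ``$\overline{\varphi(L(G))}$ has dimension at most one versus strictly greater than one'' (using that the unit at least doubles per step, which rules out polynomial relations in the unbounded case), not your ``zero-dimensional versus positive-dimensional''; an exactly injective, factor-free encoding as your criterion requires would face the same impossibility as the decrement gadget. To repair your proposal you would need to import both ingredients: the duplicated-prefix/bilinear test and the dimension-one criterion.
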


The proof is by reduction from boundedness problem for reset vector addition systems with states (reset VASSs), which is known to be undecidable~\cite{dufourd1998reset}.
Before we proceed with the proof, we give a brief introduction to reset VASSs and the boundedness problem for them.

\paragraph{Reset VASSs.}
A reset VASS $\mathcal{V}$ with dimension~$n \in \mathbb{N}$ is a tuple $(Q,q_0,\Delta)$ where $Q$ is
a finite set of states,~$q_0\in Q$ is the initial state and  $\Delta \subseteq Q \times  \{+1,0,-1,\textrm {reset}\}^n \times Q$ is the transition relation. 
A run of $\mathcal{V}$ is
a sequence $(q_0,\bm{x}_0), (q_1,\bm{x}_1), \cdots ,(q_k,\bm{x}_k)$   of state-vectors in $Q\times \mathbb{Z}^n_{\geq 0}$
such that for all $j\in \{1,\ldots,k\}$,
there exists~$(p,\bm{v},q) \in \Delta$
satisfying the following 
\begin{itemize}
    \item $q_{j-1}=p$, $q_j=q$, and $\bm{x}_0=\bm{0}$,
    \item the  $i$th coordinate of $\bm{x}_j$ is zero if the corresponding coordinate in $\bm{v}$ is $\mathrm{reset}$, and otherwise it is the summation of the $i$th coordinate of $\bm{v}$ and $\bm{x}_{j-1}$.
\end{itemize}

\sloppy A configuration  $(q,\bm{x})\in Q\times \mathbb{Z}^n_{\geq 0}$ is reachable  if for some $k\in \mathbb{N}$ there is a run~$(q_0,\bm{x}_0), (q_1,\bm{x}_2), \cdots, (q_k,\bm{x}_k)$ with $q_k=q$  and $\bm{x}_k=\bm{x}$. 
Denote by $\mathrm{Reach(\mathcal{V})}$ the set of reachable configurations of $\mathcal{V}$ in $Q\times \mathbb{Z}^n_{\geq 0}$.
The Boundedness problem  asks, given a reset VASS~$\mathcal{V}$, whether $\mathrm{Reach(\mathcal{V})}$ is finite.

\paragraph{Matrix Identities and Outer Product.}
Recall that the outer product of two column vector~$\bm{x},\bm{z} \in \mathbb{Q}^{n\times 1}$, denoted by $\bm{x} \otimes \bm{z}$, is defined by 
$(\bm{x} \otimes \bm{z})_{i,j}:=x_iy_j$ for all $1\leq i,j\leq n$.

Below, we denote by $\bm{0}\in \mathbb{Q}^{(n+1)\times 1} $ and $\bm{1} \in \mathbb{Q}^{(n+1)\times 1 }$  the all-zero and all-one column vectors, respectively. 
Denote by column vectors $\bm{e}_1,\ldots,\bm{e}_{n+1}\in \mathbb{Q}^{(n+1)\times 1}$ the standard basis.
Fix a  column vector~$\bm{y} = (y x_1, \dots, y x_n, y)^\top$  where $x_1,\ldots,x_n,y\in \mathbb{Q}$ .

\begin{fact}
\label{fact:outerprodcut}
The outer product $\bm{y} \otimes \bm{1}$ is a  matrix with identical columns equal to $\bm{y}$:
\begin{equation}
  \label{eq:yotimes1}  
\bm{y} \otimes \bm{1} = \bm{y} \bm{1}^\top =
\begin{pmatrix}
         yx_1 & yx_1 & \dots & yx_1  \\
        yx_2 & yx_2 & \dots & yx_2  \\
\vdots & \vdots & \ddots & \vdots  \\
yx_n& yx_n& \dots &yx_n\\
y & y   &\dots&y
    \end{pmatrix}\in \mathbb{R}^{(n+1)\times (n+1)}.
\end{equation} 
Similarly, $\bm 1\otimes \bm{y}$ is a  matrix with repeated rows~$\bm{y}^\top$.  
The product $\bm{1}\otimes \bm{1}$ is  a  matrix whose every entry is $1$. 
\end{fact}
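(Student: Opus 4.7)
The plan is to verify each of the three assertions by direct unfolding of the definition of the outer product. Recall that for column vectors $\bm{x},\bm{z}\in\mathbb{Q}^{(n+1)\times 1}$, the outer product is the rank-at-most-one matrix $\bm{x}\otimes\bm{z}=\bm{x}\bm{z}^\top$, whose $(i,j)$-entry equals $x_i z_j$. Since everything the statement asks for is an entry-by-entry identity of $(n+1)\times(n+1)$ matrices, it suffices in each case to compute the generic entry and read off the stated structural description.

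For the first assertion, I would fix indices $i,j\in\{1,\ldots,n+1\}$ and compute
\[
(\bm{y}\otimes\bm{1})_{i,j} \;=\; y_i\cdot 1 \;=\; y_i,
\]
which shows that every column of $\bm{y}\otimes\bm{1}$ is precisely the vector $\bm{y}$. Substituting the given coordinates $\bm{y}=(yx_1,\ldots,yx_n,y)^\top$ into this description reproduces the displayed matrix in~\eqref{eq:yotimes1}. In parallel, the identification $\bm{y}\otimes\bm{1}=\bm{y}\bm{1}^\top$ follows from the standard fact that the outer product coincides with the matrix product of a column vector by a row vector.

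For the second assertion, the same computation with the roles of the factors exchanged gives
\[
(\bm{1}\otimes\bm{y})_{i,j} \;=\; 1\cdot y_j \;=\; y_j,
\]
so that the $i$-th row of $\bm{1}\otimes\bm{y}$ is the constant vector $\bm{y}^\top$, independently of $i$. For the third assertion, $(\bm{1}\otimes\bm{1})_{i,j}=1\cdot 1=1$, so every entry of $\bm{1}\otimes\bm{1}$ equals $1$.

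There is no substantive obstacle here: the statement is a bookkeeping lemma whose entire content is the definition of $\otimes$ applied to vectors with a special form. The only minor point worth flagging explicitly in the write-up is the identification of $\bm{y}\otimes\bm{1}$ with the matrix product $\bm{y}\bm{1}^\top$, which is what justifies the equality sign in~\eqref{eq:yotimes1}; this is used later to manipulate outer products via ordinary matrix algebra.
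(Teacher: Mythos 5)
Your proposal is correct and matches the paper's (implicit) treatment: the paper states this as an immediate consequence of the definition $(\bm{x}\otimes\bm{z})_{i,j}=x_i z_j$ and offers no further argument, and your entry-by-entry unfolding is exactly that computation. Nothing is missing.
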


The following claims are essential for the reduction:

\begin{claim}
\label{clm-updatedelta}
  Let $k\in\{1,\dots,n\}$. The product
$(\bm{y}\otimes \bm 1)(\bm 1\otimes (\bm e_k+\bm e_{n+1}))(\bm{y}\otimes \bm 1)$
is a  matrix whose columns are all equal to~$(n+1)y(x_k+1) \bm{y}$.  
\end{claim}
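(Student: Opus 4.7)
The plan is to carry out the triple product directly, exploiting the fact that each of the three factors is a rank-one outer product. Writing $\bm{u}\otimes \bm{v}=\bm{u}\bm{v}^\top$, the expression becomes
\[
(\bm y\bm 1^\top)\bigl(\bm 1(\bm e_k+\bm e_{n+1})^\top\bigr)(\bm y\bm 1^\top),
\]
and by associativity of matrix multiplication I can regroup this as
\[
\bm y\,(\bm 1^\top\bm 1)\,\bigl((\bm e_k+\bm e_{n+1})^\top\bm y\bigr)\,\bm 1^\top.
\]
The two middle factors are scalars, so once they are evaluated the product collapses to a scalar multiple of the single outer product $\bm y\bm 1^\top$.

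Next I would evaluate those two scalars explicitly. Since $\bm 1\in\mathbb Q^{(n+1)\times 1}$, the first is $\bm 1^\top\bm 1=n+1$. For the second, using $\bm y=(yx_1,\dots,yx_n,y)^\top$, I get $\bm e_k^\top\bm y=yx_k$ and $\bm e_{n+1}^\top\bm y=y$, hence $(\bm e_k+\bm e_{n+1})^\top\bm y=yx_k+y=y(x_k+1)$. Substituting back gives
\[
(n+1)\,y(x_k+1)\,\bm y\bm 1^\top \;=\; (n+1)\,y(x_k+1)\,(\bm y\otimes\bm 1).
\]

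Finally, I would invoke Fact~\ref{fact:outerprodcut}, which says that $\bm y\otimes\bm 1$ is the matrix whose columns are each equal to $\bm y$. Multiplying by the scalar $(n+1)y(x_k+1)$ thus yields a matrix whose columns are all equal to $(n+1)y(x_k+1)\bm y$, as required. There is no real obstacle here; the only ``trick'' is to recognise the outer-product factorisation and use associativity to reduce the triple product to a single scalar multiple of $\bm y\bm 1^\top$, avoiding any entrywise computation.
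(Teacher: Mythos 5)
Your proof is correct and follows essentially the same route as the paper: rewrite each factor as a rank-one outer product and use $\bm 1^\top\bm 1=n+1$, with the only (minor) difference that you collapse the entire triple product by associativity into the scalar $(n+1)y(x_k+1)$ times $\bm y\bm 1^\top$, whereas the paper writes out the final multiplication explicitly. Nothing is missing; your streamlined computation and the appeal to Fact~\ref{fact:outerprodcut} suffice.
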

\begin{proofclaim}
    First we observe that the following product $L$ is a matrix where all columns are zero except the~$k$-th and $(n+1)$-th column being $(n+1)\bm{y}$:
\begin{align*}
  L= (\bm{y}\otimes \bm 1)(\bm 1\otimes(\bm e_k+\bm e_{n+1})) &= (\bm y\bm 1^\top) (\bm 1(\bm e_k+\bm e_{n+1})^\top)\\
  &= \bm y(\bm 1^\top \bm 1)(\bm e_k+\bm e_{n+1})^\top\\
   &= (n+1)\bm y  \otimes (\bm e_k+\bm e_{n+1})& \text{as } \bm{1}^\top \bm{1}=n+1 \, .\\
\end{align*}
Recall that all columns of  $R=(\bm{y}\otimes \bm 1)$ are  equal to~$\bm{y}$. Therefore

\begin{align*}
    LR&=(n+1)\begin{pmatrix}
         0 &\dots &0 & yx_1 & 0 &\dots& 0 &  yx_1  \\
       0 &\dots & 0 &yx_2 &0&\dots & 0 &  yx_2  \\
     \vdots &\ddots & \vdots & \vdots & \vdots &\vdots &\ddots &\vdots\\
     0 &\dots &0& yx_n & 0 &\dots& 0 &  yx_n \\
0 &\dots &0& y & 0 &\dots& 0 &  y 
    \end{pmatrix}\begin{pmatrix}
         yx_1 & yx_1 & \dots & yx_1  \\
        yx_2 & yx_2 & \dots & yx_2  \\
\vdots & \vdots & \ddots & \vdots  \\
yx_n& yx_n& \dots &yx_n\\
y & y   &\dots&y
    \end{pmatrix}\\
    &=(n+1)\begin{pmatrix}
         yx_1(yx_k+y) &  \dots & yx_1(yx_k+y) \\
         yx_2(yx_k+y) & \dots &  yx_2(yx_k+y) \\
\vdots &  \ddots & \vdots  \\
 yx_n(yx_k+y)& \dots & yx_n(yx_k+y)\\
 y(yx_k+y)& \dots & y(yx_k+y)
    \end{pmatrix}.
\end{align*}

We get that all columns of $LR$ are $(n+1)y(x_k+1) \bm{y}$.

\end{proofclaim}

Given a matrix $M$, we denote by $M_{i:}$ its $i$-th row. 
The proof of this second claim is  provided in~\Cref{app:indexG}.

\begin{restatable}{claim}{clmtarnsred}\label{clm:tarnsred}
  Let $M \in \mathbb{Q}^{(n+1)\times(n+1)}$. Then 
\[
{(M(\bm{y}\otimes \bm 1))}_{i:}=\begin{cases}
\bm{0}^\top & \text{if } M_{i:} = \bm{0}^\top, \\
(yx_i \pm y)\bm{1}^\top & \text{if } M_{i:} = (\bm{e}_i \pm \bm{e}_{n+1})^\top, \\
(yx_i)\bm{1}^\top & \text{if } M_{i:} = \bm{e}_i^\top.
\end{cases}\]   
\end{restatable}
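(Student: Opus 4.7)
The plan is to reduce the claim to a one-line calculation using the outer-product factorisation, and then verify the three cases by inspection.

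First I would rewrite $\bm{y}\otimes \bm{1}$ in the form $\bm{y}\bm{1}^\top$, exactly as in~\Cref{fact:outerprodcut}. By associativity of matrix multiplication, the $i$-th row of $M(\bm{y}\otimes\bm{1})$ is
\[
(M(\bm{y}\bm{1}^\top))_{i:} \;=\; M_{i:}\bm{y}\bm{1}^\top \;=\; (M_{i:}\bm{y})\,\bm{1}^\top,
\]
where $M_{i:}\bm{y}\in\mathbb{Q}$ is a scalar. Hence the $i$-th row of $M(\bm{y}\otimes\bm{1})$ is always a scalar multiple of $\bm{1}^\top$, with scalar equal to the dot product $M_{i:}\bm{y}$.

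Next I would compute the scalar $M_{i:}\bm{y}$ in each of the three cases listed. Recall $\bm{y} = (yx_1,\dots,yx_n,y)^\top$. If $M_{i:} = \bm{0}^\top$ then trivially $M_{i:}\bm{y}=0$, giving the first case. If $M_{i:} = \bm{e}_i^\top$, then $M_{i:}\bm{y}$ picks out the $i$-th entry of $\bm{y}$, which is $yx_i$, giving the third case. If $M_{i:} = (\bm{e}_i\pm \bm{e}_{n+1})^\top$, then $M_{i:}\bm{y}$ is the sum (respectively difference) of the $i$-th and the $(n{+}1)$-th entries of $\bm{y}$, namely $yx_i\pm y$, giving the second case.

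There is no real obstacle here: the entire content is the fact that right-multiplication by $\bm{y}\bm{1}^\top$ replaces each row of $M$ by $(M_{i:}\bm{y})\bm{1}^\top$. The case analysis is just evaluating the inner product of a row with $\bm y$. I would therefore present the proof as the single identity $(M(\bm{y}\otimes\bm 1))_{i:}=(M_{i:}\bm{y})\,\bm 1^\top$ followed by a three-line case distinction.
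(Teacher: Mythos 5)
Your proposal is correct and follows essentially the same route as the paper's proof: both use the factorisation $\bm{y}\otimes\bm{1}=\bm{y}\bm{1}^\top$ to obtain the identity $(M(\bm{y}\otimes\bm{1}))_{i:}=(M_{i:}\bm{y})\,\bm{1}^\top$ and then evaluate the scalar $M_{i:}\bm{y}$ in the three cases. Nothing is missing.
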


Now we are in a position to formalize the reduction:

\begin{proof}[Proof of~\Cref{theo:index-und}]
Given a reset VASS $(Q,q_0,\Delta)$, we construct an index grammar $G$ over a finite alphabet~$\Sigma$ and  $\varphi:\Sigma^* \to \mathrm{M}_d(\mathbb{Q})$
a monoid homomorphism from $\Sigma^*$ to  $(d \times d)$-matrices with rational entries
such that 
\[\overline{ \varphi(L(G))}  \text{ has dimension at most one} \Longleftrightarrow \mathrm{Reach(\mathcal{V})} \text{ is finite.} \]

In our reduction, without loss of generality, we assume that $\mathcal{V}$ is such that,  for all transitions $(p,\bm{v},q)\in \Delta$, the vector $\bm{v}$ has at most one negative coordinate (as otherwise, introduction of more new states can convert~$\mathcal{V}$ to satisfy this assumption).
Roughly speaking, the reduction is such that for each sequence of transitions  $\delta_1,  \dots, \delta_k$ of $\mathcal V$ the grammar~$G$ produces a word over the alphabet~$\{\delta, t_{\delta} : \delta \in \Delta\} \cup \{\delta_0\}$ as follows 
\begin{equation}
\label{eq-derv2}
w= w_k \,  t_{\delta_k} \, w_k \, t_{\delta_{k-1}} \, w_{k-1} \, t_\delta{{k-2}} \, w_{k-2}  \cdots w_3 \,  t_{\delta_2} \, w_2 \, t_{\delta_1} w_1
\end{equation}
where $w_i$ is the suffix of $\delta_k \delta_{k-1}  \cdots \delta_0$ consisting of $(i+1)$-letter.
Observe that $w_i=\delta_i w_{i-1}=\delta_i\delta_{i-1}\cdots \delta_0$ for all~$i \in \{k,\ldots,2\}$.
For instance, given the sequence $\delta_1 \delta_2 \delta_3$ of transitions the grammar produces the following word
\begin{equation}
\label{eq-w-unde}
   \underbrace{\delta_3 \delta_2 \delta_1 \delta_0 \,  t_{\delta_3} \, \delta_3 \underbrace{ \delta_2 \delta_1 \delta_0 \, t_{\delta_2} \,  \delta_2 \underbrace{\delta_1 \delta_0 \, t_{\delta_1} \,  \delta_1 \delta_0}_{\mathrlap{\text{to check: can $\delta_1$ be taken at $(q_0,\bm{0})$?}}} }_{\mathrlap{\text{to check: can $\delta_1\delta_2$ be taken at $(q_0,\bm{0})$?}}} }_{\text{to check: can $\delta_1 \delta_2  \delta_3$ be taken at $(q_0,\bm{0})$?}} 
\end{equation}

The idea is that  if the sequence of transitions $\delta_1 \cdots \delta_k$ lifts to  an invalid run, then $\varphi(w)$ is the zero matrix, and otherwise it is an encoding of the reachable configuration by~$\delta_1   \cdots \delta_k$. 
Similar to the reduction in~\Cref{pro:VASStoMON},  
we encode the configurations $(q,\bm{x}) \in Q \times \mathbb Z^n_{\geq 0}$ of~$\mathcal{V}$
with matrices $M$ consisting of $|Q|\times |Q|$ blocks of $(n+1) \times (n+1)$  matrices.
The encoding is such that 
 all blocks of $M$ are  zero matrices except possibly one block at index $(q,q_0)$; this indicates that the run began in $q_0$ and reached $q$, with $\delta_k$ being the last transition (note that this is the reverse block compared to~\Cref{pro:VASStoMON}, where such a block has index~$(q_0, q$)).
Given a column vector~$\bm{y}$, by Fact~\ref{fact:outerprodcut}, the 
outer product~$\bm{y}\otimes \bm 1$
is a  matrix with repeated columns~$\bm y$. 
The vector~$\bm{x}=(x_1,\ldots,x_n)$ of the configuration~$(q,\bm{x})$
is encoded in the nonzero block of $M$ by the matrix 
$y (\bm{y}\otimes\bm{1})$ where     $\bm{y}=
(\bm{x},1)^\top=(x_1,\ldots,x_n,1)^\top$ is a column vector and $y$ is a nonzero scalar. 
We may refer to $y$ as the unit of the encoding. We consider the case when the unit is zero, meaning matrices with all blocks zero, encoding of an invalid configuration.

Since  the matrix multiplication is associative, we can study $\varphi(w)$ by considering the word backward, as shown in~\eqref{eq-w-unde}. The matrix product  $\varphi(\delta_1 \delta_0 t_{\delta_1} \delta_1 \delta_0)$ aims  at checking whether $\delta_1$ lift to a valid run starting at~$(q_0,\bm{0})$.
As briefly explained above, the main idea is that if taking $\delta_1$ is not allowed at the beginning of the run then~$\varphi(\delta_1 \delta_0 t_{\delta_1} \delta_1 \delta_0)=\bm{0}$. Otherwise, the construction is such that  \[\varphi(\delta_1 \delta_0 t_{\delta_1} \delta_1 \delta_0) = c \, \varphi(\delta_1 \delta_0)\,\]
for some nonzero $c\in \mathbb{Q}$, which implies that $\varphi(\delta_1 \delta_0 t_{\delta_1} \delta_1 \delta_0)$ is a valid encoding of the configuration reached after $\delta_1$.
Inductively, given that  $\delta_1 \ldots \delta_{i-1}$ lifts to a valid run in~$\mathcal V$, 
the image of suffix  $w_{i-1} t_{\delta_{i-1}}  w_{i-2} \cdots t_{\delta_1} w_1$ of $w$ is a multiple of image of $w_{i-1}$:
\[\varphi(w_{i-1} t_{\delta_{i-1}}  w_{i-2} \cdots t_{\delta_1} w_1) = c' \, \varphi(w_{i-1})\,\]
for some  nonzero $c'\in \mathbb{Q}$.
The construction next checks whether   $\delta_i$ can be taken in $\mathcal V$ after $\delta_1 \ldots \delta_{i-1}$.
This task is performed by the letter $t_{\delta_{i}}$ in the the infix $w_i t_{\delta_{i}}$ right before the suffix $w_{i-1} t_{\delta_{i-1}}  w_{i-2} \cdots t_{\delta_1} w_1$.  
By the invariant, on the right of $\varphi(t_{\delta_{i}})$
we have $c \, \varphi(\delta_i \delta_{i-1} \cdots \delta_0)$ and on the left we have $\varphi(\delta_i \delta_{i-1} \cdots \delta_0)$, both  encoding of the configuration reached after taking $\delta_1  \ldots \delta_i$, but possibly with  different units. 
These two copies enable the simulation of a quadratic update to the vector values and allow detection of whether any coordinate becomes  $-1$ as a result of~$\delta_i$. 
For instance, 
let $\bm{x}=(x_1,\ldots,x_n)$ be the vector reached after 
taking $\delta_0 \delta_1 \ldots \delta_i$,  the two encoding are in the form of 
$y_1 ((\bm{x},1)\otimes\bm{1})$ and $y_2((\bm{x},1)\otimes\bm{1})$ for some nonzero scalars~$y_1,y_2$. 
Let $\delta_i=(p,\bm{v},q)$ be such that the $\ell$-th coordinate of $\bm{v}$ is $-1$ (the only coordinate by assumption).
By~\Cref{clm-updatedelta},
the product
of these encoding on the left and right by 
$\bm 1\otimes (\bm e_\ell+\bm e_{n+1})$ results in 
a  matrix whose columns are all~$(n+1)y_1y_2(x_\ell+1)(\bm{x},1)$. 
If taking $\delta_i$ results the $x_\ell$ to become $-1$, this product is zero and gives the encoding of an invalid configuration. Otherwise, it is a valid encoding of~$\bm{x}$ with the unit~$(n+1) \, y_1 y_2 \, (x_\ell+1)$.

The grammar~$G=(V,\Sigma,I,P,S)$ is defined such that
\[V=\{S,A,B,C,D\} \qquad \Sigma=\{\delta,t_\delta\mid \delta\in \Delta\}\cup \{\delta_0\}\qquad I=\{f_\delta\mid \delta\in \Delta \}\cup \{f_{\delta_0}\}
\]
where $V$ is the set of variables,  $I$ the set of indices, and $S\in V$ the starting variable. 
We introduce the production rules in $P$ gradually, as it applies during a derivation, showing how it generates words of the form~\eqref{eq-derv2}. For all~$\delta\in \Delta$,
the rules 
\[S\to A f_{\delta_0} \qquad \qquad  A\to Af_\delta \]
generate derivations of the form $S\overset{*}{\Longrightarrow} A \, f_{\delta_k} \, f_{\delta_{k-1}}\dots \, f_{\delta_0}$. The rule $A\to DBC$ can be applied only once along a derivation,  generating $S\overset{*}{\Longrightarrow} Df_{\delta_k}\dots f_{\delta_0} Bf_{\delta_k}\dots f_{\delta_0} Cf_{\delta_k}\dots f_{\delta_0}$. 
Intuitively, the grammar is now committed to simulate the sequence  $\delta_1 \cdots \delta_k$ of transitions in~$\mathcal V$. 
Next, taking once $Cf_\delta\to BC$,  with $\delta\in \Delta$, generates 
 \[S\overset{*}{\Longrightarrow}Df_{\delta_k}\dots f_{\delta_0} Bf_{\delta_k}\dots f_{\delta_0}Bf_{\delta_{k-1}}\dots f_{\delta_0}Cf_{\delta_{k-1}}\dots f_{\delta_0}\,,\]
and repeating  this rule again gives
    \[
    S\overset{*}{\Longrightarrow}Df_{\delta_k}\dots f_{\delta_0} Bf_{\delta_k}\dots f_{\delta_0}Bf_{\delta_{k-1}}\dots f_{\delta_0}B f_{\delta_{k-2}}\dots Bf_{\delta_{1}}f_{\delta_0} Bf_{\delta_0}Cf_{\delta_0}.\]
For all $\delta \in \Delta$, the rules
\[Cf_{\delta_0}\to \epsilon \qquad \qquad Bf_\delta\to t_\delta \delta D\]
    combined together yields
    \[
    S\overset{*}{\Longrightarrow}Df_{\delta_k}\dots f_{\delta_0}\, t_{\delta_k} \delta_k Df_{\delta_{k-1}}\dots f_{\delta_0}\,t_{\delta_{k-1}} \delta_{k-1} Df_{\delta_{n-2}}\dots f_{\delta_0}\,\dots t_{\delta_2} \delta_2 Df_{\delta_{1}}f_{\delta_0} \,t_{\delta_1} \delta_1 Df_{\delta_0} \,.
    \]
Finally, the rules 
\[Df_\delta\to \delta D \qquad \qquad  D_{\delta_0}\to \delta_0 \]
  complete the generation of words of the form~\eqref{eq-derv2}.

\medskip

It remains to define the morphism~$\varphi: \Sigma^* \rightarrow M_d(\mathbb{Q})$ where $d=|Q| \times (n+1)$. 
As described above, each matrix~$\varphi(\sigma)$, for all~$\sigma \in \Sigma$,  consists of $|Q|\times |Q|$ blocks of $(n+1) \times (n+1)$  matrices,
with only one nonzero block. 
Define $\varphi(\delta_0)$ to be 
such a matrix with the nonzero block $B:=(\bm{0},1)\otimes \bm{1}$ 
at index $(q_0,q_0)$. Observe that $\varphi(\delta_0)$ encodes the initial configuration. 
Let $\delta\in \Delta$ such that $\delta=(p,\bm{v},q)$. 
The matrix 
$\varphi(\delta)$ has the nonzero block $B$ at index $(q,p)$ where $B$ is constructed as in~\Cref{clm:tarnsred}, where for all $i\in \{1,\ldots,n\}$, we define
\[B_{i:}=\begin{cases}
    \bm{0}^\top \quad & \text {if } v_i=\mathrm{reset}\\
   \bm e_i^\top \quad & \text{if } v_i=0\\
   ( \bm e_i\pm \bm e_{n+1})^\top \quad &\text{if } v_i=\pm 1 
\end{cases}\]
where $B_{i:}$ denotes the $i$-th row of $B$. We  define $B_{n+1:}$ to be $e_{n+1}$.

Recall that at most one coordinate of $\bm{v}$ is $-1$; let $\ell$ be the index of this coordinate if it exists, and~$\ell=1$ otherwise.
For the letter~$t_{\delta}$,  we define  $\varphi(t_{\delta})$ to be a matrix with the nonzero block $B$ at index $(q,q_0)$, where~$B:=\bm 1\otimes (\bm e_\ell+\bm e_{n+1})$ is chosen as in~\Cref{clm:tarnsred}.

The correctness proof  follows from the next claim, see the proof in~\Cref{app:indexG}:

\begin{restatable}{claim}{clindexgrammar}\label{cl:indexgrammar}
     The set $\mathrm{Reach(\mathcal{V})}$ is finite if, and only if, $\overline{ \varphi(L(G))}$   has dimension at most one.
\end{restatable}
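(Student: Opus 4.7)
The plan is to connect the Zariski dimension of $\overline{\varphi(L(G))}$ to the cardinality of $\mathrm{Reach}(\mathcal{V})$ through the encoding invariant implicit in the construction of $G$ and $\varphi$.

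First, I would establish by structural induction on derivations in $G$ the following invariant: every $w \in L(G)$ corresponds to a transition sequence of $\mathcal{V}$ starting from $(q_0,\bm{0})$, and $\varphi(w) = \bm{0}$ precisely when that sequence is invalid (attempts to decrement a zero counter); otherwise, for a valid sequence reaching $(q,\bm{x})$ one has $\varphi(w) = c_w \cdot E(q,\bm{x})$ for some nonzero scalar $c_w \in \mathbb{Q}$, where $E(q,\bm{x})$ denotes the canonical encoding matrix whose unique nonzero block, at block-position $(q,q_0)$, equals $(\bm{x},1)\otimes\bm{1}$. The induction reads $w$ from right to left along the nested decomposition in~\eqref{eq-w-unde}: each transition letter $\delta_i$ updates the encoded vector via~\Cref{clm:tarnsred} (mirroring the reset-VASS update rule, producing the invalid value $x_i = -1$ if one attempts to decrement a zero), while each check letter $t_{\delta_i}$ either rescales the encoding by the strictly positive factor $(n+1)(x_\ell+1)$ when $x_\ell \geq 0$, or annihilates the matrix when $x_\ell = -1$ (since then $x_\ell+1 = 0$), both governed by~\Cref{clm-updatedelta}.

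For the forward direction, if $\mathrm{Reach}(\mathcal{V})$ is finite then the invariant gives
\[
\varphi(L(G)) \;\subseteq\; \{\bm{0}\} \cup \bigcup_{(q,\bm{x}) \in \mathrm{Reach}(\mathcal{V})} \mathbb{Q} \cdot E(q,\bm{x}),
\]
a finite union of lines through the origin. This is a one-dimensional algebraic set, so $\overline{\varphi(L(G))}$ has dimension at most one.

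For the backward direction, assume $\mathrm{Reach}(\mathcal{V})$ is infinite. By pigeonhole on the finite state set, some $R_q := \{\bm{x} : (q,\bm{x}) \in \mathrm{Reach}(\mathcal V)\}$ is infinite; since distinct $\bm{x}$ give pairwise linearly independent encodings $E(q,\bm{x})$ (they determine distinct column vectors $(\bm{x},1)^\top$), the matrices $c_w E(q,\bm{x}_w)$ appearing in $\varphi(L(G))$ lie on infinitely many pairwise distinct lines through the origin. Combining pumping of control-flow cycles of $\mathcal V$ (which must exist in long runs since $Q$ is finite) with the multiplicative structure of $c_w$, I would argue that one obtains, for each $\bm{x} \in R_q$, infinitely many distinct scalar multiples of $E(q,\bm{x})$, so the closure contains the full affine cone over the infinite set of projective directions $\{[E(q,\bm{x})]\}_{\bm{x} \in R_q}$. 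An infinite subset of projective space has Zariski closure of dimension at least one, hence its affine cone has dimension at least two, contradicting $\dim \overline{\varphi(L(G))} \leq 1$.

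The main obstacle is the backward direction: since the scalars $c_w$ are rigidly determined by the encoded transition sequence rather than being free parameters, one must carefully exploit pumping in $\mathcal V$ — for example iterating a cycle that returns to a state $p$ visited twice in some long run — to realize enough distinct scalings $c_w$ per reachable configuration so that $\overline{\varphi(L(G))}$ is genuinely conic through the infinite direction set, rather than being accidentally trapped inside a one-dimensional algebraic curve that threads those directions in a polynomially constrained way.
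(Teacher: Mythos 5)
Your induction invariant and your forward direction are essentially the paper's: valid runs map to a single nonzero block with identical columns $c_w\,(\bm{x},1)^\top$, invalid ones map to $\bm{0}$, and finiteness of $\mathrm{Reach}(\mathcal{V})$ traps $\varphi(L(G))$ in a finite union of lines. The gap is in your backward direction. You propose to obtain, for each reachable $(q,\bm{x})$, infinitely many distinct scalar multiples of $E(q,\bm{x})$ by pumping a control-flow cycle, so that $\overline{\varphi(L(G))}$ contains the full cone over infinitely many directions. But iterating a cycle of a reset VASS changes the counter values, so pumping produces points attached to \emph{different} configurations, not new scalars for the same one; and in general a fixed configuration is reached by only finitely many (possibly a single) run. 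Take a one-state VASS whose only transition increments a counter: $(q_0,n)$ is reached by exactly one run, so $\varphi(L(G))$ contributes exactly one nonzero point $c_n E(q_0,n)$ per configuration, and the full-cone step is simply unavailable. As you yourself observe, infinitely many distinct directions alone do not force dimension at least two (the points $(n,1)$ lie on infinitely many lines through the origin yet on the one-dimensional set $v=1$), so the proposal stops exactly at the difficult point without resolving it.

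The paper closes this gap by a growth-rate argument in place of the cone argument. Since $\mathrm{Reach}(\mathcal{V})$ is infinite, one can fix a transition $\delta$ entering a state $q$ and a coordinate $i$ such that configurations reached with $\delta$ as last transition have unbounded $i$-th coordinate. Inside the nonzero block at index $(q,q_0)$, consider two ambient coordinates: an entry of the $i$-th row, whose value is $y\,x_i$, and the corresponding entry of the $(n+1)$-th row, whose value is the unit $y$. By \Cref{clm-updatedelta}, each check letter $t_{\delta'}$ multiplies the unit by $(n+1)\,y'\,(x_\ell+1)\ge 2$ on valid runs, so after a run of length $k$ the unit is at least $2^k$, whereas $x_i\le k$; hence $y\ge 2^{x_i}$ with $x_i$ unbounded over the chosen family. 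Writing a putative relation $P(u,v)=\sum_D v^D Q_D(x)$ at the points $(u,v)=(y x_i,y)$, the top total-degree part dominates all lower-degree terms once $y\ge 2^{x_i}$ and $x_i$ is large, so no nonzero polynomial in these two coordinates vanishes on $\varphi(L(G))$. The dimension criterion of Cox--Little--O'Shea quoted in the paper (dimension equals the largest number of coordinates on which the vanishing ideal induces no nonzero relation) then gives $\dim\overline{\varphi(L(G))}\ge 2$. Replacing your cone/pumping step by this two-coordinate estimate repairs the argument; the rest of your proposal matches the paper.
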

\end{proof}

\paragraph{Acknowledgements}
Mahsa Shirmohammadi and Mahsa Naraghi were supported by the
ANR grant VeSyAM (ANR-22-CE48-0005).
James Worrell  and Rida Ait El Manssour were supported by EPSRC Fellowship EP/X033813/1. 
\bibliographystyle{plain}
\bibliography{ref}

\newpage
\appendix

\section{Detailed Computations for the Examples in the Overview}
\label{app:computation}

We start with \Cref{ex:anbndyck} where~$\varphi(a) = \begin{pmatrix}
    1 & 1\\
    0 & 1
\end{pmatrix}$ and $ \varphi(b) = \begin{pmatrix}
    1 & 0 \\
    1 & 1
\end{pmatrix}$.  Note that 
\[
\varphi(a)^m = \begin{pmatrix}
    1 & m\\
    0 & 1
\end{pmatrix}, \hspace{0.3cm} \varphi(b)^n = \begin{pmatrix}
    1 & 0 \\
    n & 1
\end{pmatrix},
\] 
then $\varphi(a)^m \varphi(b)^n = \begin{pmatrix}
    mn + 1 & m\\
    n & 1
\end{pmatrix}.$ Therefore, the Zariski closure of 
\[
\varphi(L_C) = \{ \varphi(a)^m \varphi(b)^n : m \ge n \} 
\] 
is the zero set of the ideal $I_C = \langle x_{11} - x_{12}x_{21} - 1, x_{22} - 1 \rangle.$ Moreover, the closure of 
\[
\varphi(L_R) = \{ \varphi(a)^n\varphi(b) ^n : n \in \N\}
\]
is the zero set of the ideal $ I_R = \langle  x_{11} - x_{12}x_{21} - 1, x_{22} - 1, x_{12} - x_{21} \rangle. $

\vspace{0.3cm}

For \Cref{ex:Dyck} where again $\varphi(a) = \begin{pmatrix}
    1 & 1\\
    0 & 1
\end{pmatrix}$ and $ \varphi(b) = \begin{pmatrix}
    1 & 0 \\
    1 & 1
\end{pmatrix}$, we have $\varphi(L)$ is stable under multiplication and contains $\{ \varphi(a)^n \varphi(b)^n : n \in \N\}.$ Therefore, 
\[
\overline{\{\varphi(a)^{n_1} \varphi(b)^{n_1} \varphi(a)^{n_2} \varphi(b)^{n_2}\varphi(a)^{n_3}\varphi(b)^{n_3} : n_1, n_2, n_3 \in \N\}} \subseteq \overline{\varphi(L)}.
\]
We will also show that 
\[
\overline{\varphi(L)} \subseteq \overline{\{\varphi(a)^{n_1} \varphi(b)^{n_1} \varphi(a)^{n_2} \varphi(b)^{n_2}\varphi(a)^{n_3}\varphi(b)^{n_3} : n_1, n_2, n_3 \in \N\}}.
\]
 In fact, the entries of $\varphi(a)^{n_1} \varphi(b)^{n_1} \varphi(a)^{n_2} \varphi(b)^{n_2}\varphi(a)^{n_3}\varphi(b)^{n_3}$ are polynomials in $n_1, n_2, n_3$. 
 Denote by~$M_{ij}(n_1, n_2 , n_3)$ the  entries of $\varphi(a)^{n_1} \varphi(b)^{n_1} \varphi(a)^{n_2} \varphi(b)^{n_2}\varphi(a)^{n_3}\varphi(b)^{n_3}$ for $1 \leq i,j \leq 2$.  We introduce new variables $t_1, t_2, t_3$ that  play the role of the integers $n_1, n_2, n_3$ and we consider the ideal $J \subseteq \mathbb Q[ x_{11}, x_{12},x_{21},x_{22}, t_1, t_2, t_3]$ defined by
 \[
 J \coloneqq \langle x_{11} - M_{11}(t_1, t_2, t_3), x_{12} - M_{12}(t_1, t_2, t_3), x_{21} - M_{21}(t_1, t_2, t_3), x_{22} - M_{22}(t_1, t_2, t_3)\rangle.
 \]
 By eliminating the variables $t_1, t_2, t_3$ with the help of a computer algebra system such as \textsc{Macaulay2} we find that these entries satisfy only one equation $x_{11} x_{22} - x_{12}x_{21} -1 = 0$.
Since the matrices $ \varphi(a)$ and $\varphi(b)$ have determinant 1, every element in $ \varphi(L)$ has determinant 1. We conclude that $\varphi(L)$ is the zero set of the ideal 
\[
I_L= \langle x_{11} x_{22} - x_{12} x_{21} -1 \rangle.
\]

\vspace{0.3cm}

We now move to \Cref{ex:RvsC2} where we considered two monoid morphisms $\varphi_1 : \Sigma^* \to M_2(\Q)$ and $\varphi_2 : \Sigma^* \to M_3(\Q)$. 
We start with $\varphi_1$  where $\varphi_1(a) = \begin{pmatrix}
    2 & 0 \\
    0 & 4 
\end{pmatrix}$ and $\varphi_1(b) = \begin{pmatrix}
    1 & 0 \\
    1 & 1
\end{pmatrix}.$ To compute the closure of the 1-VASS coverability and reachability we follow the reduction given in the proof of~\Cref{pro:VASStoMON}. We define the new morphism $\varphi_1' : \Sigma^* \to M_6(\Q)$ as defined in the proof of \Cref{pro:VASStoMON}. Therefore, we have

 $\varphi_1'(a) = \begin{pmatrix}
   \begin{pmatrix} \varphi_1(a) & 0\\0 &1 \end{pmatrix} & 0_{3\times3 }\\
   0_{3\times3 } & 0_{3\times3 }
\end{pmatrix}$ and $\varphi_1'(b) = \begin{pmatrix}
    0_{3\times3 } & \begin{pmatrix} \varphi_1(b) & 0\\0 &1 \end{pmatrix} \\
    \begin{pmatrix} \varphi_1(b) & 0\\0 &1 \end{pmatrix}  & 0_{3\times3 }
\end{pmatrix}.$

Now we solve the \textsc{Cover Closure} and the \textsc{Reach Closure} for the morphism $\varphi_1'$. First we define $\mathcal{M} \coloneqq \overline{ \langle \varphi_1'(a), \varphi_1'(b) \rangle}.$ We claim that the \textsc{Cover Closure} for $\varphi_1'$ reduces to computing $\mathcal{M}$. In fact, in the coverability language, every non-empty word must start with $a$. Moreover, for every~$w \in \Sigma^*$, since $\omega(a) = 1 > 0$ there exists an integer $n_0$ large enough that makes every prefix of $a^{n_0}w$ to have a non-negative total weight, which implies~$a^{n_0}w$ is in the coverability language. Then,   for every $n \ge n_0$, $a^n w$ is in the coverability language. Therefore,  for every $n \ge n_0$, $\varphi_1'(a)^n \varphi_1'(w)$ is in the closure of the image of the coverability  language by $\varphi_1'$. Since the matrix $\varphi_1'(a)$ is stable, that is, $\rank(\varphi_1'(a)) = \rank(\varphi_1'(a)^2)$  the semigroup $\overline{\langle \varphi_1'(a) \rangle}$ is a group and contains an identity element denoted by $I_a$ (See \Cref{lem:closure-is-group} for more details). Thus, by a similar argument as in the proof of \Cref{clm:reduce} which relies on the Zariski continuity  we have that $I_a \varphi(w)$ is in the closure of the image of the coverability language by $\varphi_1'$ for every $w \in \Sigma^*$. We conclude that  the closure of the image of the coverability  language by $\varphi_1'$ is $\overline{I_a \mathcal{M}} \cup \{ \id_6 \}$. 

For the \textsc{Reach Closure}, we  first define $\mathcal{N} \coloneqq \overline{\langle \varphi_1'(a)^n \varphi_1'(b)^n, \varphi_1'(b)^n \varphi_1'(a)^n : n \in \N \rangle }$. We claim that the  \textsc{Reach Closure} reduces to the computation of $\mathcal{N}$. In fact, every non-empty word in the reachability language must start with $a$ and end with $b$. Moreover, for every word $w \in \Sigma^*$ such that $\omega(w) = 0$, there exists $n_0$ large enough  that makes every prefix of $a^{n_0} w b^{n_0}$ have a non-negative total weight, which implies  $a^{n_0} w b^{n_0}$ is in the reachability language.  Then for every $n \ge n_0$, $a^{n_0} w b^{n_0}$ is in the reachability language. Therefore, $\varphi(a)^n \varphi(w) \varphi(b)^n$ is in the closure of the image of the reachability  language by $\varphi_1'$  for every $n \ge n_0$.
Since $ \varphi_1'(a), \varphi_1'(b)$ are stable matrices, the semigroup $\overline{\langle (\varphi_1'(a), \varphi_1'(b)) \rangle} \subseteq M_6(\overline{\Q}) \times M_6(\overline{\Q})  $ is a group and it contains an identity element $ ( I_a, I_b)$ (See \Cref{lem:closure-is-group}).
Then by the Zariski continuity and similar argument used in the proof of \Cref{clm:reduce2} we have that $I_a \varphi(w) I_b$ is in the closure of the image of the reachability  language by $\varphi_1'$ for every $w \in \Sigma^* $ with $\omega(w) =0$. Moreover, for every word $w \in \Sigma^*$ such that $\omega(w)$, we have $\varphi_1'(w) \in \mathcal{N}$ (this is also a consequence of the stability of the matrices $\varphi_1'(a), \varphi_1'(b)$ and the Zariski continuity). Thus we conclude that  the closure of the image of the reachability  language by $\varphi_1'$ is $\overline { I_a \mathcal{N} I_b} \cup \{ \id_6\}$.

Let us now compute $\mathcal{M}$,
note that
\[
\varphi_1(a)^m \varphi_1(b)^{n} = \begin{pmatrix}
    2^m& 0  \\
    n 4^m & 4^m 
\end{pmatrix},  \hspace{0.2cm} \varphi_1(b)^n \varphi_1(a)^{m} = \begin{pmatrix}
    2^m& 0  \\
    n 2^m & 4^m 
\end{pmatrix},
\]
then, define $M(x,y) \coloneqq \begin{pmatrix}
    x & 0 & 0 \\
    y & x^2 & 0\\
    0 & 0 & 1
\end{pmatrix}$ for every $x,y \in \overline{\Q}$. Hence,
\begin{align*}
\mathcal{M}  =  \Bigg\{ & \begin{pmatrix}
M(x,y)& 0_{3 \times 3}\\
    0_{3 \times 3} & 0_{3 \times 3}
\end{pmatrix}, \begin{pmatrix}
    0_{3 \times 3} & M(x,y)\\
    0_{3 \times 3} & 0_{3 \times 3}
\end{pmatrix} , \begin{pmatrix}
    0_{3 \times 3}& 0_{3 \times 3}\\
    M(x,y) & 0_{3 \times 3}
\end{pmatrix} , \begin{pmatrix}
   0_{3 \times 3}& 0_{3 \times 3}\\
   0_{3 \times 3} & M(x,y)
\end{pmatrix}, \\
& \begin{pmatrix}
  M(1,y)& 0_{3 \times 3}\\
   0_{3 \times 3} & M(1,y)
\end{pmatrix}, \begin{pmatrix}
   0_{3 \times 3}& M(1,y)\\
   M(1,y) & 0_{3 \times 3}
\end{pmatrix} : x,y\in \overline{\Q} \Bigg\}, 
\end{align*}
where the matrices $\begin{pmatrix}
  M(1,y)& 0_{3 \times 3}\\
   0_{3 \times 3} & M(1,y)
\end{pmatrix}, \begin{pmatrix}
   0_{3 \times 3}& M(1,y)\\
   M(1,y) & 0_{3 \times 3}
\end{pmatrix}$  correspond to $\overline{\langle \varphi'_1(b) \rangle}.$
 \begin{itemize}
     \item The  closure of the image of the coverability  language by $\varphi_1'$ is
     \[
     \overline{I_{a} \mathcal{M}} \cup \{ \id_6\}  = \left\{ \id_6, \begin{pmatrix}
M(x,y)& 0_{3 \times 3}\\
    0_{3 \times 3} & 0_{3 \times 3}
\end{pmatrix}, \begin{pmatrix}
    0_{3 \times 3} & M(x,y)\\
    0_{3 \times 3} & 0_{3 \times 3}
\end{pmatrix}  : x,y \in \overline{\Q} \right\}.
\]
         \item  Note that,  the Zariski closure $\overline{\langle \varphi_1(a)^n \varphi_1(b)^n : n \in \N \rangle }$ is $\left\{ \begin{pmatrix}
             x & 0\\
             y & x^2
         \end{pmatrix} : x, y \in \overline{\Q}\right\}$, which implies that in this example we have $\mathcal{M},$ and $\mathcal{N}$ coinside, that is,     
         \[
         \mathcal{N} \coloneqq \overline{\langle \varphi_1'(a)^n \varphi_1'(b)^n, \varphi_1'(b)^n \varphi_1'(a)^n : n \in \N\rangle} = \overline{\langle \varphi_1'(a), \varphi_1'(b) \rangle} = \mathcal{M}. 
         \]
          Thus,  the closure of the image of the reachability  language by $\varphi_1'$ is  \[
         \overline{ I_a\mathcal{M} I_b} \cup \{ \id_6\}   = \left\{\id_6, \begin{pmatrix}
M(x,y)& 0_{3 \times 3}\\
    0_{3 \times 3} & 0_{3 \times 3}
\end{pmatrix}, \begin{pmatrix}
    0_{3 \times 3} & M(x,y)\\
    0_{3 \times 3} & 0_{3 \times 3}
\end{pmatrix}  : x,y \in \overline{\Q} \right\}.
\]
 \end{itemize}

We go back to the original problem of computing the closure of the 1-VASS coverability and reachability for the morphism $\varphi_1$. As explained in the proof of \Cref{pro:VASStoMON} we conclude that the closures of  $ \varphi_1(L_C)$ and~$\varphi_1(L_R)$ coincide and are given as the zero set of the ideal $ \langle x_{12}, x_{11}^2 - x_{22} \rangle$.

Now we consider the morphism $\varphi_2$ where  
\[
\varphi_2(a) =\begin{pmatrix}
    1 & 1 & 0 \\
    0 & 1 & 1 \\
    0 & 0 & 1  
\end{pmatrix} , \hspace{0.3cm} \varphi_2(b) = \begin{pmatrix}
    1 & - 1 & 0 \\
    0 & 1 & 1 \\
    0 & 0 & 1 
\end{pmatrix}.\] We repeat the same steps as we have done  for the morphism $ \varphi_1'$.
We define again the new morphism $ \varphi_2' : \Sigma^* \to M_8(\Q)$ as in the proof of \Cref{pro:VASStoMON}:

 \[\varphi_2'(a) = \begin{pmatrix}
   \begin{pmatrix} \varphi_2(a) & 0\\0 &1 \end{pmatrix} & 0_{4\times4 }\\
   0_{4\times4 } & 0_{4\times4 }
\end{pmatrix},\hspace{0.3cm}\varphi_2'(b) = \begin{pmatrix}
    0_{4\times4 } & \begin{pmatrix} \varphi_2(b) & 0\\0 &1 \end{pmatrix} \\
    \begin{pmatrix} \varphi_2(b) & 0\\0 &1 \end{pmatrix}  & 0_{4\times4 }
\end{pmatrix}.\]
Note that: 
\[
\varphi_2(a)^m \varphi_2(b)^{n} = \begin{pmatrix}
    1 & m - n & 1/2(m^2 - n^2 + n - m + mn)\\
    0 & 1 & m + n \\
    0 & 0 & 1
\end{pmatrix}.
\]
In fact, the entry $x_{12}$ reflects the number of $\varphi_2(a)$ minus the number of $\varphi_2(b)$ in every product of $\varphi_2(a)$'s and~$\varphi_2(b)$'s. Thus, we have 
\small
\begin{align*}
    \mathcal{M} & \coloneqq \overline{\langle \varphi'_2(a), \varphi_2'(b) \rangle }\\
    & = \Bigg\{  \begin{pmatrix}
M(x,y,z)&  0_{4 \times 4}\\
     0_{4 \times 4} &  0_{4 \times 4}
\end{pmatrix}, \begin{pmatrix}
     0_{4 \times 4} & M(x,y,z)\\
     0_{4 \times 4} &  0_{4 \times 4}
\end{pmatrix} , \begin{pmatrix}
     0_{4 \times 4}&  0_{4 \times 4}\\
    M(x,y,z) &  0_{4 \times 4}
\end{pmatrix} , \begin{pmatrix}
    0_{4 \times 4}&  0_{4 \times 4}\\
    0_{4 \times 4} & M(x,y,z)
\end{pmatrix}, \\
& \hspace{1cm} \begin{pmatrix}
  M(-x,x,\tfrac{x(x-1)}{2})&  0_{4 \times 4}\\
    0_{4 \times 4} &M(-x,x,\tfrac{x(x-1)}{2})
\end{pmatrix}, \begin{pmatrix}
   0_{4 \times 4}& M(-x,x,\tfrac{x(x-1)}{2})\\
  M(-x,x,\tfrac{x(x-1)}{2}) &  0_{4 \times 4}
\end{pmatrix} : x,y,z\in \overline{\Q} \Bigg\},
\end{align*}
\normalsize
where
$M(x,y,z) \coloneqq  \begin{pmatrix}
    1 & x & z & 0\\
    0 & 1 & y & 0 \\
    0 & 0 & 1 & 0 \\
    0 & 0 & 0 & 1
\end{pmatrix}.$
Then  the closure of the image of the coverability  language by $\varphi_2'$ is 
     \begin{align}\label{eq:N}
\overline{I_{a} \mathcal{M}}\cup\{\id_8\}  = \Bigg\{ & \id_8, \begin{pmatrix}
M(x,y,z)&  0_{4 \times 4}\\
     0_{4 \times 4} &  0_{4 \times 4}
\end{pmatrix}, \begin{pmatrix}
     0_{4 \times 4} & M(x,y,z)\\
     0_{4 \times 4} &  0_{4 \times 4}
\end{pmatrix} , \begin{pmatrix}
     0_{4 \times 4}&  0_{4 \times 4}\\
    M(x,y,z) &  0_{4 \times 4}
\end{pmatrix} ,  \notag \\
& \begin{pmatrix}
    0_{4 \times 4}&  0_{4 \times 4}\\
    0_{4 \times 4} & M(x,y,z)
\end{pmatrix}  : x,y, z\in \overline{\Q} \bigg\}. 
\end{align}

In order to compute  the closure of the image of the reachability  language by $\varphi_2'$, we first need to compute 
\[
\mathcal{N} \coloneqq\overline{\langle \varphi'_2(a)^n \varphi_2'(b)^n,\varphi'_2(b)^n \varphi_2'(a)^n : n \in \N \rangle}.
\]
By computing $\varphi'_2(a)^n \varphi_2'(b)^n,\varphi'_2(b)^n \varphi_2'(a)^n$ for any $n \in \N$ we can realize that:
\small
\[
\mathcal{N} =   \left\{ \begin{pmatrix}
M(0,y,z)&  0_{4 \times 4}\\
     0_{4 \times 4} &  0_{4 \times 4}
\end{pmatrix}, \begin{pmatrix}
     0_{4 \times 4} & M(0,y,z)\\
     0_{4 \times 4} &  0_{4 \times 4}
\end{pmatrix} , \begin{pmatrix}
     0_{4 \times 4}&  0_{4 \times 4}\\
    M(0,y,z) &  0_{4 \times 4}
\end{pmatrix} , \begin{pmatrix}
    0_{4 \times 4}&  0_{4 \times 4}\\
    0_{4 \times 4} & M(0,y,z)
\end{pmatrix}  : y, z\in \overline{\Q} \right\}.
\]
\normalsize
Hence,  the closure of the image of the reachability language by $\varphi_2'$ is:
\begin{equation}\label{eq:NZ}
\overline{I_a \mathcal{N} I_b} \cup\{\id_8\} =  \left\{ \id_8, \begin{pmatrix}
M(0,y,z)&  0_{4 \times 4}\\
     0_{4 \times 4} &  0_{4 \times 4}
\end{pmatrix}, \begin{pmatrix}
     0_{4 \times 4} & M(0,y,z)\\
     0_{4 \times 4} &  0_{4 \times 4}
\end{pmatrix} : y,z \in \overline{\Q} \right\}.
\end{equation}
From \Cref{eq:NZ} and \Cref{eq:N} we conclude that the closures of $\varphi_2(L_C)$ and $\varphi_2(L_R)$ are given as the zero set of the ideals $I_C = \langle x_{11} - 1, x_{22} - 1, x_{33} - 1, x_{21},  x_{31},x_{32}\rangle$ and $I_R = I_C + \langle x_{12} \rangle$ respectively.

\section{Proof of Proposition~\ref{pro:VASStoMON}}
\label{sec:VASStoMON}
\VASStoMON*
\begin{proof}
We give the proof for the case of 1-VASS coverability languages.  The case of reachability languages follows with minor modifications.
Since the language $L_C(\omega)$ is the coverability language of a (one-state) 1-VASS, it suffices to reduce the problem of computing
the closure of a 1-VASS coverability language to \textsc{Cover Closure}.  To this end,
we first observe that every 1-VASS coverability language is the homomorphic image of a language of the form 
$L\cap L_{C}(\omega) \subseteq \Sigma^*$ for some finite alphabet $\Sigma$ and regular language $L$.  Indeed, given a 1-VASS $\mathcal V$, let $\Sigma$ be the set of transitions of $\mathcal V$,
$L$ the set of sequences of transitions from an initial to accepting state, and $\omega$ the map assigning each transition to its weight.  Then  the coverability language of $\mathcal V$ is the image 
of $L\cap L_{C}(\omega)$ under the morphism that maps each transition to its label.
To realize the claimed reduction, it thus suffices to reduce the problem of computing the Zariski closure of the set $\varphi(L\cap L_{C})$
for a morphism $\varphi:\Sigma^*\rightarrow M_d(\mathbb Q)$ to \textsc{Cover Closure}.  This is shown in the following claim:
\begin{claim}
\label{prop:reductionwithnostate}
    The problem of computing the Zariski closure of $\varphi(L_{C}(\omega)\cap L)$, given a regular language $L\subseteq \Sigma^*$ and morphisms $\varphi:\Sigma^*\rightarrow M_d(\mathbb Q)$ and $\omega:\Sigma^*\rightarrow\mathbb Z$, can be reduced to \textsc{Cover Closure} problem.
\end{claim}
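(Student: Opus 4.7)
The plan is to realize the reduction via a product construction that absorbs the regular constraint into the matrix dimension. I would fix a deterministic finite automaton $\mathcal A = (Q, \Sigma, \delta, q_0, F)$ recognizing $L$, with states $Q = \{1, \ldots, n\}$ and initial state $q_0 = 1$, and define a morphism $\varphi' : \Sigma^* \to M_{dn}(\mathbb Q)$ by specifying, for each $\sigma \in \Sigma$, a block matrix $\varphi'(\sigma)$ whose $(i,j)$-th $d \times d$ block equals $\varphi(\sigma)$ when $\delta(j,\sigma) = i$ and is the zero block otherwise. A short induction using the block multiplication rule and the determinism of $\mathcal A$ then shows that the $(i,j)$-block of $\varphi'(w)$ equals $\varphi(w)$ precisely when $\delta^*(j,w) = i$, and is zero otherwise. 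Setting $\omega' := \omega$ preserves the weight hypothesis $\omega'(\sigma) \in \{-1,0,1\}$ and yields $L_R(\omega') = L_R(\omega)$, so calling \textsc{Reach Closure} on $(\varphi', \omega')$ produces generators for the ideal of the variety $Y := \overline{\varphi'(L_R(\omega))} \subseteq M_{dn}(\overline{\mathbb Q})$.

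To extract the target I would then apply the polynomial (hence Zariski-continuous) map $\pi : M_{dn}(\overline{\mathbb Q}) \to M_d(\overline{\mathbb Q})$ sending a block matrix $M$ to $\sum_{q_f \in F} M_{q_f,1}$, the sum of its $(q_f,1)$-blocks indexed by accepting states. By determinism of $\mathcal A$, for each word $w$ there is a unique $i \in Q$ with $\delta^*(1,w) = i$, so $\pi(\varphi'(w))$ equals $\varphi(w)$ when $w \in L$ and equals the zero matrix when $w \notin L$. Invoking \Cref{fact_closure_closure} then yields
\[
\overline{\pi(Y)} \;=\; \overline{\pi(\varphi'(L_R(\omega)))} \;=\; \overline{\varphi(L_R(\omega) \cap L) \cup S}\,,
\]
with $S = \{0\}$ whenever $L_R(\omega) \not\subseteq L$ and $S = \emptyset$ otherwise. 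The left-hand side is effectively computable from generators of $Y$ by standard Gr\"obner-basis elimination, as the Zariski closure of the image of an algebraic set under a polynomial map.

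The main obstacle is the potentially spurious origin: the computed variety $V := \overline{\pi(Y)}$ overapproximates the target $X := \overline{\varphi(L_R(\omega) \cap L)}$ by at most the singleton $\{0\}$. I would handle this via irreducible decomposition $V = V_1 \cup \cdots \cup V_k$. Since each $V_i$ is irreducible and contained in the closed set $X \cup \{0\}$, one has $V_i \subseteq X$ or $V_i = \{0\}$, and every component other than $\{0\}$ already lies in $X$; letting $W$ denote the union of the non-$\{0\}$ components, it follows that $X$ is either $W$ or $W \cup \{0\}$. The remaining ambiguity arises only when $\{0\}$ appears as an isolated component of $V$, and resolving it amounts to the effective question of whether the origin belongs to the closure of $\varphi(L_R(\omega) \cap L)$; this can be attacked by combining decidability of 1-VASS reachability on the product of $\mathcal A$ and the 1-VASS for $L_R(\omega)$ (to search for a witness $w$ with $\varphi(w) = 0$, or to certify emptiness of $L_R(\omega)\cap L$) with the algebraic-geometric structure of $V$ already computed. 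Together with the call to \textsc{Reach Closure} and the elimination step, this completes the reduction.
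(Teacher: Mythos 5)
Your construction is the plain product construction, and this is precisely where it diverges from the paper: the paper's $\varphi'$ uses blocks of size $(d+1)\times(d+1)$ with a constant bottom-right entry $1$ acting as an indicator of the run of $\mathcal A$ (the footnote notes this corrects exactly the oversight in prior work of doing what you do). With that indicator, membership in $L$ becomes a \emph{closed} condition on $\overline{\varphi'(L_R(\omega))}$ (some accepting block has indicator entry $1$), so one intersects with this closed set and then projects; no spurious point is ever created. Without the indicator, your map $\pi$ sends every $w\in L_R(\omega)\setminus L$ to the zero matrix, and the whole burden of the proof falls on eliminating the possible extra component $\{0\}$.

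That is where your argument has a genuine gap. You correctly reduce the ambiguity to the question ``does $\bm 0$ belong to $X=\overline{\varphi(L_R(\omega)\cap L)}$ when $\{0\}$ is an isolated component of $V$?'' (and, under that isolation promise, this is equivalent to the existence of an actual witness $w\in L_R(\omega)\cap L$ with $\varphi(w)=\bm 0$, since an isolated point of a closure lies in the set). But you give no decision procedure for it. Enumerating words to ``search for a witness $w$ with $\varphi(w)=\bm 0$'' is only a semi-procedure; deciding emptiness of $L_R(\omega)\cap L$ via 1-VASS reachability answers a different question; and whether some word of $L_R(\omega)\cap L$ maps to the zero matrix is a constrained mortality problem, which is undecidable in general already for $d=3$ (take $\omega\equiv 0$ and $L=\Sigma^*$). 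So for your patch to work you would have to prove that the promise ``$\{0\}$ is an isolated irreducible component of $V$'' makes this mortality-type question decidable, and no such argument is given; the reduction as proposed is therefore not effective. A secondary, fixable slip: with your convention that the $(i,j)$ block of $\varphi'(\sigma)$ is $\varphi(\sigma)$ when $\delta(j,\sigma)=i$, the morphism multiplies words in the wrong order (the $(i,j)$ block of $\varphi'(w)$ tracks runs of the \emph{reversed} word), so either the convention must be transposed (source $=$ row, target $=$ column, summing the $(1,q_f)$ blocks, as in the paper) or the automaton reversed. The clean repair for the main gap is simply to adopt the paper's extra indicator coordinate, which makes the post hoc component analysis unnecessary.
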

\begin{proofclaim}
    Let $\mathcal A$ be a deterministic finite automaton that accepts language $L$.  Suppose that $A$ has set of states $Q=\{q_1,\ldots,q_k\}$ with $q_1$ initial and $F\subseteq Q$ the set of  accepting states.
    Define  
 a morphism $\varphi': \Sigma^* \rightarrow M_{k(d+1)}(\mathbb Q)$
where each $\varphi'(\sigma)$ is a block matrix of the form 
\[ \varphi'(\sigma) := \begin{pmatrix}  
  M_{1,1} & M_{1,2} & \cdots & M_{1,k}\\
  M_{2,1} & M_{2,2} & \cdots & M_{2,k} \\
   \vdots & \vdots & \ddots & \vdots \\
  M_{k,1} & M_{k,2} & \cdots & M_{k,k} 
\end{pmatrix} \,,\]
where blocks $M_{i,j}$ are zero except for those corresponding to transitions
$(q_i,\sigma,q_j)$ of~$\mathcal{A}$ labeled by~$\sigma$.  In such cases, for each transition from state $q_i$ to $q_j$,
we set $M_{i,j} = \begin{pmatrix} \varphi(\sigma) & 0\\0 &1 \end{pmatrix}$.\footnote{The bottom-right entry 1 of the matrices $M_{i,j}$ in $\varphi'(\sigma)$
allows us to distinguish between the presence of a transition $(q_i,\sigma,q_j)$ with $\varphi(\sigma)=\boldsymbol 0$ (modeled by 
$\begin{pmatrix} \boldsymbol 0&0\\0 & 1 \end{pmatrix}$) and the absence of a transition $(q_i,\sigma,q_j)$ (modeled by setting the whole block $M_{i,j}$ to be zero). Note that this corrects an oversight in~\cite[Section 4]{HumenbergerJK18}.}
Then the Zariski closure of $\varphi(L_{C}(\omega)\cap L)$
can be computed from that of 
$\varphi'(L_{C}(\omega))$ as follows.
First intersect
$\varphi'(L_{C}(\omega))$ with the (closed) set of matrices~$M$ such that for some state $q_i\in F$ the bottom-right entry of block $M_{1i}$ equals 1.  Then map the resulting set into~$M_d(\mathbb Q)$ by sending  matrix $M$ to $\sum_{q_i\in F} M_{1i}$
and projecting out the last row and column.
\end{proofclaim}
Having established the claim we have completed the proof of the proposition.
\end{proof}
\begin{corollary}
    The problems \textsc{Cover Closure} and \textsc{Reach Closure} can be reduced to the special case in which each alphabet symbol
    $\sigma\in \Sigma$ has weight $\omega(\sigma)\in \{-1,0,1\}$.
\end{corollary}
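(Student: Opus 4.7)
The plan is a two-stage reduction. First, I encode the general-weight problem as the problem of computing the closure of a 1-VASS language whose transitions carry weights in $\{-1,0,1\}$. Second, I invoke Proposition~\ref{pro:VASStoMON} to return to the stateless setting, which then automatically inherits the restricted weights. The argument is parallel for \textsc{Cover Closure} and \textsc{Reach Closure}.

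Given $\varphi:\Sigma^*\to M_d(\mathbb Q)$ and $\omega:\Sigma^*\to\mathbb Z$ with arbitrary weights, I construct a 1-VASS $\mathcal V$ over a fresh alphabet $\Sigma'$ as follows. The automaton has a hub state $q_0$, taken as both initial and accepting, together with auxiliary intermediate states. For each $\sigma\in\Sigma$ with $\omega(\sigma)=k\neq 0$, $\mathcal V$ contains a gadget that is a simple chain of $|k|$ transitions from $q_0$ through $|k|-1$ fresh states back to $q_0$, each of weight $\sign(k)\in\{-1,+1\}$ and labeled by a distinct fresh symbol. For $\omega(\sigma)=0$ the gadget is a single weight-$0$ self-loop at $q_0$. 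I also define a morphism $\psi:\Sigma'^*\to M_d(\mathbb Q)$ that assigns $\varphi(\sigma)$ to the \emph{last} transition of the $\sigma$-gadget and the identity matrix to all other transitions of that gadget; a full traversal of the $\sigma$-gadget then contributes the matrix $\varphi(\sigma)$, while a partial traversal contributes only the identity.

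I claim that $\overline{\psi(L_C(\mathcal V))}=\overline{\varphi(L_C(\omega))}$, and analogously for reachability. For reachability this is immediate, since runs of $\mathcal V$ that return to $q_0$ with counter zero decompose uniquely into complete gadget traversals, and the weight-nonnegativity condition along $\mathcal V$-runs translates exactly into the same condition on original prefixes. For coverability, where every state of $\mathcal V$ is declared accepting, runs may terminate in the middle of a gadget; but because $\varphi(\sigma)$ sits on the \emph{last} transition, any partial traversal contributes only the identity, so the $\psi$-image of such a truncated run equals $\varphi(w)$ for the original prefix $w$ up to the start of the gadget, which lies in $\varphi(L_C(\omega))$. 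Conversely, each $w\in L_C(\omega)$ is encoded in $L_C(\mathcal V)$ by the concatenation of its complete gadget traversals, with matching prefix-weight invariants and $\psi$-image equal to $\varphi(w)$.

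Finally, by Proposition~\ref{pro:VASStoMON}, computing the Zariski closure of a 1-VASS coverability (respectively reachability) language reduces to \textsc{Cover Closure} (respectively \textsc{Reach Closure}). The reduction there does not alter transition weights, so the resulting stateless instance carries weights in $\{-1,0,1\}$. The main point requiring attention is the uniform ``$\varphi(\sigma)$ on the last transition'' convention, which ensures that partial gadget traversals in the coverability case do not introduce matrices outside $\overline{\varphi(L_C(\omega))}$; beyond this bookkeeping, no new substantive ingredient is needed.
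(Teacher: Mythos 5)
Your proof is correct and takes essentially the same route as the paper: replace each symbol of weight $k$ by a chain of $|k|$ unit-weight transitions in a 1-VASS (the paper uses $\varepsilon$-labelled intermediate transitions where you use fresh symbols mapped to the identity matrix), and then apply Proposition~\ref{pro:VASStoMON}, whose reduction leaves transition weights unchanged. Your extra care about placing $\varphi(\sigma)$ on the last transition of each gadget (needed because you declare all states accepting in the coverability case) is harmless bookkeeping that the paper sidesteps by retaining only the original accepting states.
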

\begin{proof}
Every 1-VASS coverability and reachability language can be realized by 1-VASS in which 
the weight of each transition lies in $\{-1,0,1\}$ (general transitions can be handled by inserting intermediate states and $\varepsilon$-transitions).
Applying the reduction in the proof of Proposition~\ref{pro:VASStoMON} to such a 1-VASS leads to instances of 
 \textsc{Cover Closure} and \textsc{Reach Closure} in which
the weight map $\omega:\Sigma^* \rightarrow \mathbb Z$ satisfies 
$\omega(\sigma)\in\{-1,0,+1\}$ for all $\sigma \in \Sigma$.
\end{proof}

\section{Background on Exterior Algebra}
\label{app:ext}

Let $V=\mathbb Q^d$ be the vector space over the field $\mathbb Q$, its \emph{exterior algebra}  $\ExtAlg{V}$ is a graded $\mathbb Q$-vector space equipped with an associative, bilinear, and antisymmetric wedge product~$\wedge:\ExtAlg{V}\times\ExtAlg{V}\to  \ExtAlg{V}$. It decomposes as 
\[
\ExtAlg{V}=\bigoplus_{r=0}^n\ExtAlg[r]{V},
\]
 where $\ExtAlg[r]{V}$ is the $r^{\text{th}}$ exterior power, spanned by wedge product $v_1\wedge \cdots \wedge v_r$ with $v_i\in V$.
Let $e_1,\ldots,e_d$ denotes the standard basis of $\mathbb Q^d$, then a basis of $\ExtAlg[r]{V}$ is given by $e_{i_1}\wedge\cdots\wedge e_{i_r}$ with $1\leq i_1<\cdots<i_r\leq d$, and $\dim\ExtAlg[r]{V}=\binom{d}{r}$ (with $\binom{d}{r}=0$ for $r>d$). Moreover, a wedge product $v_1\wedge \cdots\wedge v_r$ is nonzero if and only if the vectors $v_1,\ldots,v_r$ are linearly independent.

The Grassmanian $\mathrm{Gr}(r,d)$ is the set of all $r$-dimensional subspaces of $\mathbb Q^d$.There is an injective map 
\[
\iota:\mathrm{Gr}(r,d)\to \mathbb P(\ExtAlg[r]{V})
\]
defined by mapping a subspace $W\in \mathrm{Gr}(r,d)$ with basis $v_1,\dots,v_r$ to $\iota(W)=[v_1\wedge \cdots \wedge v_r]$. The map is well-defined since if $v_1\wedge \cdots \wedge v_r=\alpha u_1\wedge\cdots\wedge u_r$ for some scalar $\alpha\in \mathbb Q\setminus \{0\}$, so $\iota(W)$ is defined up to scalar multiplication.

For subspaces $W_1,W_2\subseteq \mathbb Q^d$, we have $W_1\cap W_2=\{0\}$ if and only if $\iota(W_1)\wedge \iota(W_2)\neq 0$.

\section{Missing Proofs of~\Cref{sec:factorisation}}
\label{app-factorisation}

\clmrankproduct*
 \begin{proofclaim}
For an arbitrary \(i \leq k\), the product \(M_1 \cdots M_m\) can be divided as 
  $$M_1 \cdots M_m = (M_1 \cdots M_{i-1})(M_i \cdots M_k)(M_{k+1} \cdots M_m).$$
Since $\text{rank}(M_1 \cdots M_m) = r \leq \text{rank}(M_i \cdots M_k) \leq \text{rank}(M_i) = r$, all segments of the product must have rank \(r\).

To prove the second part, observe the following identity for any pair of matrices $A,B \in M_{d\times d}(\mathbb Q)$:
\[ \rank(AB) = \rank(B) - \dim(\ker(A) \cap \im(B)).\]
Applying this to $M_k M_{k+1}$, and using that each $M_i$ has rank $r$, we have:
 \[r = \rank(M_k M_{k+1}) = \rank(M_{k+1}) - \dim(\ker(M_k) \cap \im(M_{k+1})).
\]
 Hence, $\dim(\im(M_{k+1}) \cap \ker(M_{k})) = 0$, and therefore $\ker(M_k) \cap \im(M_{k+1}) = \{0\}$.  

\end{proofclaim}

\section{Missing Proofs of~\Cref{sec:background}}
\label{app:background}

\factclosureclosure*
\begin{proof}
Since $S \subseteq \overline{S}$, we have $f(S) \subseteq f(\overline{S})$; taking closures gives $\overline{f(S)} \subseteq \overline{f(\overline{S})}$.
On the other hand, by definition of continuity we have $f(\overline{S})\subseteq \overline{f(S)}$.  Taking closures gives the reverse inclusion.
\end{proof}

\factclosedsemigroupissubgroup*
\begin{proof}
    Let $S$ be a closed semigroup of $GL_d(\overline{\mathbb Q})$.
    It suffices to argue  that 
     $S$ is stable under matrix inversion.
     Given $M \in S$, clearly $MS \subseteq S$ holds as $S$ is closed under matrix multiplication; moreover, since $M$ is invertible~$MS$ is a closed set. 
     Observe that the sequence 
   \begin{equation}
   \label{eq-seq}
       S\supseteq MS\supseteq M^2S\supseteq\cdots 
   \end{equation}
   is a decreasing sequence of closed sets.
    By Noetherian property of ring of polynomial, \eqref{eq-seq} implies that there exists $k\in \mathbb N$ such that $M^k\,S=M^{k+1}S $. Since $M$ is invertible, the above shows that $S=MS$, implying that~$I_n\in MS$. This concludes the proof.
\end{proof}

\lemclosureisgroup*

\begin{proof}
Since $M$ is stable we have $\im(M) \cap \ker(M) = \{0\}$. Let $r = \rank(M)$, and 
let $V = \im(M)$ and $U = \ker(M)$, and choose a basis $B_V = \{e_1, \dots, e_r\}$ of $V$ and a basis $B_U = \{e_{r+1}, \dots, e_d\}$ of $U$. Let $Y \in GL_d(\mathbb Q)$ be the matrix whose columns are the vectors $e_1, \dots, e_d$. Then
\[
Y^{-1} M Y = \begin{pmatrix} A & 0 \\ 0 & 0 \end{pmatrix}
\]
for some $A \in GL_r(\mathbb Q)$. 
The powers of $M$ satisfy
\[
Y^{-1} M^k Y = \begin{pmatrix} A^k & 0 \\ 0 & 0 \end{pmatrix}.
\] 
Let $\langle  A \rangle \subset GL_r(\mathbb Q)$ be the  semigroup generated by $A$. By Fact~\ref{fact:closed_semigroup_is_subgroup}, $\overline{\langle A \rangle } \cap GL_r(\overline{\mathbb Q}) $ is a subgroup of  $GL_r(\overline{\mathbb Q})$. Since the conjugation map $X \mapsto Y X Y^{-1}$ is a homeomorphism, it commutes with taking closures. Hence,
\[
\overline{\langle M \rangle} = Y \left\{ \begin{pmatrix} B & 0 \\ 0 & 0 \end{pmatrix} \;\middle|\; B \in \overline{\langle A\rangle }\right\} Y^{-1}.
\] 
Intersecting with $\{X \in M_d(\overline{\mathbb Q}) \mid \rank(X) = r\}$ restricts $B$ to be invertible:
\[
G \coloneqq  \overline{\langle M \rangle} \cap \{X \in M_d(\overline{\mathbb Q}) \mid \rank(X) = r\}
= Y \left\{ \begin{pmatrix} B & 0 \\ 0 & 0 \end{pmatrix} \;\middle|\; B \in \overline{\langle A \rangle} \cap GL_r(\overline{\mathbb Q}) \right\} Y^{-1}.
\] 
 Thus, since $\overline{\langle A \rangle} \cap GL_r(\overline{\mathbb Q})$ is a group, we conclude that $G$ is also a group.

\end{proof}



\section{Missing Proofs of~\Cref{sec:indexG}}
\label{app:indexG}

\clmtarnsred*
\begin{proofclaim}
By~\Cref{fact:outerprodcut}, $ \bm y\otimes \bm 1$ is a rank-one matrix with repeated column~$\bm{y}=(yx_1,\dots, yx_n,y)^\top$.
 Observe that ${(M(\bm{y}\otimes \bm 1))}_{i:}=(M_{i:}\,\bm{y}) \bm 1^\top$. 
We now consider each case:
\begin{itemize}
    \item If $M_{i:}=\bm{0}^\top$, then $\bm 0^\top\bm y=0$. So ${(M(\bm{y}\otimes \bm 1))}_{i:}=\bm 0$.
    \item If $M_{i:}=(\bm {e}_i\pm \bm {e}_{n+1})^\top$, then $ M_{i:}\,\bm y=yx_i\pm y$. Hence, ${(M(\bm{y}\otimes \bm 1))}_{i:}=(yx_i\pm y)\bm 1^\top$.
    \item If $M_{i:}=\bm e_i^\top$, then $ M_{i:}\,\bm y=yx_i$, giving ${(M(\bm{y}\otimes \bm 1))}_{i:}=(yx_i)\bm 1^\top$.
\end{itemize}
\end{proofclaim}

\color{black}
\clindexgrammar*
   \begin{proofclaim}
Recall that given a sequence of transitions $\delta_1,\dots,\delta_k$ the index grammar produces words of the following form
\begin{equation*}
w= w_k \,  t_{\delta_k} \, w_k \, t_{\delta_{k-1}} \, w_{k-1} \, t_{\delta_{k-2}} \, w_{k-2}  \cdots w_3 \,  t_{\delta_2} \, w_2 \, t_{\delta_1} w_1
\end{equation*}
where $w_i=\delta_i w_{i-1}=\delta_i\delta_{i-1}\cdots \delta_0$ for $i \in \{k,\ldots,2\}$; see~\eqref{eq-derv2} and~\eqref{eq-w-unde}. 
For such sequence of transitions~$\delta_1,\dots,\delta_k$ and the associated word
$w$, we prove the following statement by induction on $k$:
\begin{itemize}
    \item if the sequence $(q_0,\bm 0)\xrightarrow{\delta_1} \dots \xrightarrow{\delta_k}(q_k,\bm x_k) $ does not correspond to a run in $\mathcal V$, then $\varphi(w)=\bm 0$.
    This can happen due   to the state mismatch  (that is, for some transition $\delta_i$, the source state  is not equal to the target state of the preceding transition $\delta_{i-1}$)  or a counter value~$\bm{x}_i$ not being nonnegative. 
    \item if the sequence corresponds to a run reaching configuration $(q_k,\bm x_k)$, then the matrix $\varphi(w)$ has exactly one non-zero block with identical columns $y(\bm x_k,1)^\top$, for some $y\in\mathbb Q$. 
\end{itemize}

Base Case ($k=0$): The run consists of only the initial configuration $(q_0,\bm 0)$. The grammar produces the word $w=\delta_0$. By definition, $\varphi(\delta_0)$ is a matrix with a single non-zero block $B= (\bm 0,1)\otimes \bm 1$ at index $(q_0,q_0)$. This matches the hypothesis for a run where $q_k=q_0, \bm x_k=\bm 0$. 

 Inductive step ($k>0$):
Assume the hypothesis holds for any sequence of length $k-1$. We now analyze the word $w$ for the sequence $\delta_1,\dots, \delta_k$, where $\delta_k=(p_k,\bm v,q_k)$. 
We consider the following distinguished cases for the~$k$-th step. Write $w'$ to be the word assigned to $\delta_1, \dots, \delta_{k-1}$, \emph{i.e.}
 \[w'= w_{k-1} \, t_{\delta_{k-1}} \, w_{k-1} \, t_{\delta_{k-2}} \, w_{k-2}  \cdots w_3 \,  t_{\delta_2} \, w_2 \, t_{\delta_1} w_1.\]

\begin{itemize}
    \item Case 1: A proper prefix of the run is invalid.
    If the run $\delta_1,\dots,\delta_{k-1}$ is invalid, then by inductive hypothesis~$\varphi(w')=\bm 0$. 
     Since $w=w_kt_{\delta_k}\delta_k w'$, then $\varphi(w)=\varphi(w_k t_{\delta_k}\delta_k)\varphi(w')=\bm 0$, as required.
    \item Case 2: The run is invalid at step $k$ due to state mismatch.
    Suppose $\delta_1,\dots,\delta_{k-1}$ is a run to $(q_{k-1},\bm x_{k-1})$, but the transition $\delta_k=(p_k,\bm v_k,q_k)$ cannot be applied because its source state $p_k$ does not match the current state $q_{k-1}$.

    By the inductive hypothesis, $\varphi(w')$ has its single non-zero block at index $(q_{k-1},q_0)$. The construction of~$w$ by the grammar ensures a matrix multiplication of the form $X\varphi(\delta_k)\varphi(w_{k-1})Y$. In this product, the column-state index $q_{k-1}$ of $\varphi(w_{k-1})$ must align with the row-state index $p_k$ of $\varphi(\delta_k)$. Since $p_k\neq q_{k-1}$, this block matrix multiplication results in the $\bm 0$ matrix. Thus, $\varphi(w)=\bm 0$ as required.

   \item Case 3: The run is invalid at step $k$ due to $\bm{x}_k$ not being nonneagtive. 
   Suppose the run is valid up to $\delta_1\dots \delta_{k-1}$ and leads to $(q_{k-1},\bm x)$, but applying $\delta_k$ it becomes invalid, because $\bm x+\bm v_k$ has a negative component in the $\ell$-th coordinate. By induction hypothesis, $\varphi(w')$ has exactly one non-zero block  $y((\bm{x},1))\otimes\bm{1}$, positioned at index $(q_{k-1},q_0)$.
   One can observe that there exists $c_1\in \mathbb Q$, such that $\varphi (\delta_1\delta_0 t_{\delta_1}\delta_1\delta_0)=c_1\varphi(\delta_1\delta_0)$. Therefore 
\begin{align*}
\varphi(w_2t_{\delta_2}w_2t_{\delta_1}w_1)&=\varphi(w_2t_{\delta_2}\delta_2\delta_1\delta_0 t_{\delta_1}\delta_1\delta_0)\\
&=\varphi (w_2t_{\delta_2})\varphi(\delta_2)c_1\varphi(\delta_1\delta_0)\\
&=c_1\varphi (w_2)\varphi(t_{\delta_2})\varphi(w_2).
   \end{align*}
 Inductively, there exists $c\in \mathbb Q$ such that
 \[\varphi(w)=c\varphi(w_k)\varphi(t_{\delta_k})\varphi(w_k).
 \]
 Recall that for the letter~$t_{\delta_k}$,  we define  $\varphi(t_{\delta_k})$ to be a matrix with the nonzero block $B$ at index $(q,q_0)$, where $B:=\bm 1\otimes (\bm e_\ell+\bm e_{n+1})$. By~\Cref{clm-updatedelta},
the product $\varphi (w)=c\varphi(w_k)\varphi(t_{\delta_k})\varphi(w_k)$
results in a matrix with exactly one non-zero block whose columns are all~$c(n+1)y^2(x_\ell+1)(\bm{x},1)$. 
If $x_\ell=-1$, this product is zero. 
\item Case 4: The run $\delta_1, ..., \delta_k$ is valid.
A similar proof to Case 3 establishes the induction statement. 
\end{itemize}
The above induction gives that if $\mathrm{Reach(\mathcal{V})}$ is finite then 
$\overline{\varphi(L(G))}$  is contained in a finite union of lines and thereby has dimension at most
one. 

For the reverse direction, assume that  $\mathrm{Reach(\mathcal{V})}$ is infinite, we show that  $\overline{\varphi(L(G))}$ has dimension strictly more than~$1$. 
The proof follows by a similar argument presented in~\cite[Proof of Proposition 17]{HrushovskiOPW23}. The proof relies on~\cite[ Corollary 4]{cox1997ideals} stating that 
given an  affine variety $V \subseteq \overline{Q^n}$,  the dimension of $V$ is equal to
the largest integer~$r$ for which there exist r variables $x_{i_1} , \ldots, x_{i_r}$ such that the vanishing ideal $I(V) \cap
\mathbb{Q}[x_{i_1} , \ldots, x_{i_r}] = {0}$, meaning that~$I(V)$ contain no nonzero polynomials in only
these variables. 

The argument  follows by the fact that if $\mathrm{Reach(\mathcal{V})}$ is infinite, there exists some transition $\delta$ from state $p$ to~$q$, for some $p,q\in Q$, and some coordinate $i\in \{1,\ldots,n\}$ such that  there are infinitely many runs in $\mathcal{V}$ taking $\delta$ as the last transition. 
Hence, there are infinitely many matrices  $M\in \varphi(L(G))$ with the nonzero block at index~$(q,q_0)$ with growing entries in the $i$-th row.
Recall that all entries in each row are equal and are always a multiple of entries   in the $n+1$-th row ((or simply unit). Assume that  entries in the $i$-th row is~$\ell$, then the unit is  larger than $2^\ell$, as after each transition in~$\mathcal{V}$ a matrix $\varphi(t_{\delta'})$ was produced to check if the prefix of the run is valid, and this matrix at least doubles the value of 
the unit. This implies that there is no polynomial relations between the entries of $i$-th and $(n+1)$-th rows in the $(q,q_0)$-th block   without mentioning other variables  in the closure. 

\end{proofclaim}

\end{document}